\setlist[itemize]{itemsep=0pt}
\setlist[enumerate]{itemsep=0pt}
\Crefname{lemma}{Lemma}{Lemmas}
\Crefname{fact}{Fact}{Facts}
\Crefname{theorem}{Theorem}{Theorems}
\Crefname{corollary}{Corollary}{Corollaries}
\Crefname{claim}{Claim}{Claims}
\Crefname{example}{Example}{Examples}
\Crefname{problem}{Problem}{Problems}
\Crefname{definition}{Definition}{Definitions}
\Crefname{notation}{Notation}{Notations}
\Crefname{assumption}{Assumption}{Assumptions}
\Crefname{subsection}{Subsection}{Subsections}
\Crefname{section}{Section}{Sections}
\newtheorem{theorem}{Theorem}[section]
\newtheorem*{theorem*}{Theorem}
\newtheorem{proposition}[theorem]{Proposition}
\newtheorem*{proposition*}{Proposition}
\newtheorem*{property*}{Property}
\newtheorem{lemma}[theorem]{Lemma}
\newtheorem*{lemma*}{Lemma}
\newtheorem{corollary}[theorem]{Corollary}
\newtheorem*{corollary*}{Corollary}
\newtheorem*{conjecture*}{Conjecture}
\newtheorem{fact}[theorem]{Fact}
\newtheorem*{fact*}{Fact}
\newtheorem*{exercise*}{Exercise}
\newtheorem*{hypothesis*}{Hypothesis}
\theoremstyle{definition}
\newtheorem{definition}[theorem]{Definition}
\newtheorem{exercise-easy}[theorem]{Exercise}
\newtheorem{exercise-med}[theorem]{Exercise}
\newtheorem{exercise-hard}[theorem]{Exercise$^\star$}
\newtheorem{claim}[theorem]{Claim}
\newtheorem*{claim*}{Claim}
\newtheorem{remark}[theorem]{Remark}
\newtheorem*{remark*}{Remark}
\newtheorem*{observation*}{Observation}
\DeclareSymbolFont{extraup}{U}{zavm}{m}{n}
\DeclareMathSymbol{\varheart}{\mathalpha}{extraup}{86}
\DeclareMathSymbol{\vardiamond}{\mathalpha}{extraup}{87}
\DeclareMathOperator*{\E}{\mathbb E}
\newcommand{\eps}{\varepsilon}
\newcommand{\abs}[1]{\left| #1 \right|}
\newcommand{\vabs}[1]{\left\| #1 \right\|}
\newcommand{\abra}[1]{\left\langle #1 \right\rangle}
\newcommand{\pbra}[1]{\left( #1 \right)}
\newcommand{\sbra}[1]{\left[ #1 \right]}
\newcommand{\cbra}[1]{\left\{ #1 \right\}}
\newcommand{\floorbra}[1]{\left\lfloor #1 \right\rfloor}
\newcommand{\ceilbra}[1]{\left\lceil #1 \right\rceil}
\newcommand{\hsnorm}[1]{\vabs{ #1 }_{\mathrm{HS}}}
\newcommand{\ket}[1]{| #1 \rangle}
\newcommand{\bra}[1]{\langle #1 |}
\newcommand{\braket}[2]{\left\langle #1 \middle| #2\right\rangle}
\newcommand{\ketbra}[2]{\left| #1 \right\rangle\left\langle #2\right|}
\newcommand{\bin}{\{0,1\}}
\newcommand{\binpm}{\{\pm1\}}
\newcommand{\poly}{\mathsf{poly}}
\newcommand{\tr}{\mathrm{Tr}}
\newcommand{\diag}{\mathsf{diag}}
\newcommand{\Ham}{\mathsf{Ham}}
\newcommand{\Acal}{\mathcal{A}}
\newcommand{\Bcal}{\mathcal{B}}
\newcommand{\Ccal}{\mathcal{C}}
\newcommand{\Ecal}{\mathcal{E}}
\newcommand{\Gcal}{\mathcal{G}}
\newcommand{\Hcal}{\mathcal{H}}
\newcommand{\Scal}{\mathcal{S}}
\newcommand{\Tcal}{\mathcal{T}}
\newcommand{\Ucal}{\mathcal{U}}
\newcommand{\Tgate}{{T}}
\newcommand{\Hgate}{{H}}
\newcommand{\Xgate}{{X}}
\newcommand{\Ygate}{{Y}}
\newcommand{\Zgate}{{Z}}
\newcommand{\Stab}{\mathsf{Stab}}
\renewcommand{\tilde}{\widetilde}
\renewcommand{\bar}{\overline}
\title{Quantum State Preparation with Optimal T-Count}
\author{David Gosset}
\affiliation{Google Quantum AI}
\affiliation{Department of Combinatorics and Optimization and Institute for Quantum Computing, University of Waterloo}
\affiliation{Perimeter Institute for Theoretical Physics}
\author{Robin Kothari}
\affiliation{Google Quantum AI}
\author{Kewen Wu}
\affiliation{Computing and Mathematical Sciences Department, Caltech}
\date{}
\begin{document}

\maketitle

\begin{abstract}
How many $T$ gates are needed to approximate an arbitrary $n$-qubit quantum state to within error $\varepsilon$? Improving prior work of Low, Kliuchnikov, and Schaeffer, we show that the optimal asymptotic scaling is 
$$
\Theta\left(\sqrt{2^n\log(1/\varepsilon)}+\log(1/\varepsilon)\right)
$$
if we allow ancilla qubits. We also show that this is the optimal $T$-count for implementing an arbitrary diagonal $n$-qubit unitary to within error $\varepsilon$. We describe applications in which a tensor product of many single-qubit unitaries can be synthesized in parallel for the price of one.
\end{abstract}

\vspace{1ex}
\tableofcontents 

\section{Introduction}\label{sec:intro}

The Gottesman-Knill theorem \cite{gottesman1998heisenberg} reveals a remarkable corner of Hilbert space occupied by Clifford circuits and stabilizer states --- classically efficiently simulable but nevertheless equipped with genuinely quantum features such as many-body entanglement and superposition. 
Only non-Clifford operations can move us out of this corner, a requirement for universal quantum computation.  
A standard choice is to compile quantum circuits using the Clifford+$\Tgate$ gate set, in which only the single-qubit $\Tgate=\diag(1,e^{i\pi/4})$ gate\footnote{We use $\diag(c_1,\ldots,c_m)$ to denote the diagonal matrix with $c_1,\ldots,c_m$ on the diagonal in order.} is non-Clifford.

In recent years there has been an interest in quantifying how much non-Cliffordness, aka \textit{magic}, is required to implement quantum states and operations (see e.g., \cite{veitch2014resource, bravyi2016trading, bravyi2019simulation, leone2022stabilizer}). This directly relates to the cost of fault-tolerant quantum computation based on 2D stabilizer error correcting codes such as the surface code, where the implementation of $\Tgate$ gates using magic state distillation is vastly more costly than the transversal implementation of Clifford gates~\cite{bravyi2005universal}.\footnote{A recent line of research \cite{litinski2019magic,gidney2024magic} shows that $\Tgate$ gates might not be as expensive as previously thought for certain architectures and fault tolerance schemes.} A second reason to study quantum magic is its relevance to classical simulation of quantum computers: extensions of the Gottesman-Knill theorem give rise to classical simulation algorithms with a cost that scales polynomially in all parameters except the number of non-Clifford gates \cite{bravyi2016trading, bravyi2016improved, bravyi2019simulation}. Another line of work has explored the role that magic plays in quantum many-body physics and phases of matter \cite{liu2022many,ellison2021symmetry}.

So how much magic is needed to make a quantum state, in the worst case? In particular, let us consider the number of $\Tgate$ gates, or \textit{$\Tgate$-count}, required to $\eps$-approximate an arbitrary $n$-qubit state to within $\eps$-error in the $\ell_2$ norm. 
Here we consider unitary state preparation circuits composed of Clifford and T gates.

For $n=1$,  a remarkable series of works incorporated techniques from algebraic number theory to give a fairly complete answer to this question \cite{kliuchnikov2013fast,selinger2015efficient,ross2016optimal}. The algorithm of Ross and Selinger \cite{ross2016optimal} computes a sequence of\footnote{In this paper, asymptotic notations $O(\cdot),\Omega(\cdot)$ only hide absolute constants that do not depend on any parameter, and $\log(x)$ is the base 2 logarithm of $x$.} $O(\log(1/\eps))$ single-qubit $\Hgate$ and $\Tgate$ gates that  $\eps$-approximates a given single-qubit unitary $U$ (or prepares a $1$-qubit state $U|0\rangle$), matching information-theoretic lower bounds \cite{harrow2002efficient, beverland2020lower}. These methods leverage beautiful structural properties of the group generated by single-qubit Clifford and $\Tgate$ gates established in \cite{kliuchnikov2013fast} to outperform the well-known approximate synthesis technique based on the Solovay-Kitaev theorem \cite{nielsen2010quantum}.

These methods can also be used to prepare an $n$-qubit state; one can first express it as $U|0^n\rangle$ where $U$ is a sequence of $O(2^n)$ CNOT and single-qubit gates (see e.g., \cite{barenco}), and then use the Ross-Selinger algorithm~\cite{ross2016optimal} to approximate each of the single-qubit gates as a sequence of $\Hgate$ and $\Tgate$ gates. In this way one can upper bound the number of $\Tgate$ gates for $n$-qubit state preparation as $O(2^n(n+\log(1/\eps)))$. Note that this is also an upper bound on the total number of gates used. Surprisingly, \cite[Section 3]{low2018trading} showed that the $\Tgate$-count of $n$-qubit state synthesis can be improved by almost a square-root factor to 
\begin{equation}
O\left(\sqrt{2^n n\log(n/\eps)}+\log^2(n/\eps)\right),
\label{eq:lkscost}
\end{equation}
if we allow the use of ancilla qubits that are prepared and returned to the all-zeros state.\footnote{It is possible to tighten their analysis to give better (but still sub-optimal) results. See \Cref{sec:improve_LKS} for details.} 

The synthesis algorithm of \cite{low2018trading} has two important ingredients. The first ingredient is a state preparation method proposed by Grover and Rudolph \cite{grover2002creating}, which we now describe. For simplicity suppose the target state $\ket\psi=\sum_{x\in\bin^n}\alpha_x\ket x$ has non-negative real amplitudes $\alpha_x\ge0$.
Then, starting with $\ket{0^n}$, we first apply a single-qubit rotation on the first qubit to ensure it has the correct marginal, i.e., $\ket0\to\sqrt{\sum_{x:x_1=0}\alpha_x^2}\ket0+\sqrt{\sum_{x:x_1=1}\alpha_x^2}\ket1$.
Then, controlled on the first qubit, we rotate the second qubit to have the correct marginal.  We repeat this process until the $n$-th qubit. This reduces state preparation to the task of synthesizing $n$ multiply-controlled single-qubit unitaries, or (using Euler angles) $3n$ \textit{diagonal} multiply-controlled single-qubit unitaries. The second ingredient is a method for approximating diagonal controlled single-qubit unitaries that starts with a $b=O(\log(n/\eps))$-bit approximation of the rotation angle and then implements a controlled rotation corresponding to each bit. The technique used to implement the latter controlled rotations can be viewed as a generalization of the fact that an $n$-qubit \textit{Boolean} diagonal unitary, with nonzero entries in $\pm 1$,  can be implemented with $\Tgate$-count $\Theta\pbra{\sqrt{2^n}}$ \cite{nechiporuk1962complexity,schnorr1989multiplicative,maslov2016optimal,low2018trading}.

If we look beyond the headline square-root dependence $\sqrt{2^n}$, we can identify contributions to the $\Tgate$-count from each of these two ingredients that we might hope to improve: an overhead from the Grover-Rudolph method that scales with the number of qubits $n$ (under the square root in the first term of \Cref{eq:lkscost}) as well as an overhead that scales as $\log^2(n/\eps)$ that arises from implementing $b$ controlled single-qubit rotations, each to within $\eps$-error.

In this work we show that indeed both of these overheads can be avoided. Our main result is the following improvement which gives an optimal upper bound on the $\Tgate$-count of an $n$-qubit state.

\begin{restatable}[Quantum state preparation with optimal $\Tgate$-count]{theorem}{stateT}\label{thm:state_T-count}
Any $n$-qubit state can be prepared up to error $\eps$ by a Clifford+$\Tgate$ circuit starting with the all-zeros state using 
\begin{equation}
\label{eq:optimalT}
O\pbra{\sqrt{2^n\log(1/\eps)}+\log(1/\eps)}    
\end{equation}
$\Tgate$ gates and ancillas.\footnote{This construction uses $O(2^n\log(1/\eps))$ Clifford gates.}
Furthermore, no Clifford+$\Tgate$ circuit (even with measurements and adaptivity) can use asymptotically fewer $\Tgate$ gates.
\end{restatable}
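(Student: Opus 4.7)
The plan is to prove the upper and lower bounds in turn, with the upper bound factoring through the optimal synthesis of arbitrary diagonal $n$-qubit unitaries, a problem the theorem identifies as equally hard.

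\textbf{Upper bound for diagonal unitaries.} First I would show that any diagonal unitary $D=\sum_{x\in\bin^n}e^{i\phi_x}|x\rangle\langle x|$ can be $\eps$-approximated with the claimed $\Tgate$-count. Rounding each phase to $b=O(\log(1/\eps))$ bits reduces the task to implementing $\tilde D$ with phases $e^{2\pi i f(x)/2^b}$ for some $f\colon\bin^n\to\bin^b$. I then invoke the standard compute--phase--uncompute template: (i) compute $|x\rangle|0^b\rangle\mapsto|x\rangle|f(x)\rangle$, (ii) apply the phase gradient $|y\rangle\mapsto e^{2\pi iy/2^b}|y\rangle$ on the ancilla using a precomputed phase-gradient resource state (built once with $O(b)$ $\Tgate$'s), and (iii) uncompute the ancilla. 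The technical crux is to realise step (i) with $O(\sqrt{2^n b})$ $\Tgate$'s for an \emph{arbitrary} $f$, generalising the optimal $O(\sqrt{2^n})$ Boolean-diagonal bound (the $b=1$ case). The construction would split $x=(x_1,x_2)\in\bin^k\times\bin^{n-k}$, enumerate the $2^{n-k}$ possible values of $x_2$ via Clifford multiplexers, and for each enumeration write out the appropriate $b$-bit string into the output register while amortising ancillas across the $b$ output bits; optimising $k$ to balance $2^k$ against $2^{n-k}b$ delivers the target bound.

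\textbf{From diagonals to state preparation.} To lift the diagonal bound to state preparation, I would use the Grover--Rudolph decomposition, which writes $|\psi\rangle=R_n\cdots R_1|0^n\rangle$ as a product of multiplexed single-qubit rotations, with $R_k$ controlled on the first $k-1$ qubits. Each $R_k$ is, up to conjugation by Hadamards on its target qubit, a diagonal unitary supported on $k$ qubits; the previous routine implements it to suitable error with $O(\sqrt{2^k b}+b)$ $\Tgate$'s. Summing geometrically, $\sum_{k=1}^n\sqrt{2^k b}=O(\sqrt{2^n b})$. Moreover the additive $+b$ term from the phase-gradient resource state need only be paid \emph{once}, since the same resource state is reusable across all $n$ diagonals---this is the ``parallel for the price of one'' principle from the abstract. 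With a careful per-step precision budget (handled by the triangle inequality and a small rescaling of $b$), the total is $O(\sqrt{2^n\log(1/\eps)}+\log(1/\eps))$.

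\textbf{Lower bound.} For the lower bound I would use a counting argument. The set of $n$-qubit pure states admits an $\eps$-packing in Fubini--Study distance of size $(c/\eps)^{\Omega(2^n)}$, so any synthesis scheme covering every such state must have at least this many $\eps$-distinguishable outputs. I would argue that Clifford+$\Tgate$ circuits with $t$ $\Tgate$'s---even allowing ancillas, mid-circuit measurements, and adaptivity---reach at most $2^{O(tn+t^2+t\log(1/\eps))}$ distinct states at this granularity: deferring measurements and commuting Cliffords past the $\Tgate$'s reduces the circuit to a choice of $t$ Pauli rotation axes (one per $\Tgate$), $O(t)$ adaptive outcome bits steering Clifford corrections, and a final Clifford, whose effect on the $n$-qubit output depends only on how these interact with the $n$ logical qubits. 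Matching the two counts yields $t=\Omega\pbra{\sqrt{2^n\log(1/\eps)}}$, and the additive $\Omega(\log(1/\eps))$ term is inherited from the single-qubit lower bound of \cite{beverland2020lower} by picking a target whose first qubit is Haar-random. The main obstacles will be proving the $O(\sqrt{2^n b})$ bound in step (i)---ensuring $b$ enters under the square root rather than as a prefactor---and making the lower-bound counting robust to adaptivity, where care is required to prevent an adversary from hiding state information in the mid-circuit outcome tree rather than in the explicit gate sequence.
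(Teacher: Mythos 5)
Your proposal has a genuine gap in the state-preparation step, and the diagonal-synthesis step also needs repair.

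\textbf{The Grover--Rudolph reduction cannot reach the claimed bound.} Decomposing $|\psi\rangle=R_n\cdots R_1|0^n\rangle$ into $n$ multiplexed single-qubit rotations, and implementing each $R_k$ with the diagonal-unitary routine, gives a per-diagonal cost of $O\bigl(\sqrt{2^k\log(1/\eps)}+\log(1/\eps)\bigr)$. You correctly observe that $\sum_k\sqrt{2^k\log(1/\eps)}=O(\sqrt{2^n\log(1/\eps)})$, but the additive $+\log(1/\eps)$ term is \emph{not} amortizable across the $n$ diagonals. The $+\log(1/\eps)$ is not only the one-time cost of a reusable resource state: it is dominated by the per-diagonal phase-application step (your addition-into-a-catalyst, or equivalently the $O(K)$ controlled H/T gates in the paper's construction), which must be executed $n$ times because each $R_k$ computes a different Boolean function. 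The total is therefore $O\bigl(\sqrt{2^n\log(1/\eps)}+n\log(1/\eps)\bigr)$, which is exactly what the paper obtains from Grover--Rudolph in \Cref{cor:improve_lks} and explicitly flags as suboptimal. No bookkeeping of the catalyst removes the factor of $n$. The paper's route to the optimal bound is structurally different: it reduces to $O(1)$ Boolean phase oracles (not $n$), via the Khintchine inequality, which gives a coarse approximation $B_2\Hgate^{\otimes n}B_1\Hgate^{\otimes n}|0^n\rangle$ with overlap $\geq 1/\sqrt{2}$ (\Cref{lem:apx_state_syn_hd4}), then drives the error down geometrically by approximating the residual (\Cref{lem:apx_coeff_simple}), and finally assembles the result with LCU plus two rounds of exact amplitude amplification (\Cref{lem:state_circuit}). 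This is why the $+\log(1/\eps)$ term is paid only $O(1)$ times.

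\textbf{The phase-gradient resource state is not $O(\log(1/\eps))$ T gates.} Your diagonal-unitary construction also rests on an unjustified claim: that the $b$-qubit phase gradient state (with $b=\Theta(\log(1/\eps))$) can be built with $O(b)$ T gates. Roughly $b-O(1)$ of its tensor factors require nontrivial $Z$-rotations $R_Z(2\pi/2^k)$ with $4\lesssim k\lesssim\log(1/\eps)$, and Ross--Selinger at per-rotation accuracy $\eps/b$ costs $O(\log(b/\eps))$ each, giving $O(b\log(b/\eps))=O(\log^2(1/\eps))$. The batched-synthesis bound (\Cref{thm:batched_synthesis}) only covers $m=O(\log\log(1/\eps))$ distinct single-qubit unitaries at cost $O(\log(1/\eps))$, which is exponentially short of $m=b$. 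The paper sidesteps the phase gradient entirely by storing, for each control string, a $2K$-bit \emph{description of the H/T gate sequence} produced by Ross--Selinger rather than a $b$-bit approximation of the phase; the ``phase application'' step then consists of $2K$ controlled-$\Hgate$/controlled-$\Tgate$ gates, each exact in Clifford+$\Tgate$ with $O(1)$ T gates, for total cost $O(K)=O(\log(1/\eps))$ with no approximation incurred. Your $O(\sqrt{2^n b})$ bound for computing $f$ into an ancilla is fine---it is essentially \Cref{lem:boolean_oracle}---but without fixing the phase-application step the diagonal theorem is only $O(\sqrt{2^n\log(1/\eps)}+\log^2(1/\eps))$.

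\textbf{Lower bound.} The counting idea is the right one, but the ``$2^{O(tn+t^2+t\log(1/\eps))}$ reachable states'' claim is asserted without a mechanism to control the ancilla register, which may be arbitrarily large. The rigorous version is the canonical form for Clifford circuits with Pauli postselections (the paper's \Cref{lem:canonical_form_state}, extending \cite{beverland2020lower}), which shows that ancillas and postselections can be squeezed onto $n+t$ qubits, yielding a count of $2^{O(n^2+t^2)}$. Without some version of that lemma, the ``commute Cliffords past the T's and defer measurements'' sketch does not bound the number of distinguishable outputs, because the intermediate Cliffords and Pauli axes live on an unbounded number of wires.
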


More precisely, for any $n$-qubit state $\ket\psi$, we show there is a Clifford+$\Tgate$ circuit $U$ such that $U\ket{0^n}\ket{0^a}=\ket{\tilde\psi}\ket{0^a}$, where $\| \ket\psi - \ket{\tilde\psi}\| \leq \eps$. The $\Tgate$-count of $U$ and $a$, the number of ancillas used, are upper bounded by \Cref{eq:optimalT}.

The lower bound in \Cref{thm:state_T-count}, which we formally establish in \Cref{thm:state_adaptive_lb}, holds in an even stronger model that we call \emph{adaptive} Clifford+$\Tgate$ circuits (defined formally in \Cref{sec:lower_bounds}). In this model, the algorithm may apply mid-circuit measurements and any future gates may depend on the outcomes of prior measurements. Such a circuit naturally outputs a mixed state, and even if we only require this state to be $\eps$-close to the target state in trace distance, we show that the lower bound on $\Tgate$-count in \Cref{eq:optimalT} still holds. 
The lower bound is established combining the $\Omega(\log(1/\eps))$ lower bound from \cite{beverland2020lower}, which holds for adaptive Clifford+$\Tgate$ circuits, and a slight generalization of the $\Omega\pbra{\sqrt{2^n\log(1/\eps)}}$ lower bound in \cite{low2018trading}, which only holds for unitary Clifford+$\Tgate$ circuits.\footnote{We thank Luke Schaeffer for confirming that their $\Omega\pbra{\sqrt{n2^n\log(1/\eps)}}$ lower bound at the end of \cite[Section 5]{low2018trading} should actually be $\Omega\pbra{\sqrt{2^n\log(1/\eps)}}$. }
Thus  \Cref{thm:state_T-count} represents the best of both worlds, with the upper bound being established in the weak model, but the lower bound being established in the strong model.

The circuit in \Cref{thm:state_T-count} is obtained by refining each of the two ingredients in the LKS construction to avoid the overheads described above. We now describe our refinements. 

For the first ingredient, we use a variant of recent methods from \cite{irani2022quantum, rosenthal2024efficient} which reduce quantum state synthesis to the synthesis of $O(1)$ diagonal unitaries --- far fewer than are needed in the Grover-Rudolph method. We start with a coarse approximation: we use just two \textit{Boolean} diagonal unitaries to construct a state $\ket\phi$ that has a constant overlap with the target state $\ket\psi$ (i.e., $\eps$ is a constant). Specifically, we show that there exist two diagonal unitaries $B_1,B_2$ with diagonal entries in $\pm 1$, such that $\ket\phi:=B_2\Hgate^{\otimes n}B_1\Hgate^{\otimes n}\ket{0^n}$ has a constant overlap with $\ket\psi$. Starting from this coarse approximation, we use an idea from \cite{rosenthal2024efficient} to improve the accuracy to the desired value of $\eps$. Since $\ket\phi$ already has a constant overlap with $\ket\psi$, intuitively their difference $\ket\psi-\ket\phi$ should have a small length. Then our goal becomes to construct $\ket\psi-\ket\phi$, for which we again have a coarse approximation. Iterating this, we obtain $\ket\psi$ in the limit, where the error decays exponentially: $\ket\psi$ is $\eps$-close to
\begin{equation}
\ket{\psi'}:=\beta\cdot\sum_{k=0}^{O(\log(1/\eps))}\gamma^k\ket{\phi_k}
\label{eq:overview_3}
\end{equation}
for some constants $\beta$ and $\gamma$, where each $\ket{\phi_k}=B_2^{(k)}\Hgate^{\otimes n}B_1^{(k)}\Hgate^{\otimes n}\ket{0^n}$ is a coarse approximation of the previous step. To prepare this state, we use the LCU (linear combination of unitaries) method along with amplitude amplification, see \Cref{sec:flag_and_amplify} for details.

For the second ingredient, we give a new method to synthesize diagonal unitaries with optimal $\Tgate$-count. 

\begin{restatable}[Diagonal unitary synthesis with optimal $\Tgate$-count]{theorem}{diagonalunitary}\label{thm:diagonal_unitary_T-count}
Any diagonal unitary on $n$ qubits can be implemented up to error $\eps$ by a Clifford+$\Tgate$ circuit using 
\begin{equation}
O\pbra{\sqrt{2^n\log(1/\eps)}+\log(1/\eps)}     
\end{equation}
$\Tgate$ gates and ancillas.\footnote{This construction uses $O(2^n\log(1/\eps))$ Clifford gates.}
Furthermore, no Clifford+$\Tgate$ circuit (even with measurements and adaptivity) can use asymptotically fewer $\Tgate$ gates.
\end{restatable}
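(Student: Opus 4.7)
The plan is to reduce diagonal unitary synthesis to (a) a multi-output reversible classical computation and (b) a structured phase operation on a small ancilla, and then to extract the matching lower bound from \Cref{thm:state_T-count} via a Hadamard trick.

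First, round each $\theta_x$ to $b = O(\log(1/\eps))$ bits so that $D$ is $\eps/2$-close in operator norm to $\tilde D : |x\rangle \mapsto e^{2\pi i k(x)/2^b}|x\rangle$ for a classical function $k : \{0,1\}^n \to \{0,\dots,2^b-1\}$. Then implement $\tilde D$ by the standard compute-phase-uncompute template: (i) reversibly compute $|x\rangle|0^b\rangle \mapsto |x\rangle|k(x)\rangle$; (ii) apply the fixed $b$-qubit diagonal $\Phi : |y\rangle \mapsto e^{2\pi iy/2^b}|y\rangle$ to the ancilla; (iii) uncompute $k(x)$.

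Steps (i) and (iii) would be handled by a multi-output generalization of the $O(\sqrt{2^n})$ $\Tgate$-count bound for Boolean diagonals due to \cite{nechiporuk1962complexity,maslov2016optimal,low2018trading}, which I would state as a key lemma: any $f : \{0,1\}^n \to \{0,1\}^b$ is reversibly computable by a Clifford+$\Tgate$ circuit using $O(\sqrt{b \cdot 2^n} + b)$ $\Tgate$ gates. The construction is a shared-decoder Nechiporuk split: partition $x = (x_1, x_2)$ with $|x_1| \approx n/2$, iterate over the $2^{n/2}$ settings of $x_1$, and for each one load the corresponding $b$-bit slice of $f(x_1,\cdot)$ through a common Toffoli-based decoder whose $\Tgate$-cost is amortized across the $b$ output bits. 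Applied twice, this contributes the term $O(\sqrt{2^n\log(1/\eps)})$ to the final bound.

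For step (ii), which contributes the additive $O(\log(1/\eps))$ term and is the main technical obstacle, I would use phase-gradient catalysis: prepare once the catalyst $|\chi\rangle = \tfrac{1}{\sqrt{2^b}}\sum_y e^{2\pi iy/2^b}|y\rangle$ and apply $\Phi$ by an in-place modular addition of the ancilla into $|\chi\rangle$, which is Clifford-only and phase-kicks back the factor $e^{2\pi i k(x)/2^b}$ while leaving $|\chi\rangle$ intact for uncomputation. Preparing $|\chi\rangle$ is the crux: its finest single-qubit factor has angle $2\pi/2^b \approx \eps$ and requires $\Omega(\log(1/\eps))$ $\Tgate$ gates by any single-qubit synthesis, while a naive factor-by-factor Ross--Selinger over all $b$ factors would cost $O(\log^2(1/\eps))$ total. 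I would bring this down to $O(\log(1/\eps))$ by exploiting the algebraic structure of $|\chi\rangle$---for instance, preparing the finest-precision single-qubit rotation state once using $O(\log(1/\eps))$ $\Tgate$ gates and deriving all $b$ coarser-angle factors from it catalytically through a sequence of controlled-addition gadgets that consume no additional $\Tgate$ gates.

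For the lower bound, composing any (possibly adaptive) $\eps$-implementation of $D$ with $\Hgate^{\otimes n}$ prepares from $|0^n\rangle$ a state $\eps$-close in trace distance to $\tfrac{1}{\sqrt{2^n}}\sum_x e^{i\theta_x}|x\rangle$. As $D$ ranges over $n$-qubit diagonals, these uniform-magnitude phase-only states sweep a $(2^n-1)$-dimensional submanifold of pure $n$-qubit states, rich enough that the packing argument underlying \Cref{thm:state_T-count} transfers verbatim and yields $\Omega(\sqrt{2^n\log(1/\eps)} + \log(1/\eps))$ $\Tgate$ gates even in the adaptive model.
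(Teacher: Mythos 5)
Your lower-bound plan is sound and is essentially the argument the paper uses: the paper also works through the Choi state of $D$ (which for a diagonal unitary is a phase state), shows there are $(1/\eps)^{\Theta(2^n)}$ diagonal unitaries with pairwise well-separated Choi states (\Cref{fct:diag_sphere_packing}), and then applies the counting argument from the state-preparation lower bound via the canonical form of \Cref{lem:canonical_form_state}. Your ``compose with $\Hgate^{\otimes n}$'' phrasing is fine since that is a Clifford preprocessing step that does not change the $\Tgate$-count; only be careful that the separation you need is between the resulting phase states \emph{up to global phase} (which shrinks the packing by a factor of $O(1/\eps)$ and is why the paper uses \Cref{fct:choi_to_unitary} and \Cref{fct:diag_sphere_packing} rather than a raw $\ell_2$ sphere-packing count).

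The upper bound, however, has a genuine gap exactly at the place you flag as the ``main technical obstacle.'' Your decomposition --- round $\theta_x$ to $b=O(\log(1/\eps))$ bits, reversibly compute $k(x)$ with the Nechiporuk-split oracle (paper's \Cref{lem:boolean_oracle}), apply the $b$-qubit phase gradient $\Phi\colon\ket{y}\mapsto e^{2\pi iy/2^b}\ket{y}$, uncompute --- is precisely the Low--Kliuchnikov--Schaeffer template, and the bottleneck it runs into is that a factor-by-factor Ross--Selinger synthesis of $\Phi=\bigotimes_{j=1}^{b}\diag(1,e^{2\pi i/2^j})$ costs $O(b\log(b/\eps))=O(\log^2(1/\eps))$ $\Tgate$ gates. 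Your proposed fix is phase-gradient catalysis, but there are two problems. First, the in-place modular addition used for the kickback $\ket{k}\ket{\chi}\mapsto e^{\pm 2\pi ik/2^b}\ket{k}\ket{\chi}$ is \emph{not} Clifford-only: it is a quantum-controlled-quantum adder and needs $\Theta(b)$ Toffoli (or CCZ) gates, hence $\Theta(b)$ $\Tgate$ gates. (That alone is within budget, but it contradicts what you wrote.) Second, and fatally, the claim that the catalyst $\ket{\chi}=\bigotimes_{j=1}^{b}\frac{1}{\sqrt2}\bigl(\ket0+e^{2\pi i/2^j}\ket1\bigr)$ can be prepared with $O(b)$ $\Tgate$ gates by ``preparing the finest rotation once and deriving the coarser factors catalytically without additional $\Tgate$ gates'' is asserted but not justified. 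Magic-state injection of $R_j=\diag(1,e^{2\pi i/2^j})$ from $R_j\ket{+}$ is probabilistic: with probability $1/2$ it leaves a correction $R_j^2=R_{j-1}$, which is again non-Clifford for $j\ge 4$, so the correction cascade is not free, and the ``controlled-addition gadgets'' you invoke for deriving coarser factors again need Toffolis. Without a concrete $O(b)$-$\Tgate$ preparation of $\ket\chi$ (or an $O(b)$-$\Tgate$ circuit for $\Phi$ directly), your bound degrades to $O(\sqrt{2^n\log(1/\eps)}+\log^2(1/\eps))$, i.e.\ the LKS bound, not the optimal one.

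The paper avoids the phase-gradient bottleneck altogether by a different key idea. Instead of storing the $b$-bit binary \emph{phase} $k(x)$ in the ancilla and then having to apply a phase gradient, it stores a $2K$-bit \emph{description of the Ross--Selinger Clifford+$\Tgate$ gate sequence} for the one-qubit unitary $\tilde U_j$ (obtained from $D$ via the determinant-one trick $D'=\diag(\alpha_1,\bar\alpha_1,\ldots)$), and then applies $2K=O(\log(1/\eps))$ controlled-$\Hgate$ and controlled-$\Tgate$ gates driven by those ancilla bits, each costing $O(1)$ $\Tgate$ gates. The Boolean lookup costs $O(\sqrt{K\cdot 2^n})$ $\Tgate$ gates, and the controlled single-qubit gates cost $O(K)$, giving exactly $O(\sqrt{2^n\log(1/\eps)}+\log(1/\eps))$ with no catalyst and no additive $\log^2(1/\eps)$ term. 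This is the piece your proposal is missing.
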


Here, the distance between two unitaries is the operator norm (aka spectral norm) of their difference, i.e., $U$ implements $V$ to error $\eps$ if $\|U-V\|=\eps$. 
The lower bound in \Cref{thm:diagonal_unitary_T-count}, which is formally proved in \Cref{thm:diagonal_adaptive_lb}, holds in the stronger model of adaptive Clifford+$\Tgate$ circuits, as before. The lower bound is proved by a reduction to the previous lower bound; see \Cref{sec:lower_bounds} for details.

To prove the upper bound from \Cref{thm:diagonal_unitary_T-count}, we first note that an $n$-qubit diagonal unitary $D$ can be regarded as a single-qubit diagonal unitary $G_y$ controlled on $n-1$ qubits $\ket y$. Now we use the fact that any single-qubit unitary $G_y$ can be approximated by a sequence of $O(\log(1/\eps))$ Hadamard and $\Tgate$ gates (using no ancillas) \cite{kliuchnikov2013fast,selinger2015efficient,kliuchnikov2015practical,ross2016optimal}. To synthesize $D$, we start by applying a Boolean oracle $B\colon\ket y\ket{z} \ket 0\to\ket y\ket{z}\ket{s_y}$, where $y\in \{0,1\}^{n-1}, z\in \{0,1\}$, and $s_y$ is a binary string of length $O(\log(1/\eps))$ that describes a sequence of $\Hgate$ and $\Tgate$ gates that $\eps$-approximates $G_y$. This Boolean oracle $B$ can be implemented with $\Tgate$-count $O\pbra{\sqrt{2^n\log(1/\eps)}}$ \cite{maslov2016optimal,low2018trading}. After this step, we apply a sequence of controlled-Hadamard and controlled-$\Tgate$ gates on the $n$-th qubit, each controlled on a bit of $\ket{s_y}$. The total $\Tgate$-cost of this last step is $O(\log(1/\eps))$, proportional to the length of $s_y$. Finally, we uncompute the Boolean oracle $B$, returning the third register to the all-zeros state.

We note that it is possible to directly improve on the state synthesis technique from \cite{low2018trading} using \Cref{thm:diagonal_unitary_T-count}; this gives a better $\Tgate$-count than \Cref{eq:lkscost}, but still worse than the optimal bound \Cref{eq:optimalT}. See \Cref{sec:improve_LKS} for details.

\subsection{Applications}

There are some immediate gains by using our bounds in place of those from \cite{low2018trading} in known applications. 
For instance, we obtain slight improvements to the state-of-the-art lower bound on the stabilizer rank of magic states, due to Mehraban and Tahmasbi \cite{mehraban2024quadratic}.\footnote{In particular, we can shave off two factors of $\log m$ to get a slightly improved bound for the approximate stabilizer rank $\chi_\delta(|\Tgate\rangle^{\otimes m})=\Omega\pbra{\frac{m^2}{\log^2 m}}$ for $\delta=\Omega(1)$.}  
Our results also improve some related lower bounds concerning decompositions of Boolean functions as linear combination of quadratic functions \cite{filmus2014real}; see \cite{mehraban2024quadratic} for details. We can also obtain improved bounds for ``partial unitary synthesis'' where the goal is to implement the first $K$ columns of an $n$-qubit unitary. Here state preparation corresponds to the case of $K=1$ and unitary synthesis is the case of $K=2^n$.
Using Householder reflections \cite{householder1958unitary}, this can be reduced to the task of preparing up to $K$ different $n$-qubit states \cite{kliuchnikov2013synthesis, low2018trading}. Via this reduction and with our \Cref{thm:state_T-count}, we obtain a $\Tgate$-count of $O\pbra{K\sqrt{2^n\log(K/\eps)}+K\log(K/\eps)}$, which similarly shaves polylogarithmic factors from the bounds in \cite{low2018trading}.

Below we discuss two further applications of \Cref{thm:diagonal_unitary_T-count} to synthesizing tensor products of single-qubit unitaries.

\paragraph*{Batched Synthesis of Single-Qubit Unitaries.}
Results from \cite{selinger2015efficient,beverland2020lower} show that synthesizing a single-qubit unitary has $\Tgate$-count $\Theta(\log(1/\eps))$.
What is the cost of implementing $m$ single-qubit unitaries $U_1 \otimes \cdots \otimes U_m$ to error $\eps$?\footnote{Here $\eps$ is the operator-norm error in approximating the entire unitary $U_1 \otimes \cdots \otimes U_m$.}
The most natural bound would be $O(m\log(m/\eps))$ by implementing each up to $\eps/m$ error.
Surprisingly, a corollary of our previous result implies that the total cost remains the same $\log(1/\eps)$ even if $m$ has slightly superconstant dependence on $1/\eps$.

\begin{restatable}{theorem}{batchedsyn}\label{thm:batched_synthesis}
Let $m= O(\log\log(1/\eps))$ be an integer.
A tensor product of $m$ (potentially different) single-qubit unitaries $U_1 \otimes \cdots \otimes U_m$ can be implemented up to error $\eps$ by a Clifford+$\Tgate$ circuit using $O\pbra{\log(1/\eps)}$ $\Tgate$ gates and ancillary qubits.\footnote{This construction uses $O(\log(1/\eps))$ Clifford gates.}
\end{restatable}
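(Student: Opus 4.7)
The plan is to reduce the tensor product $U_1\otimes\cdots\otimes U_m$ to a constant number of $m$-qubit diagonal unitaries sandwiched between fixed Clifford layers, and then invoke \Cref{thm:diagonal_unitary_T-count}. In the regime $m=O(\log\log(1/\eps))$, the bound $O(\sqrt{2^m\log(1/\eps)}+\log(1/\eps))$ coming from that theorem collapses down to $O(\log(1/\eps))$, which is exactly what we want.

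Concretely, first I would apply the ZYZ Euler-angle decomposition to write each single-qubit unitary as $U_i=e^{i\alpha_i}R_z(\beta_i)R_y(\gamma_i)R_z(\delta_i)$, and then use the Clifford conjugation $Y=S^\dagger X S$ together with $HZH=X$ to rewrite $R_y(\gamma_i)=S^\dagger H R_z(\gamma_i) H S$. This expresses each $U_i$ as an alternating product of $R_z$ rotations (with $i$-dependent angles) and fixed single-qubit Cliffords from $\{H,S,S^\dagger\}$ that do not depend on $i$. Taking the tensor product over $i$ factors those Clifford layers out of the product, yielding a decomposition of the form
\begin{equation}
U_1\otimes\cdots\otimes U_m \;=\; e^{i\theta}\cdot D_1\cdot (S^\dagger H)^{\otimes m}\cdot D_2\cdot (HS)^{\otimes m}\cdot D_3,
\end{equation}
where each $D_j$ is an $m$-qubit diagonal unitary (a tensor product of single-qubit $R_z$ rotations) and $e^{i\theta}$ is a global phase.

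I would then invoke \Cref{thm:diagonal_unitary_T-count} on each of $D_1,D_2,D_3$ with operator-norm error $\eps/3$, spending $O(\sqrt{2^m\log(1/\eps)}+\log(1/\eps))$ $\Tgate$ gates and ancillas per diagonal; the three approximation errors add by the triangle inequality, and the intervening Clifford layers do not amplify error. Under the hypothesis $m=O(\log\log(1/\eps))$, with the implicit constant chosen small enough that $2^m\leq c\log(1/\eps)$, the first term of the cost is absorbed into the second and the total $\Tgate$-count is $O(\log(1/\eps))$; the $O(m)$ extra Clifford gates used by the $(S^\dagger H)^{\otimes m}$ and $(HS)^{\otimes m}$ layers are similarly absorbed into the $O(\log(1/\eps))$ Clifford budget. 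There is essentially no obstacle in this proof: it is a direct corollary of \Cref{thm:diagonal_unitary_T-count}, and the only observation required is that a tensor product of $m$ arbitrary single-qubit unitaries, while not itself diagonal, always becomes three $m$-qubit diagonals interleaved with fixed tensor-power Cliffords after Euler decomposition, at which point the $\Tgate$ cost is entirely controlled by \Cref{thm:diagonal_unitary_T-count}.
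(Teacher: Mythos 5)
Your decomposition is essentially the paper's (both apply an Euler decomposition to write each $U_i$ as three diagonal rotations sandwiched by fixed Cliffords, so that taking the tensor product gives three $m$-qubit diagonals interleaved with Clifford layers, each of which is fed to \Cref{thm:diagonal_unitary_T-count}). The particular choice of ZYZ versus the paper's $D_3 H D_2 H D_1$ form is immaterial.

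However, there is a genuine gap in the last step, and you have partly acknowledged it yourself by writing ``with the implicit constant chosen small enough that $2^m\le c\log(1/\eps)$.'' The hypothesis of \Cref{thm:batched_synthesis} is $m=O(\log\log(1/\eps))$ with an \emph{arbitrary} (but fixed) implicit constant. If, say, $m=100\log\log(1/\eps)$, then $2^m=(\log(1/\eps))^{100}$ and $\sqrt{2^m\log(1/\eps)}=(\log(1/\eps))^{50.5}$, which is not $O(\log(1/\eps))$. So applying \Cref{thm:diagonal_unitary_T-count} directly to the full $m$-qubit diagonals does \emph{not} give the claimed bound in general; it only proves the weaker statement for $m\le \log\log(1/\eps)+O(1)$. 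The paper closes this gap with one extra step: partition the $m$ wires into $K=O(1)$ groups of at most $\lceil\log\log(1/\eps)\rceil$ qubits each, and implement each group (via \Cref{cor:multiple_single_T-count}) to error $\eps/K$. Each group costs $O\bigl(\sqrt{2^{\lceil\log\log(1/\eps)\rceil}\log(K/\eps)}+\log(K/\eps)\bigr)=O(\log(1/\eps))$ since $K=O(1)$, and summing over the $K$ groups keeps the total at $O(\log(1/\eps))$. You should add this grouping step to make your argument cover the stated hypothesis.
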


\Cref{thm:batched_synthesis} is a simple consequence of \Cref{thm:diagonal_unitary_T-count}. To see this, note that we can use the Euler angle decomposition to express each single-qubit unitary $U_j$ as a sequence of Hadamards and at most $3$ diagonal unitaries, i.e, $U_j=D_{3,j}\Hgate D_{2,j}\Hgate D_{1,j}$, where $D_{i,j}$ are diagonal single-qubit unitaries for $i\in \{1,2,3\}$ and $j\in [m]$. This reduces our synthesis task to the case in which all the single-qubit unitaries are diagonal; clearly it is enough to synthesize $D_{i,1}\otimes D_{i,2}\otimes \cdots\otimes D_{i,m}$ for $i=1,2,3$.  But now we can apply \Cref{thm:diagonal_unitary_T-count} which implies that any $\log\log(1/\eps)$-qubit diagonal unitary can be implemented with $\Tgate$-count $O(\log(1/\eps))$.

An interesting direction for future work would be to improve on this batched synthesis result (or show that improvements are not possible).  In particular, it is an open question whether or not an exponentially strengthened version of the above theorem holds with $m=\Omega(\log(1/\eps))$.

\paragraph*{Mass Production of a Single-Qubit Unitary.}
The batched synthesis result (\Cref{thm:batched_synthesis}) focused on implementing $m$ \emph{different} single-qubit unitaries.
It is natural to ask what happens if we want to prepare $m$ \emph{identical} single-qubit unitaries.
This is called ``mass production'' \cite{ulig1974synthesis,uhlig1992networks,kretschmer2023quantum} in circuit synthesis and lives in a richer family of direct sum problems; see the survey \cite{pankratov2012direct} for details.
As an application of \Cref{thm:diagonal_unitary_T-count}, we prove the following optimal $\Tgate$-count for mass production of a single-qubit unitary.

\begin{restatable}{theorem}{massprod}\label{thm:mass_production}
Let $U$ be an arbitrary single-qubit unitary.
Then $U^{\otimes m}$ can be implemented up to error $\eps$ by a Clifford+$\Tgate$ circuit using $O\pbra{m+\log(1/\eps)}$ $\Tgate$ gates and ancillary qubits.\footnote{This construction uses $O(m\log(1/\eps))$ Clifford gates.}
Furthermore, no Clifford+$\Tgate$ circuit (even with measurements and adaptivity) can use asymptotically fewer $\Tgate$ gates.
\end{restatable}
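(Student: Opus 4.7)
The plan is to reduce $U^{\otimes m}$ to tensor powers of diagonal single-qubit unitaries via Euler angles, and then to implement each such tensor power by routing its phase through the Hamming weight of the input, so that the dominant cost can be discharged by a single invocation of \Cref{thm:diagonal_unitary_T-count}. Writing the Euler decomposition $U = D_3 \Hgate D_2 \Hgate D_1$ with each $D_j$ a diagonal single-qubit unitary, we have
\[
U^{\otimes m} = D_3^{\otimes m}\,\Hgate^{\otimes m}\, D_2^{\otimes m}\,\Hgate^{\otimes m}\, D_1^{\otimes m}.
\]
The $\Hgate^{\otimes m}$ layers are Clifford, so it suffices to approximate each $D^{\otimes m}$ to error $\eps/10$ for an arbitrary diagonal single-qubit unitary $D = \diag(1, e^{i\theta})$.

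The key observation is that $D^{\otimes m}\ket{x} = e^{i\theta\,\wt(x)}\ket{x}$ for $x\in\bin^m$, so the action depends only on $\wt(x)\in\{0,1,\ldots,m\}$. Using an ancilla register of $\lceil\log(m+1)\rceil$ qubits, we implement $D^{\otimes m}$ in three stages:
\begin{enumerate}
\item \emph{Compute} $\wt(x)$ into the ancilla via a divide-and-conquer tree of reversible adders. The cost satisfies $T(m) = 2T(m/2) + O(\log m)$, giving $O(m)$ Toffoli and hence $O(m)$ $\Tgate$ gates.
\item \emph{Phase} the ancilla by the $\lceil\log(m+1)\rceil$-qubit diagonal unitary $V = \diag(1, e^{i\theta}, e^{2i\theta}, \ldots, e^{im\theta}, \ast, \ldots)$, where entries beyond index $m$ are arbitrary phases (those basis states never appear). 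By \Cref{thm:diagonal_unitary_T-count}, $V$ can be realized to error $\eps/10$ using
\[
O\!\pbra{\sqrt{m\log(1/\eps)} + \log(1/\eps)} = O(m + \log(1/\eps))
\]
$\Tgate$ gates, where the equality is AM-GM.
\item \emph{Uncompute} $\wt(x)$ by reversing the adders, at a further cost of $O(m)$ $\Tgate$ gates.
\end{enumerate}
Composing the three $D_j$ layers and rescaling constants inside $\eps$ gives an $O(m + \log(1/\eps))$-$\Tgate$ circuit approximating $U^{\otimes m}$ to error $\eps$ in operator norm, using the same number of ancillary qubits.

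For the matching lower bound, we argue the $\Omega(\log(1/\eps))$ and $\Omega(m)$ terms separately. The first follows from applying the single-qubit lower bound of \cite{beverland2020lower} to the first tensor factor: by initializing the other $m-1$ qubits to $\ket 0$ and tracing them out at the end, any $\eps$-approximation to $U^{\otimes m}$ yields an $\eps$-approximation to $U$ on a single qubit. The second bound follows by specializing to $U = \Tgate$: the state $\Tgate^{\otimes m}\ket{+}^{\otimes m}$ has stabilizer nullity (or any additive magic monotone) equal to $m$, and each $\Tgate$ gate in an adaptive Clifford+$\Tgate$ circuit can increase this quantity by only an absolute constant.

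The upper-bound construction is essentially forced once \Cref{thm:diagonal_unitary_T-count} is in hand, so the main technical work lies in making the $\Omega(m)$ lower bound robust under adaptivity and trace-distance error. This requires choosing a magic monotone that behaves well under mid-circuit Pauli measurement and classical control of subsequent gates, in line with the adaptive lower-bound framework already developed for \Cref{thm:state_T-count,thm:diagonal_unitary_T-count}.
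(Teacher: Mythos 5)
Your upper-bound construction is essentially identical to the paper's: Euler decomposition into diagonals and Hadamards, compute Hamming weight into an $O(\log m)$-qubit register with a divide-and-conquer adder tree at cost $O(m)$, apply the small diagonal unitary via \Cref{thm:diagonal_unitary_T-count} at cost $O(\sqrt{m\log(1/\eps)}+\log(1/\eps))=O(m+\log(1/\eps))$, then uncompute. Your recurrence $T(m)=2T(m/2)+O(\log m)$ is a different but equally valid way to see the $O(m)$ cost of the adder tree (the paper sums level-by-level). The $\Omega(\log(1/\eps))$ part of the lower bound is also the same appeal to \cite{beverland2020lower}.

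The $\Omega(m)$ part of your lower bound, however, departs from the paper and has a genuine gap. You propose to argue directly that $\Tgate^{\otimes m}\ket{+}^{\otimes m}=\ket{\Tgate}^{\otimes m}$ has an additive magic monotone (stabilizer nullity or similar) of value $m$ that each $\Tgate$ gate can only increase by $O(1)$. For exact synthesis this works, but you are trying to prove a lower bound against circuits that only approximate $\Tgate^{\otimes m}$ to constant error $\eps$ and may use mid-circuit Pauli measurements with classical control. Stabilizer nullity is not continuous, so it gives nothing under a trace-distance perturbation; a robust monotone (approximate stabilizer extent, robustness of magic, etc.) would be needed, along with a proof that it is non-increasing under adaptive Clifford channels and satisfies the tight multiplicativity $\xi(\ket{\Tgate}^{\otimes m})=\xi(\ket{\Tgate})^m$. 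You flag exactly these issues but do not resolve them, so as written the argument is not complete.

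The paper sidesteps all of this with a short reduction: if an (adaptive, approximate) Clifford+$\Tgate$ circuit could implement $\Tgate^{\otimes m}$ with $o(m)$ $\Tgate$ gates, then by applying it to $\ket{+}^{\otimes m}$ one could prepare the magic states $\ket{\Tgate}^{\otimes m}$ with $o(m)$ $\Tgate$-count, and then gate-inject them to replace the $\Theta(\sqrt{2^n})$ $\Tgate$ gates in the state-preparation circuit of \Cref{thm:state_T-count}, yielding an $n$-qubit state preparation with $o(\sqrt{2^n})$ $\Tgate$-count. This contradicts the state-preparation lower bound (\Cref{thm:state_adaptive_lb} / \Cref{prop:state_post_lb}), which has already been proven robustly against approximation, adaptivity, and Pauli postselections. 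If you want to salvage a monotone-based argument, you would essentially be reproving a special case of the machinery in \Cref{sec:lower_bounds}; the reduction is both shorter and automatically inherits the needed robustness.
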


The bound in \Cref{thm:mass_production} is indeed better than the one in \Cref{thm:batched_synthesis} if the goal is to implement the same unitary several times: \Cref{thm:mass_production} shows that using $O(\log(1/\eps))$ $\Tgate$ gates, we can prepare $m=O(\log(1/\eps))$ copies of a given single-qubit unitary $U$, 
in contrast to \Cref{thm:batched_synthesis}, which says we can implement $m= O(\log\log(1/\eps))$ (potentially different) unitaries.

 \Cref{thm:mass_production} is also a corollary of \Cref{thm:diagonal_unitary_T-count}. Using the Euler angle decomposition, we can reduce to the case where $U$ is diagonal, say, $U=\diag(1,e^{i\theta})$. Then $U^{\otimes m}\ket x=e^{i\theta\cdot h_x}\ket x$, where $h_x\in\cbra{0,1,\ldots,m}$ is the Hamming weight of $x\in\bin^m$.
Therefore, we can first compute the Hamming weight into a second register containing $L=\lceil \log m\rceil$ qubits $\ket x\to\ket x\ket{h_x}$, then apply the corresponding phase change $\ket{h_x}\to e^{i\theta\cdot h_x}\ket{h_x}$. The first step can be done with $O(m)$ complexity in a standard way (see \Cref{fct:hamming_weight_circuit} for details). For the second step, since $h_x$ has $L$ bits, the phase change is a diagonal unitary on $L=\ceilbra{\log m}$ qubits, which can be implemented with $\Tgate$-count $O\pbra{\sqrt{m\log(1/\eps)}+\log(1/\eps)}\le O\pbra{m+\log(1/\eps)}$ by \Cref{thm:diagonal_unitary_T-count}.

Note that one can implement a similar but sub-optimal strategy without the use of \Cref{thm:diagonal_unitary_T-count}, see, e.g., \cite[Section 3.2]{beverland2020lower}. Indeed, one can implement the second step as $R(\theta)\otimes R(2\theta)\otimes\ldots R(2^{L}\theta) \ket{h_x}= e^{i\theta\cdot h_x}\ket{h_x}$ where $R(\alpha)\equiv\diag(1,e^{i\alpha})$ and then use the Ross-Selinger algorithm to approximate each single-qubit gate $R(2^k\theta)$ to within error $\eps/L$. This strategy gives an overall $\Tgate$-count of 
\begin{align}
O(m+L\cdot \log(L/\eps))
&=O(m+\log(m)\log(\log(m)/\eps))\\
&=O(m+\log(m)\log(1/\eps)),    
\end{align}
which is worse than the optimal bound given above.

Finally we remark that it is easy to see that the $\Tgate$-count upper bound $O(m+\log(1/\eps))$ from \Cref{thm:mass_production} is asymptotically tight. On the one hand, the $\Omega(\log(1/\eps))$ $\Tgate$-count lower bound holds even if $m=1$ \cite{beverland2020lower}.
On the other hand, if $U$ is the $\Tgate$ gate, we cannot hope to compress $\Tgate^{\otimes m}$ in a generic way, which otherwise contradicts the optimality of \Cref{thm:state_T-count}.
See details in \Cref{sec:mass}.

\paragraph*{Discussion.}
In this work we have characterized the worst-case $\Tgate$-count of quantum states and diagonal unitaries. A challenging open question is to understand the optimal $\Tgate$-count for general $n$-qubit unitaries. 
We can prove an $\tilde\Omega(2^n)$ $\Tgate$-count lower bound (see \Cref{thm:unitary_adaptive_lb} for details).
However there is a glaringly large gap between this lower bound and the best known upper bound $\tilde O\pbra{2^{1.5n}}$ \cite{low2018trading, rosenthal2021query}. This upper bound is obtained in \cite{low2018trading} by reducing the unitary synthesis task to that of preparing $2^n$ $n$-qubit states, each of which uses $\tilde O\pbra{\sqrt{2^n}}$ $\Tgate$ gates; \cite{rosenthal2021query} also shows how to synthesize a unitary efficiently with $O\pbra{\sqrt{2^n}}$ calls to a $2n$-qubit Boolean oracles, each of which needs $O(2^n)$ $\Tgate$ gates (see \Cref{lem:boolean_oracle}).

While our focus is on $T$-count and our construction gives the asymptotic optimal bound, the number of Clifford gates in our state preparation circuit is $O(2^n\log(1/\eps))$, which can be verified given the explicit description of our synthesis procedure. This bound is close to the optimal $\Omega(2^n/n)$ lower bound. In particular, if the goal is to minimize total gate count, we can replace \Cref{lem:boolean_oracle} and \Cref{rmk:boolean} with the construction that achieves the $O(2^n/n)$ gate count for Boolean functions; then our final circuit will have $O(2^n/n)$ gates, albeit a similar amount of $T$ gates.
It remains an open problem whether it is possible to achieve $O(\sqrt{2^n})$ $T$-count and $O(2^n/n)$ total gate count at the same time.

We emphasize that our savings on the $T$-count is made possible with the use of ancillary qubits, which is also necessary by a simple counting argument.
By tuning the construction for \Cref{lem:boolean_oracle} with different number of ancillas and $T$-count, we believe our construction achieves near optimal tradeoffs between ancillas and $T$-count (with some logarithmic slack in $n$ and $1/\eps$). The best achievable tradeoff between these two metrics is still unknown.

A similarly interesting question is to study the $T$-depth in place of $T$-count. The most depth-consuming part in our construction is the synthesis of (Boolean) diagonal unitaries, which, in light of \Cref{rmk:boolean}, should be $\poly(n)$. The optimal $T$-depth is still open and we expect that the lower bound side will be significantly more challenging, if given unbounded ancillas.

\paragraph*{Subsequent Work.}
A recent followup by Tan \cite{tan2025unitary} showed how to improve the unitary synthesis $T$-count to $\tilde O(2^{4n/3})$. This is a significant improvement over the aforementioned $\tilde O(2^{1.5n})$ bound, but is still far from the $\tilde\Omega(2^n)$ lower bound.

\paragraph*{Paper Organization.}
The remainder of the paper is organized as follows. In \Cref{sec:diagonal_T}, we establish the optimal $\Tgate$-count for diagonal unitaries (\Cref{thm:diagonal_unitary_T-count}) and in \Cref{sec:batch} and \Cref{sec:mass} we describe the applications to batched synthesis (\Cref{thm:batched_synthesis}) and mass production (\Cref{thm:mass_production}). In \Cref{sec:state_T}, we bring in other techniques and prove the optimal $\Tgate$-count for state preparation (\Cref{thm:state_T-count}). Detailed proofs of lower bounds for \Cref{thm:state_T-count} (as \Cref{thm:state_adaptive_lb}) and \Cref{thm:diagonal_unitary_T-count} (as \Cref{thm:diagonal_adaptive_lb}) are presented in \Cref{sec:lower_bounds}.
A canonical form for Clifford+$\Tgate$ circuit with Pauli postselections is proved in \Cref{sec:canonical_form}, which generalizes \cite{beverland2020lower}.
An improved analysis of \cite{low2018trading} is in \Cref{sec:improve_LKS}. 


\section{Diagonal unitary synthesis with optimal T-count}\label{sec:diagonal_T}

In this section, we prove \Cref{thm:diagonal_unitary_T-count} (restated below) and give details of the applications described in the previous section.

\diagonalunitary*

In \Cref{thm:diagonal_unitary_T-count}, the diagonal entries of the diagonal unitary $D$ can be arbitrary complex numbers of magnitude 1. 
An important special case is obtained by restricting to ``Boolean phase oracles'': diagonal unitaries with $\pm1$ diagonal entries. Such unitaries can be implemented using Clifford+$\Tgate$ circuits with zero error. Moreover, it is known how to achieve this using very few $\Tgate$ gates  \cite{nechiporuk1962complexity,schnorr1989multiplicative,maslov2016optimal,low2018trading}.
We shall use the following statement which is phrased in terms of the standard oracle that computes the Boolean function into an ancilla register.

\begin{lemma}[{\cite[Theorem 2]{low2018trading}}]\label{lem:boolean_oracle}
Let $b\ge1$ be an integer and $f\colon\bin^n\to\bin^b$ be an arbitrary Boolean function.
Define $U_f$ as the unitary mapping $\ket x\ket y$ to $\ket x\ket{y\oplus f(x)}$ for all $x\in\bin^n,y\in\bin^b$.
Then $U_f$ can be implemented exactly by a Clifford+$\Tgate$ circuit using $O\pbra{\sqrt{b\cdot2^n}}$ $\Tgate$ gates and ancillary qubits.\footnote{This construction uses $O(b\cdot2^n)$ Clifford gates.}
\end{lemma}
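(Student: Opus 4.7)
The plan is to use a select-swap (QROAM) construction, the standard way to trade $\Tgate$-count against ancilla count for multi-output lookup oracles. The construction has a single free parameter $k$ (a block size), to be optimized at the end. Split the address as $x = (x_H, x_L)$ with $x_H \in \bin^{n-\log k}$ indexing a block of $k$ consecutive inputs and $x_L \in \bin^{\log k}$ indexing within a block, so the truth table of $f$ is organized as $2^n/k$ blocks, each consisting of $k$ $b$-bit words (hence $kb$ bits per block).

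First I would allocate an ancilla register of $kb$ qubits and perform the \emph{SELECT}: a unary iteration over the $2^n/k$ values of $x_H$ that, controlled on $\ket{x_H}$, XORs the fixed $kb$-bit block $(f(x_H, 0),\ldots,f(x_H, k-1))$ into the ancilla register. Unary iteration has $\Tgate$-count $O(2^n/k)$ (essentially one Toffoli-equivalent per address, by reusing partial AND computations along the tree of address bits), and since the data being copied is classical and known at circuit-design time, each copy step itself uses only Clifford $X$/CNOT gates. Next, a controlled-SWAP tree of depth $\log k$, controlled on $\ket{x_L}$, routes the $x_L$-th $b$-bit word in the block register into the output register $\ket y$ where it is XOR-ed in; this tree uses $O(kb)$ controlled SWAP gates, each costing $O(1)$ $\Tgate$ gates. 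Finally, uncomputing the SELECT restores the block register to $\ket{0^{kb}}$ at a further cost of $O(2^n/k)$ $\Tgate$ gates.

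The overall $\Tgate$-count is $O(2^n/k + kb)$, which is minimized at $k = \sqrt{2^n/b}$ and evaluates to the claimed $O\pbra{\sqrt{b\cdot 2^n}}$; note that $k$ is always a valid (possibly non-integer) choice that can be rounded up to a power of two with only constant-factor loss. The $O(b\cdot 2^n)$ Clifford count is dominated by having to write each of the $b\cdot 2^n$ truth-table bits during SELECT. The main obstacle is justifying that unary iteration over $N$ addresses really achieves $\Tgate$-count $O(N)$ rather than the naive $O(N\log N)$ from decoding each address independently; this relies on the standard Maslov-style unary iteration that reuses partial AND computations along a binary tree. Once that primitive is in hand, the rest of the argument is just bookkeeping and balancing the single parameter $k$.
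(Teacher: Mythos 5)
Your proof is correct, but it takes a genuinely different route from the paper's sketch in \Cref{rmk:boolean}. The paper reduces the quantum $T$-count to classical \emph{multiplicative complexity}: a circuit for $f$ over XOR/AND gates maps to a reversible Clifford+$T$ circuit (XOR $\to$ CNOT, AND $\to$ Toffoli), and the number of AND gates is bounded via a Nechiporuk--Schnorr two-phase decomposition, namely cofactor expansion on $d$ variables costing $b\cdot 2^d$ ANDs plus precomputing all $2^{n-d}$ conjunctions of the remaining variables, optimized at $d\approx(n-\log b)/2$. Your select-swap (QROAM) construction is instead a data-lookup argument with the same underlying balance: unary-iteration SELECT over $2^n/k$ blocks versus a controlled-SWAP tree over a $kb$-qubit block register. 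Interestingly, your route is essentially the one taken in the cited reference (LKS, Theorem 2), so it is closer to the ``official'' proof than the paper's own remark; the multiplicative-complexity framing is what lets the paper connect to the classical circuit-complexity literature. One small corner case worth flagging: when $b > 2^n$, the minimizer $k^* = \sqrt{2^n/b}$ drops below $1$, and your stated cost $O(2^n/k + kb)$ at $k=1$ reads $O(2^n + b) = O(b)$, which exceeds $\sqrt{b\cdot 2^n}$. The resolution is that the SWAP tree really costs $O((k-1)b)$, not $O(kb)$, so at $k=1$ the construction degenerates to plain SELECT with $T$-count $O(2^n) \le O(\sqrt{b\cdot 2^n})$ in that regime; this is worth stating explicitly if you want the bound to hold uniformly over $b\ge 1$.
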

\begin{remark}\label{rmk:boolean}
We sketch the idea to prove \Cref{lem:boolean_oracle}.
Assume $f$ has a classical circuit consisting of XOR gates and AND gates with bounded fan-in.
By converting XOR gate into CNOT gate (which is Clifford) and AND gate into Toffoli gate (which has a constant $\Tgate$-count), we can synthesize $U_f$ with $\Tgate$-count proportional to the number of AND gates.
The latter quantity is termed ``multiplicative complexity'' in the classical complexity theoretic literature and is thoroughly studied in e.g., \cite{nechiporuk1962complexity,schnorr1989multiplicative}; here we also give a brief overview of the upper bound. 

Assume $b=1$ for simplicity.
One way to compute $f$ is to gradually fix each input bit: $f(x)=\pbra{x_1\land f_1(x_2,\ldots,x_n)}\oplus f_2(x_2,\ldots,x_n)$, where $f_2$ (resp., $f_1\oplus f_2$) is $f$ with $x_1$ fixed to $0$ (resp., $1$).
Let $d\in[n]$ be an integer to be optimized later.
If we iterate the above expansion for $x_1,\ldots,x_d$, then we need $2^d$ AND gates and it remains to compute $2^d$ Boolean functions on $x_{d+1},\ldots,x_n$.
Now observe that once all conjunctions $\bigwedge_{j\in S}x_j$, for $S\subseteq\cbra{d+1,\ldots,n}$ are computed (which takes $2^{n-d}$ AND gates\footnote{To see this, note that each $|S|=k$ is an AND of some size-$(k-1)$ conjunction and one extra variable. Therefore, building on size-$(<k)$ conjunctions, we only need to introduce $\binom{n-d}k$ AND gates to get size-$(\le k)$ conjunctions. So the total cost is $\sum_{k\ge2}\binom{n-d}k<2^{n-d}$.}), we just need to XOR some of them for each one of the remaining $2^d$ Boolean functions.
Therefore, the total number of AND gates is $2^d+2^{n-d}$, which has the claimed minimum value  $O(\sqrt{2^n})$ by setting $d=n/2$.
In the case of $b>1$, an analogous calculation shows the total number of AND gates is roughly $b\cdot2^d+2^{n-d}$, which has minimum value $O(\sqrt{b\cdot2^n})$ as claimed.

Note that the number of XOR gate used is $O(b\cdot2^n)$.
\end{remark}

We shall also use the known $\Tgate$-optimal synthesis of single-qubit unitaries without ancillas \cite{kliuchnikov2013fast,selinger2015efficient,kliuchnikov2015practical,ross2016optimal}. 

\begin{lemma}[{\cite{selinger2015efficient,ross2016optimal}}]\label{lem:single_qubit_T-count}
Any single-qubit unitary with determinant $1$\footnote{As noted in \cite{selinger2015efficient}, this determinant condition is needed since a Clifford+$\Tgate$ circuit has determinant in $\{e^{i\cdot t\pi/4}\colon t=0,1,\ldots,7\}$. This is not a problem in practice because the determinant can be adjusted by multiplying the unitary with a global phase.} can be implemented up to error $\eps$ by a Clifford+$\Tgate$ circuit using $O(\log(1/\eps))$ $\Tgate$ gates and without ancillary qubits.\footnote{In particular, the number of Clifford gates is also $O(\log(1/\eps))$.}
\end{lemma}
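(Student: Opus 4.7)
The plan is to follow the number-theoretic approach of Kliuchnikov--Maslov--Mosca together with Ross and Selinger, which is what closes the $O(\log^c(1/\eps))$ gap left by the Solovay--Kitaev theorem. The first ingredient I would establish is an \emph{exact} synthesis theorem: a single-qubit unitary is realizable exactly by an ancilla-free Clifford$+\Tgate$ circuit if and only if all of its matrix entries lie in the ring $\Zbb[\omega,1/\sqrt 2]$ with $\omega=e^{i\pi/4}$, and moreover the optimal $\Tgate$-count equals, up to an additive constant, the smallest integer $k$ such that every entry can be written as $a/(\sqrt 2)^k$ with $a\in\Zbb[\omega]$. I would prove this via an inductive peeling argument: given an exactly-realizable $U$ with denominator exponent $k>0$, one shows there is a Clifford $C$ so that $\Tgate^{-1}CU$ has denominator exponent at most $k-1$, and iterating yields a circuit of $\Tgate$-count $O(k)$.

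Second, I would reduce the $\eps$-approximation task to a Diophantine problem in the ring $\Zbb[\omega]$: find a non-negative integer $k$ and elements $u,t\in\Zbb[\omega]$ with $|u|^2+|t|^2=2^k$ (the unitarity constraint) such that
\begin{equation*}
U=\frac{1}{(\sqrt 2)^k}\begin{pmatrix} u & -t^* \\ t & u^* \end{pmatrix}
\end{equation*}
is within $\eps$ of the target $V$ in operator norm. The determinant-$1$ hypothesis is used precisely here, to ensure that $V$ lies in the correct coset for such a $U$ to exist; otherwise one can only hope to approximate $V$ up to a global eighth root of unity, which is not a Clifford$+\Tgate$ operation on a single qubit.

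The main obstacle is producing $(u,t)$ with $k=O(\log(1/\eps))$; this is the number-theoretic heart of the argument. Following Ross--Selinger, the plan would be to (i) randomly round the first column of $V$ to a candidate $u\in(\sqrt 2)^{-k}\Zbb[\omega]$ lying inside an $O(\eps)$-wide slab around the target column, (ii) invoke quantitative prime-splitting estimates in $\Zbb[\omega]$ to show that $2^k-|u|^2$ is representable as a norm $|t|^2$ with $t\in\Zbb[\omega]$ with probability $\Omega(1/k)$ over the choice of rounding, and (iii) explicitly recover $t$ by factoring $2^k-|u|^2$ (a number of $O(k)$ bits) and extracting square roots at its split prime divisors. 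A Las Vegas search over $\poly(k)$ candidate $u$'s then succeeds with high probability, delivering a solution with $k=O(\log(1/\eps))$ which, combined with the exact synthesis theorem from the first step, yields a Clifford$+\Tgate$ circuit of $\Tgate$-count $O(\log(1/\eps))$ using no ancillary qubits, as required.
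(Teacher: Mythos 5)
The paper does not supply its own proof of this lemma; it cites Selinger~\cite{selinger2015efficient} and Ross--Selinger~\cite{ross2016optimal} and uses the result as a black box. Your sketch is a faithful high-level account of how those references (together with Kliuchnikov--Maslov--Mosca for the exact synthesis part) actually prove it: the characterization of exactly realizable ancilla-free Clifford$+\Tgate$ unitaries as those with entries in $\Zbb[\omega,1/\sqrt 2]$, the identification of $\Tgate$-count with the least denominator exponent up to an additive constant via the peeling/reduction step, and the reduction of $\eps$-approximation to finding $u,t\in\Zbb[\omega]$ with $|u|^2+|t|^2=2^k$ and $k=O(\log(1/\eps))$ via rounding plus norm-representability. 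So you have reconstructed the cited proof rather than produced an alternative route, which is fine given that the lemma is imported.

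One small caution worth flagging: your step (ii), claiming the rounded candidate $u$ yields a representable $2^k-|u|^2$ with probability $\Omega(1/k)$, rests in the original papers on quantitative estimates about prime distribution that are only proved conditionally; the tight constants in the Ross--Selinger $\Tgate$-count depend on such hypotheses. For the coarser $O(\log(1/\eps))$ bound stated in the lemma this is not a problem — the existence of a sufficiently good approximant at some $k=O(\log(1/\eps))$ can be guaranteed unconditionally (e.g., by allowing $k$ to grow by a constant factor, or by the earlier Kliuchnikov--Maslov--Mosca approximation result, also cited in the paper) — but if you presented your sketch as a standalone proof you would want to either invoke the unconditional variant or acknowledge the conditional nature of the sharper count. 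Beyond that, the determinant-$1$ hypothesis is invoked correctly, and the structure of your argument matches the literature the paper points to.
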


Below we give the proof of \Cref{thm:diagonal_unitary_T-count}. The idea is to view the diagonal unitary $D$ as a single-qubit unitary acting on the $n$th qubit, controlled on the state of the first $n-1$ qubits.  For each fixed value of the control bits, we have a single-qubit unitary that can be approximated by a sequence\footnote{Since \Cref{lem:single_qubit_T-count} uses no ancilla, the Clifford gates are only Hadamard $\Hgate$ and Phase gate $S=T^2$. Hence we simply assume the sequence contains only Hadamard and $\Tgate$ gates.} of Hadamard and $\Tgate$ gates using \Cref{lem:single_qubit_T-count}.  To implement $D$, we first use \Cref{lem:boolean_oracle} to compute a Boolean function that takes as input the $n-1$ control bits, and outputs a description of the approximating sequence of Hadamard and $\Tgate$ gates into an ancilla register. Controlled on this ancilla register, we then apply the corresponding sequence of single-qubit gates.

\begin{proof}[Proof of \Cref{thm:diagonal_unitary_T-count}]
Let $D=\diag(\alpha_1,\ldots,\alpha_{2^n})$ where $\alpha_1,\ldots,\alpha_{2^n}$ are complex numbers on the unit circle. Define a diagonal unitary $D'=\diag(\alpha_1,\bar\alpha_1,\ldots,\alpha_{2^n},\bar\alpha_{2^n})$ on $n+1$ qubits, where $\bar{\alpha}_j$ is the complex conjugate of $\alpha_j$. Note that $D'=D\otimes\ketbra00+D^\dag\otimes\ketbra11$. In order to implement $D$, it suffices to implement $D'$ on our $n$-qubit input register along with an ancilla initialized in the $|0\rangle$ state.

For each $j=1,2,\ldots,2^n$, define $U_j=\diag(\alpha_j,\bar\alpha_j)$, which is a single-qubit unitary with determinant $1$. Then $D'=\diag(U_1,\ldots,U_{2^n})$.
Moreover, we know from \Cref{lem:single_qubit_T-count} that each $U_j$ is $\eps$-close to a single-qubit unitary $\tilde U_j$ that can be exactly implemented as a sequence of $O(\log(1/\eps))$ Hadamard and $\Tgate$ gates. It follows that $\|D'-\tilde U\|\leq \eps$, where $\tilde U=\diag(\tilde U_1,\ldots,\tilde U_{2^n})$. To complete the proof we describe a Clifford+$\Tgate$ circuit that exactly implements $\tilde U$ using the number of $\Tgate$ gates and ancillas claimed in the Theorem statement.

From \Cref{lem:single_qubit_T-count} we infer that for some $K=\Theta(\log(1/\eps))$ and each $U_j$, we have
\begin{equation}
\tilde U_j=\Hgate^{a_j^1}\Tgate^{b_j^1}\Hgate^{a_j^2}\Tgate^{b_j^2}\cdots \Hgate^{a_j^K}\Tgate^{b_j^K},
\end{equation}
for some Boolean variables $a_j^1,\ldots,a_j^K,b_j^1,\ldots,b_j^K\in\bin$. Define a Boolean function $f\colon\bin^n\to\bin^{2K}$ to record the gate sequence information $a,b$ for each $\tilde U_j$:
\begin{equation}
f(j)=(a_j^1,b_j^1,\ldots,a_j^K,b_j^K).
\end{equation}
Then by \Cref{lem:boolean_oracle}, the corresponding $(n+2K)$-qubit unitary $F$ mapping $\ket j\ket y$ to $\ket j\ket{y\oplus f(j)}$ can be implemented with $O\pbra{\sqrt{K\cdot2^n}}=O\pbra{\sqrt{2^n\log(1/\eps)}}$ $\Tgate$ gates and ancillas.

Once the gate sequence is computed into an ancilla register, it remains to apply $\Hgate$ or $\Tgate$ on the $(n+1)$-th qubit controlled on this ancilla register. For this purpose, we note that controlled-$\Hgate$ and controlled-$\Tgate$ are two-qubit unitaries that can be exactly implemented with $O(1)$ $\Tgate$ gates without ancillas \cite{giles2013exact}.

Putting this together, we arrive at the following implementation of  $\tilde U$.
\begin{enumerate}
\item\label{itm:lem:block_diag_single_T-count_1} Let $\Acal=(\Acal_0,\Acal_1)$ be the $(n+1)$-qubit register where we implement $\tilde U$. Note that $\Acal_0$ is an $n$-qubit register and $\Acal_1$ is a single-qubit register.

Let $\Bcal=(\Bcal_1,\ldots,\Bcal_{2K})$ and $\Ccal$ be auxiliary registers where each $\Bcal_j$ is single-qubit and $\Ccal$ is $O\pbra{\sqrt{2^n\log(1/\eps)}}$-qubit. These registers are initialized in the all-zero state and serve as ancillas.
\item\label{itm:lem:block_diag_single_T-count_2} Using $\Ccal$, we apply the unitary $F$ described above on $\Acal_0$ and $\Bcal$.
\item\label{itm:lem:block_diag_single_T-count_3} For each $\ell=1,2,\ldots,2K$, we either apply $\Hgate$ on $\Acal_1$ controlled by $\Bcal_\ell$ (if $\ell$ is odd), or we apply  $\Tgate$ on $\Acal_1$ controlled by $\Bcal_\ell$ (if $\ell$ is even).

\item\label{itm:lem:block_diag_single_T-count_4} Using $\Ccal$, we apply the unitary $F$ again to restore the ancillas in $\Bcal$ to the all-zeros state.
\end{enumerate}

To see that this implements $\tilde{U}$ as desired, we can verify the action of each of the above steps when the initial state of registers $\Acal_0$ and $\Acal_1$ are basis vectors $\ket j$ and $|x\rangle$, respectively, where $j\in \{1,2,\ldots, 2^n\}$ and $x\in \{0,1\}$:
\begin{align}
|j\rangle|x\rangle_{\Acal_1}|0\rangle_{\Bcal}|0\rangle_{\Ccal}
&\rightarrow |j\rangle|x\rangle_{\Acal_1}|f(j)\rangle_{\Bcal}|0\rangle_{\Ccal}\\
&\rightarrow |j\rangle \tilde{U}_j|x\rangle_{\Acal_1}|f(j)\rangle_{\Bcal}|0\rangle_{\Ccal}\\
&\rightarrow  |j\rangle \tilde{U}_j|x\rangle_{\Acal_1}|0\rangle_{\Bcal}|0\rangle_{\Ccal}.
\end{align}

The number of $\Tgate$ gates used in the above implementation is upper bounded as
\begin{align}
&\underbrace{O\pbra{\sqrt{2^n\log(1/\eps)}}}_{\text{\Cref{itm:lem:block_diag_single_T-count_2} and \Cref{itm:lem:block_diag_single_T-count_4}}}
+\quad 2K\cdot\underbrace{O(1)}_{\substack{\text{each step} \\ \text{in \Cref{itm:lem:block_diag_single_T-count_3}}}}
=O\pbra{\sqrt{2^n\log(1/\eps)}+\log(1/\eps)}.
\end{align}
It is also clear that this upper bounds the number of ancillas used.

The tightness of \Cref{thm:diagonal_unitary_T-count} is deferred as \Cref{thm:diagonal_adaptive_lb} to be proved in \Cref{sec:lower_bounds}.
\end{proof}

\subsection{Batched synthesis}\label{sec:batch}

We now discuss the applications of \Cref{thm:diagonal_unitary_T-count} mentioned in the introduction. 

First we discuss \textit{batched synthesis} of single-qubit unitaries: we will see that we can implement $m= O( \log\log(1/\eps) )$ single-qubit gates to error $\eps$ with $\Tgate$-count $O(\log(1/\eps))$. 

We are interested in synthesizing a general tensor product $U_1\otimes U_2\otimes \cdots\otimes U_m$ of single-qubit unitaries. In order to use  \Cref{thm:diagonal_unitary_T-count} we will first reduce to the case where each $U_j$ is diagonal in the standard basis. To this end we can use the well-known Euler angle decomposition (see e.g., \cite{nielsen2010quantum}) which expresses a single-qubit unitary $U$ as a product  $U=e^{i\delta} e^{i\gamma\Zgate}e^{i\beta\Xgate}e^{i\alpha\Zgate}$ for some real numbers $\alpha, \beta,\gamma, \delta$. Here $\Xgate,\Ygate,\Zgate$ are the Pauli matrices. Since $e^{i\beta\Xgate}=\Hgate e^{i\beta\Zgate}\Hgate$ we can summarize this fact as follows.

\begin{fact}\label{fct:unitary_syn_n=1}
Let $U$ be a single-qubit unitary.
Then there exist single-qubit diagonal unitaries $A,B,C$ such that $U=A\Hgate B\Hgate C$, where $\Hgate$ is the single-qubit Hadamard matrix.
\end{fact}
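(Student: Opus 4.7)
The plan is to invoke the standard ZXZ Euler angle decomposition of a single-qubit unitary and convert the middle $X$-exponential into a $Z$-exponential using the Hadamard conjugation identity $\Hgate \Zgate \Hgate = \Xgate$. This is essentially the computation sketched in the paragraph immediately preceding the statement, so my role here is just to record it cleanly.

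First I would recall the ZXZ Euler decomposition: for any single-qubit unitary $U$ there exist real numbers $\alpha,\beta,\gamma,\delta$ such that
\begin{equation}
U = e^{i\delta}\, e^{i\gamma \Zgate}\, e^{i\beta \Xgate}\, e^{i\alpha \Zgate}.
\end{equation}
This is proved in standard references (e.g., Nielsen and Chuang) by observing that a general element of $SU(2)$ can be parametrized by three Euler angles corresponding to successive $Z$ and $X$ rotations, with an additional global phase $e^{i\delta}$ accounting for the determinant.

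Next I would use the fact that $\Hgate \Zgate \Hgate = \Xgate$, which is immediate from $\Hgate^2 = \Ibm$ and direct matrix computation, together with the observation that conjugation commutes with the exponential. This gives $e^{i\beta \Xgate} = \Hgate\, e^{i\beta \Zgate}\, \Hgate$. Substituting this into the Euler decomposition yields
\begin{equation}
U = e^{i\delta}\, e^{i\gamma \Zgate}\, \Hgate\, e^{i\beta \Zgate}\, \Hgate\, e^{i\alpha \Zgate}.
\end{equation}

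Finally I would set $A := e^{i\delta}\, e^{i\gamma \Zgate}$, $B := e^{i\beta \Zgate}$, and $C := e^{i\alpha \Zgate}$. Each of these is a product of diagonal matrices ($\Zgate$ is diagonal in the computational basis, and scalars are trivially diagonal), so each is itself a single-qubit diagonal unitary; the global phase $e^{i\delta}$ is harmlessly absorbed into $A$. This gives $U = A \Hgate B \Hgate C$ as required. There is no real obstacle here, since the only ingredients are a textbook decomposition and an elementary conjugation identity; the main thing to be careful about is simply making sure the global phase $e^{i\delta}$ is tracked and placed into one of the diagonal factors rather than dropped.
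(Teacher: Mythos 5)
Your proposal is correct and follows exactly the same route as the paper: the paper derives this fact in the paragraph immediately preceding it by citing the ZXZ Euler angle decomposition $U=e^{i\delta} e^{i\gamma\Zgate}e^{i\beta\Xgate}e^{i\alpha\Zgate}$ and substituting $e^{i\beta\Xgate}=\Hgate e^{i\beta\Zgate}\Hgate$. Your write-up simply records these same two steps more explicitly, including the careful absorption of the global phase into $A$, which is a minor but legitimate point to flag.
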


Now we reduce the batched synthesis task to \Cref{thm:diagonal_unitary_T-count}.

\begin{corollary}\label{cor:multiple_single_T-count}
Let $m\ge1$ be an integer and $U_1,U_2,\ldots,U_m$ be arbitrary single-qubit unitaries. Define an $m$-qubit unitary $U=U_1\otimes\cdots\otimes U_m$.
Then $U$ can be implemented up to error $\eps$ by a Clifford+$\Tgate$ circuit using
$O\pbra{\sqrt{2^m\log(1/\eps)}+\log(1/\eps)}$ $\Tgate$ gates and ancilla qubits.\footnote{This construction uses $O(2^m\log(1/\eps))$ Clifford gates.}
\end{corollary}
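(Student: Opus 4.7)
The plan is to reduce the task to the synthesis of three diagonal $m$-qubit unitaries and then invoke \Cref{thm:diagonal_unitary_T-count}. First, I would apply \Cref{fct:unitary_syn_n=1} to each factor, writing $U_j = A_j \Hgate B_j \Hgate C_j$ with $A_j,B_j,C_j$ single-qubit diagonal unitaries for every $j \in [m]$. Taking tensor products and using the fact that $\Hgate^{\otimes m}$ commutes across the factorization in the obvious way, one obtains
\begin{equation}
U \;=\; \pbra{\bigotimes_{j=1}^m A_j}\,\Hgate^{\otimes m}\,\pbra{\bigotimes_{j=1}^m B_j}\,\Hgate^{\otimes m}\,\pbra{\bigotimes_{j=1}^m C_j}\;=\;D_A\,\Hgate^{\otimes m}\,D_B\,\Hgate^{\otimes m}\,D_C,
\end{equation}
where $D_A, D_B, D_C$ are each diagonal unitaries on $m$ qubits (a tensor product of diagonal matrices is diagonal).

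Next, I would synthesize each of $D_A, D_B, D_C$ to operator-norm error $\eps/3$ by invoking \Cref{thm:diagonal_unitary_T-count}, which costs $O\pbra{\sqrt{2^m \log(1/\eps)}+\log(1/\eps)}$ $\Tgate$ gates (and the same number of ancillas) per diagonal unitary. The Hadamard layers $\Hgate^{\otimes m}$ are Clifford and contribute no $\Tgate$ gates. Summing the three contributions and absorbing the constant yields the claimed bound $O\pbra{\sqrt{2^m\log(1/\eps)}+\log(1/\eps)}$.

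For the error analysis, I would use the fact that the operator norm is sub-multiplicative and unitarily invariant: if $\tilde D_X$ is the actual Clifford+$\Tgate$ implementation of $D_X$ with $\|\tilde D_X - D_X\| \leq \eps/3$ for $X \in \{A,B,C\}$, then a standard triangle-inequality argument (inserting one approximation at a time into the product while keeping the other factors unitary) gives $\|\tilde U - U\| \leq \eps/3 + \eps/3 + \eps/3 = \eps$. This is entirely routine, so I do not anticipate any real obstacle; the content of the corollary is essentially the observation that the Euler decomposition converts batched single-qubit synthesis into diagonal $m$-qubit synthesis, to which \Cref{thm:diagonal_unitary_T-count} applies directly.
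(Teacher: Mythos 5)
Your proposal is correct and matches the paper's proof essentially exactly: both apply \Cref{fct:unitary_syn_n=1} factor-wise, regroup into $D_A\,\Hgate^{\otimes m}\,D_B\,\Hgate^{\otimes m}\,D_C$, and invoke \Cref{thm:diagonal_unitary_T-count} three times. Your explicit $\eps/3$ error bookkeeping is a minor elaboration the paper leaves implicit, but there is no substantive difference in the argument.
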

\begin{proof}
By \Cref{fct:unitary_syn_n=1}, each $U_j$ can be expanded as $A_j\Hgate B_j\Hgate C_j$ where $A_j,B_j,C_j$ are single-qubit diagonal unitaries. As a result, we have $U=A\Hgate^{\otimes m}B\Hgate^{\otimes m}C$, where $A=A_1\otimes\cdots\otimes A_m,B=B_1\otimes\cdots\otimes B_m,C=C_1\otimes\cdots\otimes C_m$ are $m$-qubit diagonal unitaries, which can each be implemented using \Cref{thm:diagonal_unitary_T-count} with the claimed number of $\Tgate$ gates and ancillas.
\end{proof}

\Cref{thm:batched_synthesis} then follows directly from \Cref{cor:multiple_single_T-count}.

\batchedsyn*

\begin{proof}
We divide the $m$ single-qubit unitaries into $K=O(1)$ groups of $\ceilbra{\log\log(1/\eps)}$ each.
It then suffices to implement each group up to error $\eps/K$. This is handled by \Cref{cor:multiple_single_T-count}.
\end{proof}

\subsection{Mass production}\label{sec:mass}

Next we consider the \textit{mass production} of a single-qubit unitary. Here the goal is to implement $U^{\otimes m}$ for some single-qubit unitary $U$. 

\massprod*

Before proving this, let us describe our high-level strategy. 
We can use \Cref{fct:unitary_syn_n=1} to reduce to the special case of implementing a diagonal single-qubit unitary $A:=\diag(1,e^{i\theta})$, by choosing the (irrelevant) global phase to be such that all 3 diagonal unitaries are of this form.
For any computational basis input $\ket{x}$ where $x\in\bin^m$, we have $A^{\otimes m}\ket x=e^{i\theta\cdot|x|}|x\rangle $, where $|x|$ is the Hamming weight of $x$. A natural approach is then to first compute the Hamming weight into a second register containing $\lceil \log m \rceil$ qubits, then apply the corresponding phase controlled on this register. The Hamming weight can be computed using $O(m)$ gates, as we discuss below. In the second step we need to implement a diagonal unitary on $\lceil \log m \rceil$ qubits. We can do this using \Cref{thm:diagonal_unitary_T-count}, incurring a $\Tgate$-count $O\pbra{\sqrt{m\log(1/\eps)}+\log(1/\eps)}=O\pbra{m+\log(1/\eps)}$. 

We start by showing how to compute the Hamming weight using $O(m)$ gates.

\begin{fact}\label{fct:hamming_weight_circuit}
Let $U_{\sf Ham}$ be the unitary such that
\begin{equation}
U_\Ham\ket{x}\ket{y}=\ket{x}\ket{y\oplus|x|}
\end{equation}
for all $x\in\bin^m$ and $y\in\bin^{\ceilbra{\log m}}$,
where $|x|\in\cbra{0,1,\ldots,m-1}$ is the Hamming weight of $x$.
Then $U_\Ham$ can be exactly prepared by Clifford+$\Tgate$ circuit using $O(m)$ total gates and ancillary qubits.
\end{fact}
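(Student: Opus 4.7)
The plan is to implement $U_\Ham$ by a standard tree-of-adders construction, together with Bennett-style uncomputation of all intermediate ancillas. Organize the $m$ input bits $x_1,\ldots,x_m$ as the leaves of a complete binary tree of depth $\ell=\ceilbra{\log m}$. To each internal node at depth $d\in\cbra{1,\ldots,\ell}$ (measured from the leaves), allocate $d+1$ fresh ancilla qubits that will hold the Hamming weight of the $2^d$ input bits in its subtree; this fits because a sum of $2^d$ bits is at most $2^d$. Proceeding from the leaves upward, fill each parent register by applying a ripple-carry adder that adds its two children (each a $d$-bit number) into the parent's $(d+1)$-bit register.

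A ripple-carry adder that sums two $d$-bit numbers into a $(d+1)$-bit register while preserving both addends can be built from $O(d)$ CNOTs and Toffolis, and each Toffoli has an exact Clifford$+\Tgate$ implementation using $O(1)$ gates. At depth $d$ there are $m/2^d$ such adders, each costing $O(d)$ gates, so the forward pass costs
\begin{equation}
\sum_{d=1}^{\ell}\frac{m}{2^d}\cdot O(d)\;=\;O(m)\cdot\sum_{d\ge1}\frac{d}{2^d}\;=\;O(m)
\end{equation}
gates, and an analogous sum bounds the number of ancillas by $O(m)$. After this pass, the root register holds $|x|$ in $O(\log m)$ bits. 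We then copy this value into the target register $y$ using $O(\log m)$ CNOTs (realizing $y\mapsto y\oplus|x|$), and finally run the entire forward pass in reverse so that every internal ancilla returns to $\ket0$. The reverse pass also costs $O(m)$ gates, yielding a total gate count and ancilla count of $O(m)$.

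The only nontrivial building block is the $O(d)$-gate ripple-carry adder, which is a standard construction; given that, the substantive content is simply the bookkeeping that $\sum_{d\ge1}d/2^d$ is a convergent series, so the per-level cost $O(dm/2^d)$ telescopes to $O(m)$. I anticipate no real obstacle here: the lemma is a straightforward assembly of well-known subroutines whose cost analysis is dominated by a single geometric sum.
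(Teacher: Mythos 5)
Your proof is correct and follows essentially the same approach as the paper: both build a binary tree of adders on the input bits, with the per-level cost $O(d\cdot m/2^d)$ summing to $O(m)$ via the geometric series. The paper describes a classical circuit and invokes the standard reversible conversion, whereas you spell out the Bennett-style compute--copy--uncompute pattern explicitly, but the substance is identical.
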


We also remark that \cite{boyar2005exact} gives the exact multiplicative complexity of computing the Hamming weight function, which upper bounds the $\Tgate$-count of implementing $U_{\sf Ham}$ (using the connection in \Cref{rmk:boolean}). Their construction has slightly lower T-count than \Cref{fct:hamming_weight_circuit}.

\begin{proof}
We describe a classical Boolean circuit with $O(m)$ gates to compute the Hamming weight of an $m$-bit string. Here the circuit contains AND, OR, NOT gates with bounded fan-in. Then we can convert this to a reversible classical circuit in the standard way using Toffoli gates and $O(m)$ ancilla qubits; implementing each Toffoli using $O(1)$ $\Tgate$ gates gives \Cref{fct:hamming_weight_circuit}.

Firstly note that given two $t$-bit numbers, we can compute their sum as a $(t+1)$-bit number using $O(t)$ gates, by the textbook algorithm sequentially computing each bit and the corresponding carry bit. The Boolean circuit to compute the Hamming weight will use the above observation iteratively for $t=1,2,\ldots,\log m$ as follows. We view the input as $m$ $1$-bit numbers. Then we partition it into $m/2$ pairs and compute the sum of each pair as a $2$-bit number. Continuing in this way, at the $t$-th stage we will have $m/2^t$  $t$-bit numbers.
The total gate complexity is 
\begin{equation}
\sum_{t=1}^{\ceilbra{\log m}}\frac m{2^t}\cdot O(t)=O(m).
\qedhere
\end{equation}
\end{proof}

Given \Cref{fct:hamming_weight_circuit}, we can now prove \Cref{thm:mass_production}. 

\begin{proof}[Proof of \Cref{thm:mass_production}]
By \Cref{fct:unitary_syn_n=1}, we obtain single-qubit diagonal unitaries $A,B,C$ such that $U=A\Hgate B\Hgate C$.
Thus $U^{\otimes m}=A^{\otimes m}\Hgate^{\otimes m}B^{\otimes m}\Hgate^{\otimes m}C^{\otimes m}$.
Therefore it suffices to show how to implement $A^{\otimes m}$.

By ignoring the (irrelevant) global phase, we assume $A=\diag(1,e^{i\theta})$.
Observe that $A^{\otimes m}\ket{x}=e^{i\cdot\theta\cdot|x|}\ket x$ for all $x\in\bin^m$.
Let $K=\ceilbra{\log(m)}$.
We implement $A^{\otimes m}$ as follows:
\begin{enumerate}
\item Start with $\ket{x}\ket{0^K}$. We apply $U_\Ham$ from \Cref{fct:hamming_weight_circuit} to obtain $\ket{x}\ket{|x|}$.

This uses $O(m)$ $\Tgate$ gates and ancillary qubits by \Cref{fct:hamming_weight_circuit}.
\item Define a $K$-qubit diagonal unitary $D$ such that $D\ket{c}=e^{i\cdot\theta\cdot c}\ket c$. We apply $D$ on the Hamming weight register and obtain $e^{i\cdot\theta\cdot|x|}\ket x\ket{|x|}$.

This uses $O\pbra{\sqrt{m\log(1/\eps)}+\log(1/\eps)}\le O(m+\log(1/\eps))$ $\Tgate$ gates and ancilla qubits by \Cref{thm:diagonal_unitary_T-count}.
\item Finally uncompute the Hamming weight by applying $U_\Ham$ from \Cref{fct:hamming_weight_circuit} again.

This again uses $O(m)$ $\Tgate$ gates and ancilla qubits by \Cref{fct:hamming_weight_circuit}.
\end{enumerate}

Finally we prove the tightness of \Cref{thm:mass_production}. On the one hand, the $\Omega(\log(1/\eps))$ $\Tgate$-count lower bound holds even if $m=1$ \cite{beverland2020lower}.
On the other hand, consider the special case where $U$ is the $\Tgate$ gate. Then $U^{\otimes m}=\Tgate^{\otimes m}$ and it is impossible to implement it using  $o(m)$ $\Tgate$ gates. Otherwise, by gate injection $\ket{\Tgate}^{\otimes m}$ can be approximated with $\Tgate$-count $o(m)$.
Plugging this in \Cref{thm:state_T-count}, we have that any $n$-qubit state can be approximated up to any constant error with $\Tgate$-count $o(\sqrt{2^n})$, contradicting the lower bound for \Cref{thm:state_T-count}.\footnote{The lower bound for \Cref{thm:state_T-count} holds in the presence of Pauli postselections, as needed here after gate injection. See details in \Cref{sec:lower_bounds}.}
\end{proof}

\section{Quantum state preparation with optimal T-count}\label{sec:state_T}

In this section we prove our main result, \Cref{thm:state_T-count} (restated below).

\stateT*

\begin{remark}
Our overall state preparation approach is similar to the one by Rosenthal \cite{rosenthal2024efficient}.
Motivated by the Aaronson-Kuperberg problem \cite{aaronson2007quantum,aaronson2016complexity}, \cite{rosenthal2024efficient} focuses on the number of Boolean oracle calls needed for efficient state preparation.
While the context is different, the $\Tgate$-count of the algorithm presented in \cite{rosenthal2024efficient} is near optimal.
To obtain our \Cref{thm:state_T-count}, we combine Rosenthal's algorithm with the crucial missing pieces \Cref{thm:diagonal_unitary_T-count} and \Cref{thm:batched_synthesis}.
Along the way, we also simplify and improve some of Rosenthal's analysis: our \Cref{lem:apx_state_syn_hd4} simplifies and improves \cite[Lemma 3.2]{rosenthal2024efficient}, which was implicit in \cite{irani2022quantum}; our \Cref{lem:state_circuit} improves the query complexity in \cite[Theorem 4.2]{rosenthal2024efficient}.
\end{remark}

We already know from \Cref{thm:diagonal_unitary_T-count} how to implement an arbitary diagonal unitary using the claimed number of $\Tgate$ gates and ancillas. For the state preparation task, this lets us specialize without loss of generality to the case where the target state has real amplitudes: after preparing a target state with real entries, we can  apply a diagonal unitary to apply any desired complex phases.

As a first step, we show how to synthesize a real-valued target state to within a constant approximation error.  

\begin{lemma}\label{lem:apx_state_syn_hd4}
Let $\ket\psi$ be an arbitrary $n$-qubit state with real amplitudes.
There exists an $n$-qubit state $\ket\phi$ with real amplitudes such that $\braket{\phi}{\psi}\ge\frac1{\sqrt2}$ and $\ket\phi=B_2\Hgate^{\otimes n}B_1\Hgate^{\otimes n}\ket{0^n}$ for some $n$-qubit Boolean phase oracles $B_1$ and $B_2$ (i.e., diagonal unitaries in which all diagonal entries are $\pm 1$).
\end{lemma}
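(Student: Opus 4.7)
The plan is to parameterize the two Boolean phase oracles as $B_1=\sum_{x\in\bin^n}\epsilon_x\ketbra{x}{x}$ and $B_2=\sum_{y\in\bin^n}\delta_y\ketbra{y}{y}$ with sign vectors $\epsilon,\delta\in\binpm^{2^n}$, and then optimize over both. A direct computation starting from $\ket{0^n}$ gives
\begin{equation*}
\ket\phi=\sum_y \delta_y\tilde\epsilon_y\ket y,\qquad\text{where } \tilde\epsilon_y:=\frac1{2^n}\sum_x(-1)^{x\cdot y}\epsilon_x,
\end{equation*}
so the amplitudes are automatically real. Writing $D_\delta:=\diag(\delta)$, the overlap rewrites as
\begin{equation*}
\braket\phi\psi=\sum_y\psi_y\delta_y\tilde\epsilon_y=\frac{1}{\sqrt{2^n}}\abra{\epsilon,\,\Hgate^{\otimes n}D_\delta\ket\psi}.
\end{equation*}
For any fixed $\delta$, taking $\epsilon_x$ to be the sign of the $x$-th entry of $\Hgate^{\otimes n}D_\delta\ket\psi$ yields $\max_\epsilon\braket\phi\psi=\vabs{\Hgate^{\otimes n}D_\delta\ket\psi}_1/\sqrt{2^n}$. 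Thus the lemma reduces to exhibiting a $\delta\in\binpm^{2^n}$ with $\vabs{\Hgate^{\otimes n}D_\delta\ket\psi}_1\ge\sqrt{2^{n-1}}$.

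I would produce such a $\delta$ by a first-moment argument with $\delta$ drawn uniformly from $\binpm^{2^n}$. For each fixed $x\in\bin^n$,
\begin{equation*}
\pbra{\Hgate^{\otimes n}D_\delta\ket\psi}_x=\sum_y\frac{(-1)^{x\cdot y}\psi_y}{\sqrt{2^n}}\,\delta_y
\end{equation*}
is a Rademacher sum whose coefficient vector has $\ell_2$-norm $\sqrt{\sum_y\psi_y^2/2^n}=1/\sqrt{2^n}$. Khintchine's inequality with the sharp constant $1/\sqrt2$ (due to Szarek) implies
\begin{equation*}
\E_\delta\abs{\pbra{\Hgate^{\otimes n}D_\delta\ket\psi}_x}\ge \frac{1}{\sqrt2}\cdot\frac{1}{\sqrt{2^n}}=\frac{1}{\sqrt{2^{n+1}}},
\end{equation*}
and summing over the $2^n$ choices of $x$ yields $\E_\delta\vabs{\Hgate^{\otimes n}D_\delta\ket\psi}_1\ge\sqrt{2^{n-1}}$. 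Hence some fixed $\delta$ achieves the bound, and the lemma follows.

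The only delicate step is the order of averaging: one must first optimize $\epsilon$, which collapses the objective into an $\ell_1$-norm of $\Hgate^{\otimes n}D_\delta\ket\psi$, and only then apply Khintchine entrywise to the $\delta$-dependence of this $\ell_1$-norm. Naive joint $L^2$ averaging over $\epsilon$ and $\delta$ gives only $\Omega(1/\sqrt{2^n})$, which is useless. Moreover, the constant $1/\sqrt2$ in the lemma statement matches Szarek's sharp Khintchine constant exactly, so this argument has essentially no slack.
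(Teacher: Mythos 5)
Your proposal is correct and is essentially the same argument as the paper's: you first optimize the sign vector $\epsilon$ (what the paper calls $B_1$) to collapse the objective into the $\ell_1$-norm $\|\Hgate^{\otimes n}D_\delta\ket\psi\|_1/\sqrt{2^n}$, and then apply Khintchine's inequality with the sharp constant $1/\sqrt2$ entrywise and average over $\delta$ to lower-bound that $\ell_1$-norm by $\sqrt{2^{n-1}}$ — the paper packages exactly this Khintchine/averaging step as its Lemma~\ref{lem:flattening_dh2} and then picks $B_1$ as the sign oracle of the flattened state. The only cosmetic difference is that you fully parameterize both oracles up front and make the order of optimization explicit, whereas the paper applies the flattening lemma first and then defines $B_1$; the content is identical.
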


Results similar to \Cref{lem:apx_state_syn_hd4} are obtained in \cite{bravyi2019simulation,irani2022quantum,rosenthal2024efficient}, but our analysis is arguably simpler and achieves better bounds.\footnote{The constant $1/\sqrt{2}$ in this lemma is optimal, as evidenced by the cat state $\ket{\psi}=\frac{1}{\sqrt{2}}(\ket{0^n}+\ket{1^n})$.} The proof of \Cref{lem:apx_state_syn_hd4} is given in \Cref{sec:apx_state_constant}.

To implement the Boolean phase oracles in \Cref{lem:apx_state_syn_hd4}, one can apply \Cref{lem:boolean_oracle} directly, along with the phase kickback trick \cite{cleve1998quantum}.
This is stated as the following \Cref{fct:boolean_phase}.

\begin{fact}\label{fct:boolean_phase}
Let $B$ be an $n$-qubit Boolean phase oracle.
Then $B$ can be prepared exactly by a Clifford+$\Tgate$ circuit using $O\pbra{\sqrt{2^n}}$ $\Tgate$ gates and ancillary qubits.\footnote{This construction uses $O(2^n)$ Clifford gates.}
\end{fact}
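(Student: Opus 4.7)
The plan is to reduce to \Cref{lem:boolean_oracle} using the standard phase kickback trick \cite{cleve1998quantum}. First, since $B$ is a diagonal unitary with every diagonal entry in $\pm 1$, there is a unique Boolean function $f\colon\bin^n\to\bin$ such that
\begin{equation}
B=\sum_{x\in\bin^n}(-1)^{f(x)}\ketbra{x}{x}.
\end{equation}
Our target is therefore to realize the phase $(-1)^{f(x)}$ on each computational basis input $\ket{x}$.

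Next, I would invoke \Cref{lem:boolean_oracle} with $b=1$ to obtain a Clifford+$\Tgate$ circuit that exactly implements $U_f\colon\ket{x}\ket{y}\mapsto\ket{x}\ket{y\oplus f(x)}$ using $O\pbra{\sqrt{2^n}}$ $\Tgate$ gates and ancillary qubits. The Boolean-oracle formulation is preferred because \Cref{lem:boolean_oracle} is stated in terms of the standard XOR-oracle form, not directly in terms of phases.

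To convert this into a phase oracle, I would append one extra ancilla qubit initialized to $\ket{0}$ and apply $\Xgate$ followed by $\Hgate$ to put it into $\ket{-}=\tfrac{1}{\sqrt{2}}(\ket{0}-\ket{1})$. Applying $U_f$ with this qubit in the target register gives
\begin{equation}
U_f\ket{x}\ket{-}=\tfrac{1}{\sqrt{2}}\pbra{\ket{x}\ket{f(x)}-\ket{x}\ket{1\oplus f(x)}}=(-1)^{f(x)}\ket{x}\ket{-},
\end{equation}
so the net action on the $n$-qubit register is precisely $B$. A final $\Hgate\Xgate$ on the ancilla restores it to $\ket{0}$ so that it is a clean ancilla. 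Since $\Xgate$, $\Hgate$ and the initialization are Clifford and free of $\Tgate$ gates, the total $\Tgate$-count is dominated by the single call to $U_f$, giving $O\pbra{\sqrt{2^n}}$.

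There is no real obstacle here, as the entire argument is a direct composition of \Cref{lem:boolean_oracle} with phase kickback; the only thing to be careful about is to phrase the ancilla initialization and uncomputation so that the ancillas are genuinely returned to $\ket{0}$ at the end (which is immediate since the $\ket{-}$ preparation is a unitary Clifford operation on a qubit initialized to $\ket{0}$). The $O(2^n)$ Clifford count claimed in the footnote follows directly from the Clifford count in \Cref{lem:boolean_oracle}, since only $O(1)$ additional Clifford gates are introduced by the kickback wrapper.
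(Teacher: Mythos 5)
Your proposal matches the paper's intended proof exactly: the text immediately preceding \Cref{fct:boolean_phase} states that it follows by combining \Cref{lem:boolean_oracle} with the phase kickback trick of \cite{cleve1998quantum}, which is precisely what you do. Your explicit verification of the kickback identity $U_f\ket{x}\ket{-}=(-1)^{f(x)}\ket{x}\ket{-}$ and the clean return of the ancilla to $\ket{0}$ is correct and simply fills in the routine details the paper leaves implicit.
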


As a result of \Cref{fct:boolean_phase}, we can prepare the approximation $\ket\phi$ from \Cref{lem:apx_state_syn_hd4} using only $O\pbra{\sqrt{2^n}}$ $\Tgate$ gates and ancilla qubits. We see that \Cref{lem:apx_state_syn_hd4} already establishes the special case of \Cref{thm:state_T-count} where the error $\eps$ is a sufficiently large constant.

To handle the small error case, we use the fact that \Cref{lem:apx_state_syn_hd4} can be used to approximate \emph{any} quantum state $\ket\psi$. Hence we can reduce the approximation error by approximating the difference state $\ket\psi-\ket\phi$. Continuing iteratively in this way, we are able to reduce the error below any target value, as in \cite{rosenthal2024efficient}. This idea is formalized in \Cref{lem:apx_coeff_simple}, which is proved in \Cref{sec:apx_state_eps}.

\begin{lemma}\label{lem:apx_coeff_simple}
Let $\ket\psi$ be an arbitrary $n$-qubit state with real amplitudes.
For any integer $T\ge1$, there exist $\zeta\in\sbra{\frac{\sqrt2-1}2,\frac1{\sqrt2}}$ and $n$-qubit states $\ket{\psi_0},\ket{\psi_1},\ldots,\ket{\psi_{T-1}}$ such that
\begin{equation}
\vabs{\ket\psi-\zeta\cdot\sum_{k=0}^{T-1}2^{-k/2}\cdot\ket{\psi_k}}\le\frac1{\sqrt2-1}\cdot2^{-T/4},
\end{equation}
where each $\ket{\psi_k}=B_2^{(k)}\Hgate^{\otimes n}B_1^{(k)}\Hgate^{\otimes n}\ket{0^n}$ for some $n$-qubit Boolean phase oracles $B_1^{(k)}$ and $B_2^{(k)}$.
\end{lemma}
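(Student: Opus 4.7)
The plan is a greedy iterative construction that fixes $\zeta=\frac{1}{\sqrt{2}}$ (the upper endpoint of the permitted range $\sbra{\frac{\sqrt{2}-1}{2},\frac{1}{\sqrt{2}}}$). The idea is the one hinted at just before the lemma statement: starting from the trivial approximation of $\ket\psi$, I would repeatedly peel off a coarse approximation of the current residual using \Cref{lem:apx_state_syn_hd4}, but with the \emph{fixed} coefficient $\zeta\cdot 2^{-k/2}$ at step $k$ (rather than the locally optimal Gram--Schmidt coefficient). This schedule tracks the geometric $1/\sqrt{2}$ decay per step that \Cref{lem:apx_state_syn_hd4} implicitly guarantees on the residual.

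Concretely, I would set $\ket{v_0}:=\ket\psi$ and, for $k=0,1,\ldots,T-1$: if $\vabs{\ket{v_k}}>0$, apply \Cref{lem:apx_state_syn_hd4} to the unit vector $\ket{v_k}/\vabs{\ket{v_k}}$ (which has real amplitudes by induction) to get $\ket{\psi_k}$ of the prescribed Boolean-phase-oracle form with $\braket{\psi_k}{v_k}\ge \vabs{\ket{v_k}}/\sqrt{2}$; if $\vabs{\ket{v_k}}=0$, pick any $\ket{\psi_k}$ of the required form (the inner product is then $0$). In either case, define $\ket{v_{k+1}}:=\ket{v_k}-\zeta\cdot 2^{-k/2}\ket{\psi_k}$. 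Telescoping yields $\ket{v_T}=\ket\psi-\zeta\sum_{k=0}^{T-1}2^{-k/2}\ket{\psi_k}$, so it remains to upper bound $\vabs{\ket{v_T}}$.

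The heart of the proof is the inductive claim $\vabs{\ket{v_k}}\le 2^{-k/2}$, with trivial base case $\vabs{\ket{v_0}}=1$. Setting $t:=\vabs{\ket{v_k}}$ and $a:=2^{-k/2}$, expand $\vabs{\ket{v_{k+1}}}^2=t^2-2\zeta a\braket{\psi_k}{v_k}+\zeta^2 a^2$ (the inner product is real since both states have real amplitudes) and use $\braket{\psi_k}{v_k}\ge t/\sqrt{2}$ together with $\zeta=1/\sqrt{2}$ to obtain $\vabs{\ket{v_{k+1}}}^2\le t^2-at+a^2/2$. Viewed as a quadratic in $t$, the right-hand side is maximized on the interval $[0,a]$ at both endpoints with common value $a^2/2=2^{-(k+1)}$, closing the induction.

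Substituting $k=T$ gives $\vabs{\ket{v_T}}\le 2^{-T/2}\le \frac{1}{\sqrt{2}-1}\cdot 2^{-T/4}$, since $\frac{1}{\sqrt{2}-1}=\sqrt{2}+1>1$ and $2^{-T/2}\le 2^{-T/4}$ for $T\ge 0$; the chosen $\zeta=\frac{1}{\sqrt{2}}$ lies in the required interval. The argument is structurally routine and I don't anticipate a serious obstacle: the only minor subtlety is the edge case $\vabs{\ket{v_k}}=0$ in which \Cref{lem:apx_state_syn_hd4} cannot be invoked directly, but this is handled seamlessly because the inner-product cross term in the recursion vanishes there on its own, giving $\vabs{\ket{v_{k+1}}}^2=\zeta^2 a^2=2^{-(k+1)}$ by direct computation.
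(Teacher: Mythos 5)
Your proof is correct. It follows the same core iteration as the paper's proof of this lemma (via its more general Lemma~\ref{lem:apx_coeff}): at step $k$, peel off the coarse approximation from Lemma~\ref{lem:apx_state_syn_hd4} of the normalized residual, scaled by $2^{-(k+1)/2}$; the inductive quadratic bound $\|\ket{v_{k+1}}\|^2\le t^2-at+a^2/2\le a^2/2$ on $t\in[0,a]$ is exactly the calculation in the paper. Where you differ is in the vanishing-residual edge case. The paper declares $j^*$ to be the first step where the residual is exactly zero and handles $T>j^*$ by cyclically re-indexing $\ket{\psi_j}:=\ket{\psi_{j\bmod j^*}}$; this forces a nontrivial renormalization constant $\zeta=\gamma(1-\beta^{j^*})/(1-\beta^{tj^*})$ (hence the interval in the lemma statement) and costs a factor of two in the exponent, producing the weaker $\frac1{\sqrt2-1}\cdot 2^{-T/4}$. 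You instead note that once the residual is zero you may pick \emph{any} state of the required Boolean-phase-oracle form and keep iterating; the cross term in the quadratic simply vanishes and the induction closes with the same bound, so $\zeta$ can be fixed at $1/\sqrt2$ and you get the cleaner (and strictly stronger) $2^{-T/2}$. In short, your treatment shows the paper's ``Case $T>j^*$'' complication is avoidable: defining $\ket{\psi_{j^*}}$ arbitrarily in the degenerate case is both simpler and yields a tighter estimate.
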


Following the strategy from \cite{rosenthal2024efficient}, we then use the linear combination of unitaries (LCU) method \cite{wiebe2012hamiltonian,berry2015hamiltonian} to prepare the weighted sum of states in \Cref{lem:apx_coeff_simple}, flagged by an additional ancilla. We carefully tune the amplitude of the flagged part of the state to enable the use of exact amplitude amplification \cite{grover1998quantum,brassard2002quantum} to get rid of the junk part and obtain the desired $\eps$ approximation. 

Our implementation of this strategy gives the following \Cref{lem:state_circuit}.

\begin{lemma}\label{lem:state_circuit}
Let $\eps\in(0,1/2]$ and $\ket\psi$ be an arbitrary $n$-qubit state with real amplitudes.
There exist some $t=\log\log(1/\eps)+O(1)$ and an $n$-qubit normalized state $\ket\phi$ such that $\vabs{\ket\phi-\ket\psi}\le\eps$ and $\ket{0^t}_\Acal\ket{\phi}_\Bcal=R_1R_2R_1R_2V\ket{0^{n+t}}_{\Acal\Bcal}$ where
\begin{gather}
R_1=V\pbra{2\ketbra{0^{n+t}}{0^{n+t}}-I}V^\dag,\\
R_2=\pbra{2\ketbra{0^t}{0^t}-I}_\Acal\otimes I_\Bcal,
\end{gather}
and $V$ is a quantum circuit with the following structure in order:
\begin{itemize}
\item one layer of single-qubit gates on $\Acal$;
\item Hadamard on $\Bcal$;
\item a Boolean phase oracle on $\Acal\Bcal$;
\item Hadamard on $\Bcal$;
\item another Boolean phase oracle on $\Acal\Bcal$;
\item another layer of single-qubit gates on $\Acal$.
\end{itemize}
\end{lemma}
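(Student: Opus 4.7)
The plan is to realize $V$ as a linear combination of unitaries (LCU) of the approximators supplied by \Cref{lem:apx_coeff_simple}, append a single ``filter'' qubit whose rotation angle serves as a tuning knob, and then apply exactly two Grover iterations. The structure $R_1R_2R_1R_2$ forces the amplitude on $\ket{0^t}_\Acal$ after $V$ to equal \emph{precisely} $\sin(\pi/10)$ so that the final state is a unit vector of the form $\ket{0^t}_\Acal\otimes\ket\phi_\Bcal$; the filter qubit is the mechanism that lets us hit this value on the nose.

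First I would apply \Cref{lem:apx_coeff_simple} with $T=2^{t-1}$ for $t=\log\log(1/\eps)+O(1)$ chosen large enough that $\frac{1}{\sqrt 2-1}2^{-T/4}\le\eps/4$; this produces $\zeta\in[\frac{\sqrt2-1}{2},\frac{1}{\sqrt2}]$ together with Boolean phase oracles $B_1^{(k)},B_2^{(k)}$. Split $\Acal=\Acal_{\rm fil}\otimes\Acal_{\rm idx}$ with $\Acal_{\rm fil}$ one filter qubit and $\Acal_{\rm idx}$ the $t-1$ index qubits. The first single-qubit layer $W_1$ applies a real $Y$-rotation of angle $2\beta$ (to be fixed) on $\Acal_{\rm fil}$ and prepares the coefficient state $\sum_k \sqrt{2^{-k/2}/s}\,\ket{k}$ on $\Acal_{\rm idx}$, where $s=\sum_{k=0}^{T-1}2^{-k/2}$; this is a product state since $2^{-k/4}=\prod_j 2^{-k_j 2^j/4}$ factorizes across the bits of $k$, so one layer of single-qubit gates suffices. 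Define the two Boolean phase oracles by $f_j(a,k,x)=(1-a)\cdot g_j^{(k)}(x)$ where $a$ is the filter bit and $B_j^{(k)}=\sum_x(-1)^{g_j^{(k)}(x)}\ketbra{x}{x}$; this realizes $\sum_k\ketbra{k}{k}\otimes B_j^{(k)}$ on the $a=0$ branch and acts trivially on the $a=1$ branch, so the middle block sends $\ket{k}_{\rm idx}\ket{0^n}_\Bcal \to \ket{k}_{\rm idx}\ket{\psi_k}_\Bcal$ when $a=0$ and leaves $\ket{0^n}_\Bcal$ intact when $a=1$ (since two Hadamards with trivial oracles in between compose to the identity on $\ket{0^n}$). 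Finally, let the second single-qubit layer $W_2$ act as the identity on $\Acal_{\rm fil}$ and as $W_{1,{\rm idx}}^\dagger$ on $\Acal_{\rm idx}$.

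A direct calculation --- crucially using that $W_2$ does not touch $\Acal_{\rm fil}$, so the $a=1$ branch contributes zero amplitude to $\ket{0}_{\rm fil}$ --- shows that the $\ket{0^t}_\Acal$ projection of $V\ket{0^{n+t}}$ equals $(\cos\beta/s)\sum_k 2^{-k/2}\ket{\psi_k}=(\cos\beta/(s\zeta))\ket{\tilde\psi}$, where $\ket{\tilde\psi}:=\zeta\sum_k 2^{-k/2}\ket{\psi_k}$ satisfies $\vabs{\ket{\tilde\psi}-\ket\psi}\le\eps/4$. Setting $\cos\beta:=s\zeta\sin(\pi/10)/\vabs{\ket{\tilde\psi}}$ is a legal rotation because the bounds $s\zeta\le\sqrt 2+1$, $\sin(\pi/10)<1/3$, and $\vabs{\ket{\tilde\psi}}\ge 1-\eps/4\ge 7/8$ together yield $s\zeta\sin(\pi/10)<5/6<7/8\le\vabs{\ket{\tilde\psi}}$, so $\cos\beta\in(0,1)$; with this choice the amplitude on $\ket{0^t}_\Acal$ is \emph{exactly} $\sin(\pi/10)$, so $R_1R_2R_1R_2\,V\ket{0^{n+t}}=\ket{0^t}_\Acal\ket\phi_\Bcal$ with $\ket\phi=\ket{\tilde\psi}/\vabs{\ket{\tilde\psi}}$. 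The standard renormalization estimate $\vabs{\ket\phi-\ket\psi}\le 2\vabs{\ket{\tilde\psi}-\ket\psi}/\vabs{\ket{\tilde\psi}}\le 4\eps/7\le\eps$ finishes the bound. The hardest part is this \emph{exactness} requirement --- because the lemma is stated as an equality, the amplitude after $V$ must be $\sin(\pi/10)$ on the nose --- and tuning to this value while respecting both the single-qubit-layer constraint on $\Acal$ and the Boolean-phase-oracle-on-$\Acal\Bcal$ constraint is precisely what motivates the filter-qubit design, while the product form of the LCU coefficients $2^{-k/4}$ is what lets $W_1$ fit into one layer of single-qubit gates in the first place.
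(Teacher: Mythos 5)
Your proof is correct, and it follows the same big-picture strategy as the paper (LCU to build the linear combination from \Cref{lem:apx_coeff_simple}, followed by two rounds of exact amplitude amplification with the key threshold tuned to exactly $\sin(\pi/10)$). Your bookkeeping — the factorization $2^{-k/4}=\prod_j 2^{-k_j 2^j/4}$ making $W_{1,\mathrm{idx}}$ a genuine single-qubit layer, the controlled oracles $f_j(a,k,x)=(1-a)g_j^{(k)}(x)$ being valid Boolean phase oracles on $\Acal\Bcal$, the numerics $s\zeta\le\sqrt 2+1$, $\sin(\pi/10)<1/3$, $\|\ket{\tilde\psi}\|\ge 7/8$ guaranteeing $\cos\beta\in(0,1)$, and the renormalization estimate — all check out.

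The one place you diverge meaningfully from the paper is the \emph{mechanism} for hitting $\sin(\pi/10)$ on the nose. The paper (\Cref{lem:coarse_apx_flag} plus \Cref{cor:exact_apx_flag}) first runs the plain LCU over $T=2^t$ indices and shows the flag amplitude is some unknown $\xi\ge 0.33$; it then bolts on a \emph{passive} extra ancilla in $\Acal$, untouched by the oracles and rotated only in the final single-qubit layer by a gate $G$ satisfying $\langle 0|G|0\rangle=\sin(\pi/10)/\xi$. Your construction instead makes the extra qubit \emph{active}: it is rotated in the first layer, it controls both Boolean phase oracles (suppressing the LCU on the $a=1$ branch), and $W_2$ ignores it. Both routes give a legal single-qubit tuning knob without violating the structure constraint, and both lead to the same final equality. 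Your version has the nice feature that the $a=1$ branch never touches $\Bcal$, so the ``bad'' subspace is transparently orthogonal to $\ket{0^t}_\Acal$; the paper's version is arguably a hair cleaner to present because $\xi$ is computed once (without any dependence on $\ket{\tilde\psi}$'s exact norm) and the extra qubit never interacts with anything. Either way, this is a minor presentational choice rather than a substantially different proof.
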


The proof of \Cref{lem:state_circuit} is deferred to \Cref{sec:flag_and_amplify}. We now prove \Cref{thm:state_T-count}.

\begin{proof}[Proof of \Cref{thm:state_T-count}]
Using \Cref{thm:diagonal_unitary_T-count} to correct phases, we assume without loss of generality that the target state $\ket\psi$ has real amplitudes.
Then it suffices to analyze the $\Tgate$-count and ancilla count for the circuit in \Cref{lem:state_circuit}.
To this end, we analyze the components separately, each of which appears constant number of times in the whole circuit in \Cref{lem:state_circuit}. Assume $\eps\le1/2$.
\begin{itemize}
\item The reflection $2\ketbra{0^{n+t}}{0^{n+t}}-I$ used in $R_1$ is a Boolean phase oracle and can be constructed by \Cref{fct:boolean_phase} with $O\pbra{\sqrt{2^{n+t}}}=O\pbra{\sqrt{2^n\log(1/\eps)}}$ $\Tgate$ gates and ancillas.\footnote{This reflection can be implemented with much lower cost. We use \Cref{lem:boolean_oracle} as a loose upper bound since it does not affect the asymptotic bound in the end.}
\item $R_2$ is the same reflection (with a smaller scale) as above, which is also handled by \Cref{fct:boolean_phase}.
\item The Hadamard layer used in $V$ is Clifford and requires no $\Tgate$ gates or ancilla.
\item The Boolean phase oracle on $\Acal\Bcal$ used in $V$ can be again prepared using \Cref{fct:boolean_phase} with $O\pbra{\sqrt{2^{n+t}}}=O\pbra{\sqrt{2^n\log(1/\eps)}}$ $\Tgate$ gates and ancillas.
\item The layer of single-qubit gates on $\Acal$ used in $V$ can be synthesized by \Cref{thm:batched_synthesis} since $\Acal$ only has $t=\log\log(1/\eps)+O(1)$ qubits. This uses $O\pbra{\log(1/\eps)}$ $\Tgate$ gates and ancillas by \Cref{thm:batched_synthesis}.
The approximation error does not blow up since this is only used a constant number of times in total.
\end{itemize}
Summing over costs of the above components establishes \Cref{thm:state_T-count}.
The tightness of \Cref{thm:state_T-count} is deferred as \Cref{thm:state_adaptive_lb} to be proved in \Cref{sec:lower_bounds}.
\end{proof}

\subsection{Approximating a state to constant error: \texorpdfstring{\Cref{lem:apx_state_syn_hd4}}{Lemma \ref*{lem:apx_state_syn_hd4}}}\label{sec:apx_state_constant}

In this section we describe how to approximate a state to constant error (\Cref{lem:apx_state_syn_hd4}) and in the next section we discuss the finer approximation (\Cref{lem:apx_coeff_simple}).

To get the coarse approximation of \Cref{lem:apx_state_syn_hd4}, we take our target state and apply a random diagonal unitary with $\pm 1$ entries followed by a binary Fourier transform (i.e., the Hadamard matrix).  This does a reasonable job of flattening the amplitudes of the state, by which we mean that most of the amplitudes are now of similar magnitude. We then apply a diagonal unitary to make all the phases $+1$, which gives a state that has constant overlap with the uniform superposition $|+^n\rangle$. To implement this strategy we use the standard Khintchine inequality.

\begin{fact}[Khintchine inequality \cite{haagerup1981best,wiki:Khintchine_inequality}] \label{fct:khintchine}
Let $N\ge1$ be an integer and $\beta_1,\ldots,\beta_N$ be arbitrary complex numbers satisfying $\sum_{j\in[N]}|\beta_j|^2=1$.
Then
\begin{equation}
\frac1{\sqrt2}\le\E_{x\sim\binpm^N}\bigg[\bigg|\sum_{j\in[N]}\beta_j\cdot x_j\bigg|\bigg]\le1.
\end{equation}
\end{fact}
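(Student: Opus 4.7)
The plan is to split the Fact into its upper bound $\E|S|\le 1$ and its sharp lower bound $\E|S|\ge 1/\sqrt{2}$, where $S:=\sum_j\beta_j x_j$.

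The upper bound is the easy half. A direct calculation gives $\E_x|S|^2=\sum_{j,k}\beta_j\bar\beta_k\E[x_jx_k]=\sum_j|\beta_j|^2=1$, using independence of the Rademacher variables. Jensen applied to the concave map $t\mapsto\sqrt t$ then yields $\E|S|\le\sqrt{\E|S|^2}=1$.

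The sharp lower bound is the Khintchine--Szarek--Haagerup theorem; I would attack it in two stages.

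\emph{Real case.} For real coefficients $a_1,\ldots,a_N\in\Rbb$ with $\sum a_j^2=1$, I would use the Fourier/integral representation
\[
|y|=\frac{2}{\pi}\int_0^\infty\frac{1-\cos(ty)}{t^2}\,dt
\]
to rewrite $\E|S|=\tfrac{2}{\pi}\int_0^\infty t^{-2}\bigl(1-\prod_j\cos(ta_j)\bigr)\,dt$. Following Haagerup, I would lower-bound the cosine product by an appropriately chosen Gaussian surrogate $e^{-ct^2}$ and evaluate the resulting integral; this yields the sharp constant $1/\sqrt{2}$, with equality when $a_1=a_2=1/\sqrt{2}$ and all other $a_j=0$ (for which $|S|\in\{0,\sqrt{2}\}$ each with probability $1/2$).

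\emph{Reduction complex $\to$ real.} For complex $\beta_j$, I would exploit the planar-average identity
\[
|z|=\tfrac14\int_0^{2\pi}|\mathrm{Re}(ze^{i\theta})|\,d\theta,
\]
which is immediate from $\int_0^{2\pi}|\cos u|\,du=4$. Pulling the expectation inside,
\[
\E_x|S|=\tfrac14\int_0^{2\pi}\E_x\Bigl|\sum_j\mathrm{Re}(\beta_je^{i\theta})\,x_j\Bigr|\,d\theta,
\]
and the inner expectation is a real Rademacher sum to which the real-case bound from the previous step applies pointwise in $\theta$. Combined with the identity
\[
\sum_j\mathrm{Re}(\beta_je^{i\theta})^2=\tfrac12\Bigl(1+\mathrm{Re}\Bigl(e^{2i\theta}\sum_j\beta_j^2\Bigr)\Bigr),
\]
the target inequality collapses to the one-variable claim $\int_0^{2\pi}\sqrt{1+r\cos\psi}\,d\psi\ge 4\sqrt{2}$ for $r:=|\sum_j\beta_j^2|\in[0,1]$. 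A short derivative calculation (splitting the integral at $\psi=\pi/2$ and matching by the substitution $\psi\mapsto\pi-\psi$) shows the left-hand side is monotone decreasing in $r$ on $[0,1]$, with minimum value $4\sqrt{2}$ attained at $r=1$ (where the integrand reduces to $\sqrt{2}\,|\cos(\psi/2)|$).

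The main obstacle is the real case: pinning down the tight constant $1/\sqrt{2}$ requires Haagerup's Gaussian-domination argument (or Szarek's earlier inductive/combinatorial proof), not merely the trivial second- or fourth-moment calculation, which would give only $\E|S|\ge 1/\sqrt{3}$ (via $\E|S|^4\le 3$ and the Cauchy--Schwarz chain $1=\E|S|^2\le(\E|S|)^{2/3}(\E|S|^4)^{1/3}$). The complex-to-real reduction above is then a clean elliptic-integral monotonicity check once the sharp real case is in hand.
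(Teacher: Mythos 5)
The paper does not prove this statement; it is presented as a cited fact (Haagerup 1981), so there is no in-paper proof to compare against. Your reconstruction is correct in structure and detail. The upper bound via $\E|S|^2=1$ and Jensen is exactly right. For the lower bound, you correctly identify that the hard kernel is the sharp real constant $1/\sqrt2$, which genuinely requires Haagerup's Gaussian-domination argument (or Szarek's earlier proof) and cannot be recovered from low moments alone; your observation that the Paley--Zygmund/H\"older chain through $\E|S|^4\le 3$ only yields $1/\sqrt3$ is accurate. Your complex-to-real reduction via the planar average $|z|=\frac14\int_0^{2\pi}|\Re(ze^{i\theta})|\,d\theta$ is valid, and I verified the identity $\sum_j\Re(\beta_je^{i\theta})^2=\frac12\bigl(1+\Re(e^{2i\theta}\sum_j\beta_j^2)\bigr)$. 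After writing $\sum_j\beta_j^2=re^{i\alpha}$ and absorbing the phase by the substitution $\psi=2\theta+\alpha$ (which traverses two full periods, contributing a factor of two that exactly cancels the $d\psi/2$), the problem does reduce to showing $\int_0^{2\pi}\sqrt{1+r\cos\psi}\,d\psi\ge 4\sqrt2$ for $r\in[0,1]$. The monotonicity is correct: pairing $\psi$ with $\pi-\psi$ on $[0,\pi]$ gives
\begin{equation*}
f'(r)=\int_0^{\pi/2}\cos\psi\left(\frac{1}{\sqrt{1+r\cos\psi}}-\frac{1}{\sqrt{1-r\cos\psi}}\right)d\psi\le 0,
\end{equation*}
and at $r=1$ the integrand becomes $\sqrt2\,|\cos(\psi/2)|$, giving exactly $4\sqrt2$. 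In short: your proof is sound modulo the deferral of the sharp real Khintchine constant to Haagerup/Szarek, which is the same deferral the paper itself makes; the complex-to-real reduction is a nice, correct addition that the paper's citation glosses over.
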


Using \Cref{fct:khintchine}, we show how to flatten the amplitudes of a quantum state.

\begin{lemma}\label{lem:flattening_dh2}
Let $\ket\psi$ be an arbitrary $n$-qubit state. Let $F= H^{\otimes n}$ or more generally, let $F$ be any $n$-qubit unitary such that $|F_{ij}|=1/\sqrt{2^n}$ for $i,j\in \bin^n$. Then
\begin{equation}
\E_{x\sim\binpm^{\bin^n}}\sbra{\vabs{F ~ \diag(x)\ket\psi}_1}\ge\frac{\sqrt{2^n}}{\sqrt2}.
\end{equation}
\end{lemma}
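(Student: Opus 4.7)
The plan is to expand the $\ell_1$ norm coordinate-wise and apply the Khintchine inequality (\Cref{fct:khintchine}) row by row.

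First I would write $\ket\psi=\sum_{j\in\bin^n}\alpha_j\ket j$ with $\sum_j|\alpha_j|^2=1$, so that the $i$-th amplitude of $F\diag(x)\ket\psi$ is $\sum_{j}F_{ij}x_j\alpha_j$. Then by definition
\begin{equation}
\vabs{F\diag(x)\ket\psi}_1=\sum_{i\in\bin^n}\abs{\sum_{j\in\bin^n}F_{ij}x_j\alpha_j}.
\end{equation}
Taking expectation over $x\sim\binpm^{\bin^n}$ and swapping sum and expectation reduces the task to lower bounding $\E_x\sbra{|\sum_j F_{ij}x_j\alpha_j|}$ for each fixed $i$.

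Next, for each fixed $i$, I would set $\beta_{i,j}=\sqrt{2^n}\,F_{ij}\alpha_j$. The flatness assumption $|F_{ij}|=1/\sqrt{2^n}$ gives $|\beta_{i,j}|^2=|\alpha_j|^2$, hence $\sum_{j}|\beta_{i,j}|^2=1$, so the Khintchine inequality applies and yields $\E_x\sbra{|\sum_j\beta_{i,j}x_j|}\ge 1/\sqrt2$. Rescaling by $1/\sqrt{2^n}$ gives $\E_x\sbra{|\sum_j F_{ij}x_j\alpha_j|}\ge 1/(\sqrt2\cdot\sqrt{2^n})$. Summing this bound over the $2^n$ choices of $i$ immediately produces the claimed $\sqrt{2^n}/\sqrt2$.

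There is no real obstacle here: the only ingredient beyond bookkeeping is the lower bound in Khintchine, and the unbiased-unitary hypothesis $|F_{ij}|=1/\sqrt{2^n}$ is exactly what makes each row of $F\diag(x)\ket\psi$, after renormalization, a unit-norm linear combination to which Khintchine applies with the same constant independent of $i$. The only place one has to be slightly careful is keeping track of the $\sqrt{2^n}$ factor from the normalization and confirming that the Khintchine constant $1/\sqrt2$ survives the sum over $i$ to give the stated bound.
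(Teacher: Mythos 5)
Your proof is correct and follows essentially the same route as the paper: expand the $\ell_1$ norm into rows, use linearity of expectation, renormalize each row so the unbiased-unitary condition $|F_{ij}|=1/\sqrt{2^n}$ makes the coefficients unit $\ell_2$-norm, and then invoke Khintchine row by row before summing. The only difference from the paper's writeup is notational (you write $\beta_{i,j}=\sqrt{2^n}F_{ij}\alpha_j$ where the paper writes $f_{ij}\beta_j$ with $F_{ij}=f_{ij}/\sqrt{2^n}$).
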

\begin{proof} 
Write $\ket\psi=\sum_{j\in\bin^n}\beta_j\ket j$ with $\sum_{j\in \bin^n}|\beta_j|^2=1$. Then 
\begin{align}
\E_{x}\sbra{\vabs{F ~ \diag(x)\ket\psi}_1}
&= \E_{x}\biggl[\biggl\|\sum_{i,j}F_{ij}x_j\beta_j\ket i\biggr\|_1 \biggr]\\
&= \E_{x}\biggl[ \sum_{i} \biggl|\sum_{j} F_{ij}x_j\beta_j \biggr| \biggr]
= \sum_{i} \E_{x}\biggl[\biggl|\sum_{j} F_{ij}x_j\beta_j \biggr| \biggr],
\end{align}
where the last equality used the linearity of expectation. By assumption, 
$F_{ij}=f_{ij}/\sqrt{2^n}$, where $f_{ij}$ is a complex number of unit modulus. 
Then $f_{ij}\beta_j$ satisfies the assumption of \Cref{fct:khintchine} since $\sum_{j\in \bin^n}|f_{ij}\beta_j|^2=1$, and hence 
\begin{align}
\E_{x}\sbra{\vabs{F ~ \diag(x)\ket\psi}_1}
&= \frac{1}{\sqrt{2^n}} \sum_{i} \E_{x} \biggl[\biggl|\sum_{j} f_{ij}x_j\beta_j \biggr| \biggr]
\geq \frac{1}{\sqrt{2}} \sum_{i} \frac{1}{\sqrt{2^n}}
= \frac{\sqrt{2^n}}{\sqrt2}
\end{align}
as claimed.
\end{proof}

After flattening, the state becomes essentially a uniform superposition with different phases. 
Moreover, in the case of interest where $\ket\psi$ is assumed to have real amplitudes, these phases are simply $\pm1$.
Then we can correct the Boolean phases and reverse the flattening procedure to obtain an approximation of the original state $\ket\psi$.

\begin{proof}[Proof of \Cref{lem:apx_state_syn_hd4}]
Let $\ket\psi$ be an $n$-qubit state with real amplitudes. By \Cref{lem:flattening_dh2}, there exists a Boolean phase oracle $B_2$ such that $|\tilde\psi\rangle:=\Hgate^{\otimes n}B_2\ket\psi$ has $\ell_1$ norm 
\begin{equation}
\|| {\tilde\psi} \rangle\|_1\ge\frac{\sqrt{2^n}}{\sqrt2}.
\end{equation}
Since $\ket\psi$ has real amplitudes, so does $|\tilde{\psi}\rangle$.
Let $B_1$ be the Boolean phase oracle such that $\langle x|B_1|x\rangle=\mathrm{sign}(\langle x|\tilde{\psi}\rangle)$ for all $x\in \{0,1\}^n$.

Letting $\ket\phi=B_2\Hgate^{\otimes n}B_1\Hgate^{\otimes n}\ket{0^n}$, we have
\begin{align}
\braket{\psi}{\phi}
&=\bra\psi B_2\Hgate^{\otimes n}B_1\Hgate^{\otimes n}\ket0
=\langle\tilde{\psi}|B_1\Hgate^{\otimes n}\ket0
=\frac1{\sqrt{2^n}}\cdot\||\tilde\psi\rangle\|_1
\ge\frac1{\sqrt2},
\end{align}
as desired.
\end{proof}

\subsection{Approximating a state to \texorpdfstring{$\eps$}{eps} error: \texorpdfstring{\Cref{lem:apx_coeff_simple}}{Lemma \ref*{lem:apx_coeff_simple}}}\label{sec:apx_state_eps}

Now we discuss how to reduce the approximation error. We prove the following statement and show that  \Cref{lem:apx_coeff_simple} follows from it. 
The proof follows the reasoning from \cite[Lemma 3.1]{rosenthal2024efficient}, but completes an edge case (see ``Case $T>j^*$'' in the proof) missed in the previous analysis.

\begin{lemma}\label{lem:apx_coeff}
Let $\alpha\in(0,1]$ be a real number.
For any $n$-qubit state $\ket\phi$, let $U_\phi$ be an $n$-qubit unitary that satisfies $\Re\bra\phi U_\phi\ket{0^n}\ge\alpha$.\footnote{For a complex number $v$, $\Re v$ equals its real part.}
Define 
\begin{equation}
\gamma=\min\cbra{\alpha,\frac1{\sqrt2}}
\quad\text{and}\quad
\beta=\sqrt{1-\gamma^2}.
\end{equation}

Then for every $n$-qubit state $\ket\psi$ and integer $T\ge1$, there exist $\gamma\cdot(1-\beta)\le\zeta\le\gamma$ and $\ket{\psi_0},\ket{\psi_1},\ldots,\ket{\psi_{T-1}}$ such that
\begin{align}\label{eq:lem:apx_coeff_1}
\vabs{\ket\psi-\zeta\cdot\sum_{k=0}^{T-1}\beta^k\cdot U_{\psi_k}\ket{0^n}}\le\frac\gamma{1-\beta}\cdot\beta^{T/2}.
\end{align}
\end{lemma}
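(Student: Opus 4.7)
The plan is to prove this by greedy iteration. Initialize $r_0=\ket\psi$. For each step $k=0,1,\ldots,T-1$ with $r_k\ne 0$, set $\ket{\psi_k}=r_k/\|r_k\|$, so that the defining property of $U_\phi$ gives $\Re\bra{\psi_k}U_{\psi_k}\ket{0^n}\ge\alpha\ge\gamma$, and update $r_{k+1}=r_k-\zeta\beta^k U_{\psi_k}\ket{0^n}$. Telescoping yields $\ket\psi-\zeta\sum_{k=0}^{T-1}\beta^k U_{\psi_k}\ket{0^n}=r_T$, so the task reduces to bounding the residual norm $a_k:=\|r_k\|$.

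Writing $c_k=\bra{\psi_k}U_{\psi_k}\ket{0^n}$ with $\Re c_k\ge\gamma$, direct expansion of the update gives the recursion
\[
a_{k+1}^2=a_k^2-2\zeta\beta^k a_k\Re c_k+\zeta^2\beta^{2k}.
\]
Taking $\zeta=\gamma$ as a baseline, the inequality $a_{k+1}^2\le\beta^{2k+2}$ is a quadratic inequality in $a_k$; using $\gamma^2+\beta^2=1$, its positive root is exactly $\beta^k$ in the worst case $\Re c_k=\gamma$ and larger otherwise, so the solution set contains $[0,\beta^k]$ for every admissible $\Re c_k$. Starting from $a_0=1=\beta^0$, induction then gives $a_k\le\beta^k$ throughout the iteration, in particular $a_T\le\beta^T$, comfortably within the target bound $\gamma\beta^{T/2}/(1-\beta)$.

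The main obstacle is the ``Case $T>j^*$'' subtlety flagged in the excerpt: the greedy iteration may terminate early, meaning $r_{j^*}=0$ for some $j^*<T$, which leaves $\ket{\psi_k}$ undefined at all later steps. Inspection of the recursion shows this can occur only in the degenerate situation $a_{j^*-1}=\zeta\beta^{j^*-1}$ together with $U_{\psi_{j^*-1}}\ket{0^n}=\ket{\psi_{j^*-1}}$, which pins $\zeta$ to a discrete set. The plan is to exploit the flexibility $\zeta\in[\gamma(1-\beta),\gamma]$ promised by the lemma. If $j^*\ge T/2$, I fill the remaining $T-j^*$ slots with arbitrary unit states and absorb their contribution via
\[
\Bigl\|\zeta\sum_{k=j^*}^{T-1}\beta^k U_{\psi_k}\ket{0^n}\Bigr\|\le\frac{\zeta\beta^{j^*}}{1-\beta}\le\frac{\gamma\beta^{T/2}}{1-\beta},
\]
which is the source of the $1/(1-\beta)$ factor in the final bound. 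If $j^*<T/2$, I perturb $\zeta$ downward inside $[\gamma(1-\beta),\gamma]$; since each per-step extraction then shrinks continuously with $\zeta$, the termination index is pushed past $T/2$ (or eliminated entirely), reducing to the previous case. This two-subcase treatment is precisely what was missing from earlier analyses.
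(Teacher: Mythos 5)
Your main iteration (with $\zeta=\gamma$) and the inductive bound $a_k\le\beta^k$ match the paper's argument, and your handling of early termination with $j^*\ge T/2$ is a valid alternative to the paper's: since $r_{j^*}=0$ gives $\ket\psi = \gamma\sum_{k<j^*}\beta^k U_{\psi_k}\ket{0^n}$ exactly, padding with arbitrary unit states and bounding the tail geometrically does deliver $\gamma\beta^{j^*}/(1-\beta)\le\gamma\beta^{T/2}/(1-\beta)$.

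The gap is in the subcase $j^*<T/2$. Your plan relies on perturbing $\zeta$ below $\gamma$, but your own quadratic analysis shows this is fatal: the worst-case recursion is $a_{k+1}^2\le a_k^2-2\zeta\gamma\beta^k a_k+\zeta^2\beta^{2k}$, and requiring the induction hypothesis $a_k\le\beta^k$ to propagate (i.e.\ $a_{k+1}\le\beta^{k+1}$ for the worst endpoint $a_k=\beta^k$) reduces to $\gamma^2-2\zeta\gamma+\zeta^2\le0$, that is $(\gamma-\zeta)^2\le0$, which forces $\zeta=\gamma$. Writing $b_k=a_k/\beta^k$, the worst-case map $b_{k+1}^2=(b_k-\zeta\gamma)^2/\beta^2+\zeta^2$ satisfies $b_{k+1}\ge b_k$ with equality only at the fixed point $b=\zeta/\gamma$; starting from $b_0=1>\zeta/\gamma$ when $\zeta<\gamma$, the normalized residual is nondecreasing, so $a_T$ need not be $O(\beta^{T/2})$ even if termination is avoided. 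Also, the assertion that lowering $\zeta$ ``pushes $j^*$ past $T/2$'' is asserted rather than argued: the iterates $\ket{\psi_k}$ themselves change with $\zeta$, so the termination index has no obvious monotone or even continuous dependence on $\zeta$.

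The paper sidesteps both issues by never changing the greedy parameter: it keeps $\gamma$ in the recursion so the bound $\|\ket{\tilde\psi_j}\|\le\beta^j$ survives up to $j=j^*$, and then \emph{reuses} the early termination identity $\ket\psi=\gamma\sum_{k<j^*}\beta^k U_{\psi_k}\ket{0^n}$ by extending $\ket{\psi_j}:=\ket{\psi_{j\bmod j^*}}$ periodically. The full sum $\sum_{k<T}\beta^k U_{\psi_k}\ket{0^n}$ then collapses into $\tfrac{1-\beta^{tj^*}}{\gamma(1-\beta^{j^*})}\ket\psi$ plus a short tail (with $t=\lfloor T/j^*\rfloor$), and $\zeta=\gamma(1-\beta^{j^*})/(1-\beta^{tj^*})$ is chosen \emph{after the fact} to make the $\ket\psi$-coefficient exactly $1$; the leftover tail has norm at most $\gamma\beta^{tj^*}/(1-\beta)\le\gamma\beta^{T/2}/(1-\beta)$, using $\lfloor T/j^*\rfloor j^*\ge T/2$. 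This handles all $1\le j^*<T$ uniformly, whereas your padding argument only covers $j^*\ge T/2$.
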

\begin{proof}
Since $\Re\bra\phi U_\phi\ket{0^n}\ge\alpha\ge\gamma$, we have
\begin{align}\label{eq:lem:apx_coeff_2}
\vabs{\ket\phi-\gamma\cdot U_\phi\ket{0^n}}
&=\sqrt{1+\gamma^2-2\gamma\cdot\Re\bra\phi U_\phi\ket{0^n}}
\le\sqrt{1-\gamma^2}
=\beta.
\end{align}
Since $\gamma>0$, we know $\beta<1$.
Thus our task of approximating $\ket\psi$ is now reduced to that of approximating $\ket\psi-\gamma\cdot U_\psi\ket{0^n}$ which has a smaller length.
We will show that \Cref{eq:lem:apx_coeff_1} follows by iteratively applying \Cref{eq:lem:apx_coeff_2}.

Formally, define $\ket{\psi_0}=|\tilde\psi_0\rangle=\ket\psi$ and recursively for $j\ge1$
\begin{equation}
\ket{\tilde\psi_j}=\ket\psi-\gamma\cdot\sum_{k=0}^{j-1}\beta^k\cdot U_{\psi_k}\ket{0^n}
\end{equation}
and
\begin{equation}
\ket{\psi_j}=\frac{\ket{\tilde\psi_j}}{\vabs{\ket{\tilde\psi_j}}}
\end{equation}
until some $j^*\ge1$ for which $\ket{\tilde\psi_{j^*}}$ becomes a zero vector.
Set $j^*=+\infty$ if $\ket{\tilde\psi_j}$ is never zero.

\paragraph{Case $T\le j^*$.}
We first handle the case where $T\le j^*$.
Below we give an inductive proof that
\begin{align}\label{eq:lem:apx_coeff_3}
\vabs{\ket{\tilde\psi_j}}\le\beta^j
\end{align}
for each $1\le j\le j^*$. 
Then we simply set $\zeta=\gamma$ and \Cref{eq:lem:apx_coeff_1} follows from \Cref{eq:lem:apx_coeff_3} as $\gamma,\beta\in[0,1]$.

To establish \Cref{eq:lem:apx_coeff_3}, first note that the base case $j=1$ follows directly from \Cref{eq:lem:apx_coeff_2}. For $j\ge2$, we have
\begin{align}
\vabs{\ket{\tilde\psi_j}}^2
&=\vabs{\ket{\tilde\psi_{j-1}}-\gamma\beta^{j-1}\cdot U_{\psi_{j-1}}\ket{0^n}}^2\\
&=\vabs{\vabs{\ket{\tilde\psi_{j-1}}}\cdot\ket{\psi_{j-1}}-\gamma\beta^{j-1}\cdot U_{\psi_{j-1}}\ket{0^n}}^2\\
&=\vabs{\ket{\tilde\psi_{j-1}}}^2+\gamma^2\beta^{2j-2}
-2\gamma\beta^{j-1}\vabs{\ket{\tilde\psi_{j-1}}}\cdot\Re\bra{\psi_{j-1}}U_{\psi_{j-1}}\ket{0^n}\\
&\le\vabs{\ket{\tilde\psi_{j-1}}}^2+\gamma^2\beta^{2j-2}-2\gamma^2\beta^{j-1}\vabs{\ket{\tilde\psi_{j-1}}}
\tag{since $\Re\bra{\psi_{j-1}}U_{\psi_{j-1}}\ket{0^n}\ge\gamma$}.
\end{align}
Since $\gamma\le\frac1{\sqrt2}$ and $\beta=\sqrt{1-\gamma^2}$, we know $\gamma\le\beta$. 
Then by the induction hypothesis we have
\begin{align}
\vabs{\ket{\tilde\psi_j}}^2
&\le\max_{0\le x\le\beta^{j-1}}x^2+\gamma^2\beta^{2j-2}-2\gamma^2\beta^{j-1}x\\
&=\max\cbra{\gamma^2\beta^{2j-2},\beta^{2j-2}-\gamma^2\beta^{2j-2}}\\
&=\max\cbra{\gamma^2\beta^{2j-2},\beta^{2j}}
=\beta^{2j}
\tag{since $\gamma^2+\beta^2=1$ and $\gamma\le\beta$}
\end{align}
as desired.

\paragraph{Case $T>j^*$.}
Now we turn to the case where $T>j^*$.
For each $j\ge j^*\ge1$, define $\ket{\psi_j}=\ket{\psi_{j\mod j^*}}$.
Let $t=\floorbra{\frac T{j^*}}$ and $m=T\mod j^*$.
Then
\begin{align}
\sum_{k=0}^{T-1}\beta^k\cdot U_{\psi_k}\ket{0^n}
&=\sum_{\ell=0}^{t-1}\beta^{\ell\cdot j^*}\sum_{k=0}^{j^*-1}\beta^k\cdot U_{\psi_k}\ket{0^n}
+\sum_{k=0}^m\beta^{t\cdot j^*+k}\cdot U_{\psi_k}\ket{0^n}\\
&=\sum_{\ell=0}^{t-1}\frac{\beta^{\ell\cdot j^*}}{\gamma}\cdot\ket\psi+\sum_{k=0}^m\beta^{t\cdot j^*+k}\cdot U_{\psi_k}\ket{0^n}
\tag{by the definition of $j^*$}\\
&=\frac{1-\beta^{t\cdot j^*}}{\gamma\cdot(1-\beta^{j^*})}\cdot\ket\psi+\sum_{k=0}^m\beta^{t\cdot j^*+k}\cdot U_{\psi_k}\ket{0^n}.
\end{align}
Define $\zeta=\frac{\gamma\cdot(1-\beta^{j^*})}{1-\beta^{t\cdot j^*}}$.
Then we have 
\begin{equation}
\gamma\ge\zeta\ge\gamma\cdot(1-\beta^{j^*})\ge\gamma\cdot(1-\beta)
\end{equation}
as claimed.
In addition,
\begin{align}
\text{LHS of \Cref{eq:lem:apx_coeff_1}}
&=\vabs{\zeta\cdot\sum_{k=0}^m\beta^{t\cdot j^*+k}\cdot U_{\psi_k}\ket{0^n}}
\le\zeta\cdot\sum_{k=0}^m\beta^{t\cdot j^*+k}\\
&\le\zeta\cdot\sum_{k=0}^{j^*-1}\beta^{t\cdot j^*+k}
=\zeta\cdot\frac{1-\beta^{j^*}}{1-\beta}\cdot\beta^{t\cdot j^*}
\tag{since $m=T\mod j^*\le j^*-1$}\\
&\le\gamma\cdot\frac{1-\beta^{j^*}}{1-\beta}\cdot\beta^{t\cdot j^*}
\le\frac\gamma{1-\beta}\cdot\beta^{t\cdot j^*}
\tag{since $\zeta\le\gamma$}\\
&=\frac\gamma{1-\beta}\cdot\beta^{\floorbra{\frac T{j^*}}\cdot j^*}
\tag{since $t=\floorbra{\frac T{j^*}}$}\\
&\le\frac\gamma{1-\beta}\cdot\beta^{T/2},
\tag{since $T\ge j^*$ and $\floorbra{x}\ge x/2$ for $x\ge1$}
\end{align}
which verifies \Cref{eq:lem:apx_coeff_1}.
\end{proof}

Given \Cref{lem:apx_coeff}, we immediately obtain \Cref{lem:apx_coeff_simple}.

\begin{proof}[Proof of \Cref{lem:apx_coeff_simple}]
In light of \Cref{lem:apx_coeff}, we set $\alpha=\frac1{\sqrt2}$ and design $U_\phi$ as the $B\Hgate B\Hgate$ circuit from \Cref{lem:apx_state_syn_hd4}.
Note that all states used in \Cref{lem:apx_coeff} have real amplitudes since we start with $\ket\psi$ that has real amplitudes.
Therefore circuits $U_\phi$ are all well-defined.
\end{proof}

\subsection{Flagging and exact amplitude amplification: \texorpdfstring{\Cref{lem:state_circuit}}{Lemma \ref*{lem:state_circuit}}}\label{sec:flag_and_amplify}

In this Section we describe the circuit that prepares an $n$-qubit state with optimal $\Tgate$-count, establishing \Cref{lem:state_circuit}.

We first use the linear combination of unitaries technique to deduce a circuit that prepares the approximating state from \Cref{lem:apx_coeff_simple}.

\begin{lemma}\label{lem:coarse_apx_flag}
Let $\eps\in(0,1/2]$ and $\ket\psi$ be an arbitrary $n$-qubit state with real amplitudes.
There exist some $t=\log\log(1/\eps)+O(1)$, an $n$-qubit normalized state $\ket\phi$, and an $(n+t)$-qubit unnormalized state $\ket\tau$ such that 
\begin{enumerate}
\item\label{itm:lem:coarse_apx_flag_1}
$\vabs{\ket\phi-\ket\psi}\le\eps$.
\item\label{itm:lem:coarse_apx_flag_2}
$\pbra{\ketbra{0^t}{0^t}\otimes I_n}\ket\tau=0$, where $I_n$ is the $n$-qubit identity operator.
\item\label{itm:lem:coarse_apx_flag_3}
For some $\xi\ge\frac{4\sqrt2-4}5\ge0.33$, the normalized state $\xi\ket{0^t}_\Acal\ket{\phi}_\Bcal+\ket{\tau}_{\Acal\Bcal}$ can be prepared as follows, starting from the initial state $\ket{0^t}_\Acal\ket{0^n}_\Bcal$:
\begin{enumerate}
\item\label{itm:lem:coarse_apx_flag_3a} Apply one layer of single-qubit gates on $\Acal$.
\item\label{itm:lem:coarse_apx_flag_3b} Apply Hadamard on $\Bcal$.
\item\label{itm:lem:coarse_apx_flag_3c} Apply a Boolean phase oracle on $\Acal\Bcal$.
\item\label{itm:lem:coarse_apx_flag_3d} Apply Hadamard on $\Bcal$.
\item\label{itm:lem:coarse_apx_flag_3e} Apply another Boolean phase oracle on $\Acal\Bcal$.
\item\label{itm:lem:coarse_apx_flag_3f} Apply another layer of single-qubit gates on $\Acal$.
\end{enumerate}
\end{enumerate}
\end{lemma}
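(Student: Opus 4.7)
The plan is to realize the desired circuit as a linear-combination-of-unitaries (LCU) implementation of the decomposition from \Cref{lem:apx_coeff_simple}. First I would choose $t=\log\log(1/\eps)+O(1)$ to be the smallest integer such that $T:=2^t$ satisfies $\eps':=\frac{1}{\sqrt2-1}\cdot 2^{-T/4}\le\min(\eps/2,1/5)$, and apply \Cref{lem:apx_coeff_simple} with this $T$. This yields some $\zeta\in[(\sqrt2-1)/2,1/\sqrt2]$, Boolean phase oracles $B_1^{(k)},B_2^{(k)}$, and states $\ket{\psi_k}=B_2^{(k)}\Hgate^{\otimes n}B_1^{(k)}\Hgate^{\otimes n}\ket{0^n}$ for $0\le k<T$, such that $\vabs{\ket\psi-\zeta\sum_{k=0}^{T-1}2^{-k/2}\ket{\psi_k}}\le\eps'$. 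The $t$-qubit ancilla register $\Acal$ will ``index'' the $T=2^t$ terms.

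The crucial observation enabling a product-state ancilla preparation is that, writing $k=\sum_{j=0}^{t-1}k_j2^j$ in binary, the target amplitudes $c_k\propto 2^{-k/4}$ factor as $\prod_{j\colon k_j=1}2^{-2^{j-2}}$. Hence the normalized ancilla state $\ket\chi=\sum_{k=0}^{T-1}c_k\ket k$ with $c_k=2^{-k/4}/\sqrt{A}$ and $A=\sum_{k=0}^{T-1}2^{-k/2}$ is a product state $\bigotimes_{j=0}^{t-1}\pbra{x_j\ket0+y_j\ket1}$ with $y_j/x_j=2^{-2^{j-2}}$; that $T=2^t$ is a power of two is essential here so that every computational basis state of $\Acal$ is used. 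Call this preparing layer $V_1$, and set $V_2=V_1^\dag$; both are single-qubit layers on $\Acal$. Define the $(n+t)$-qubit Boolean phase oracles $O_1=\sum_k\ketbra{k}{k}_\Acal\otimes B_1^{(k)}$ and $O_2=\sum_k\ketbra{k}{k}_\Acal\otimes B_2^{(k)}$, which are indeed diagonal with $\pm1$ entries. The circuit $V=(V_2)_\Acal\cdot O_2\cdot(\Hgate^{\otimes n})_\Bcal\cdot O_1\cdot(\Hgate^{\otimes n})_\Bcal\cdot(V_1)_\Acal$ then matches the required form.

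Tracking the state through the six steps is straightforward: after step~(e) the global state equals $\sum_k c_k\ket{k}_\Acal\otimes\ket{\psi_k}_\Bcal$; applying $V_1^\dag$ on $\Acal$ and projecting onto $\ket{0^t}_\Acal$ yields $\sum_k c_k\braket{\chi}{k}\ket{\psi_k}=\sum_k c_k^2\ket{\psi_k}=\frac{1}{A}\sum_k 2^{-k/2}\ket{\psi_k}$, using that the $c_k$ are real. I then define $\ket\phi$ as the normalization of $\sum_k 2^{-k/2}\ket{\psi_k}$, let $N$ be its norm, set $\xi:=N/A$, and let $\ket\tau$ be the orthogonal remainder of the output of $V$; the latter trivially satisfies item~(2) of the statement. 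The approximation bound follows from a routine normalization argument: letting $u=\zeta\sum_k 2^{-k/2}\ket{\psi_k}$ so that $\ket\phi=u/\vabs u$ and $\vabs{u-\ket\psi}\le\eps'$, we get $\vabs{\ket\phi-\ket\psi}\le\vabs{u/\vabs u-u}+\vabs{u-\ket\psi}=|1-\vabs u|+\vabs{u-\ket\psi}\le 2\eps'\le\eps$.

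The main obstacle is verifying the lower bound $\xi\ge\frac{4\sqrt2-4}{5}$. My plan is to bound $A\le\sum_{k=0}^\infty 2^{-k/2}=2+\sqrt2$, and to combine $\zeta\le 1/\sqrt2$ with the approximation bound (which gives $\vabs u\ge 1-\eps'$) to obtain $N=\vabs u/\zeta\ge(1-\eps')\sqrt2$; therefore $\xi\ge(1-\eps')\sqrt2/(2+\sqrt2)=(1-\eps')(\sqrt2-1)$, and the choice $\eps'\le 1/5$ then yields $\xi\ge 4(\sqrt2-1)/5$ as claimed. Conceptually the hardest step is spotting the binary-expansion factorization of $2^{-k/4}$ that makes the ancilla preparation and unpreparation each a single-qubit layer; once that is in hand, the remainder is LCU and normalization bookkeeping.
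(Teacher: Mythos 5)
Your proposal is correct and takes essentially the same route as the paper: you invoke \Cref{lem:apx_coeff_simple} with $T=2^t$, LCU over the ancilla register $\mathcal A$ with the product-state preparation $V_1\colon\ket{0^t}\mapsto\bigotimes_{j}(x_j\ket0+y_j\ket1)$ and its inverse $V_2=V_1^\dag$ as the two single-qubit layers, and the controlled phase oracles in between; your binary-factorization observation $2^{-k/4}=\prod_{j:k_j=1}2^{-2^{j-2}}$ is exactly what justifies the paper's formula $\sqrt{\tfrac{1-\beta}{1-\beta^T}}\prod_{\ell=0}^{t-1}\bigl(\ket0+\beta^{2^\ell/2}\ket1\bigr)=\sqrt{\tfrac{1-\beta}{1-\beta^T}}\sum_{k}\beta^{k/2}\ket k$. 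The only cosmetic difference is in the lower bound on $\xi$: you bound $A\le 2+\sqrt2$ and impose $\eps'\le 1/5$ to get $\xi\ge\tfrac45(\sqrt2-1)$, while the paper uses the tighter identity $\tfrac{1-\beta}{1-\beta^T}\cdot(1-\beta^{T/2})=\tfrac{1-\beta}{1+\beta^{T/2}}$ together with $\beta^{T/2}\le\eps/2$, which avoids strengthening the condition on $\eps'$; both yield $\xi\ge\tfrac{4\sqrt2-4}{5}$.
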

\begin{proof}
Let $T=2^t$ and $\beta=\frac1{\sqrt2}$.
By \Cref{lem:apx_coeff_simple}, there exist states $\ket{\psi_0},\ket{\psi_1},\ldots,\ket{\psi_{T-1}}$ and Boolean phase oracles $B_1^{(k)},B_2^{(k)}$ for $0\leq k\leq T-1$, such that $\ket{\psi_k}=B_2^{(k)}\Hgate^{\otimes n}B_1^{(k)}\Hgate^{\otimes n}\ket{0^n}$ and 
\begin{align}\label{eq:lem:coarse_apx_flag_1}
\vabs{\ket\psi-\zeta\cdot\sum_{k=0}^{T-1}\beta^k\ket{\psi_k}}\le\frac1{\sqrt2-1}\cdot2^{-T/4}\le\frac\eps{10},
\end{align}
where $\zeta\in\sbra{\frac{\sqrt2-1}2,\frac1{\sqrt2}}$ and for the last inequality we used the fact that $t=\log\log(1/\eps)+O(1)$.

Define $\ell=\vabs{\zeta\cdot\sum_{k=0}^{T-1}\beta^k\ket{\psi_k}}$ and
\begin{align}\label{eq:lem:coarse_apx_flag_2}
\ket\phi=\frac{\zeta}\ell\cdot\sum_{k=0}^{T-1}\beta^k\ket{\psi_k}
\quad\text{and}\quad
\xi=\frac{\ell\cdot(1-\beta)}{\zeta\cdot(1-\beta^T)}.
\end{align}
To get the circuit for \Cref{itm:lem:coarse_apx_flag_3} we use the linear combination of unitaries technique:
\begin{itemize}
\item In \Cref{itm:lem:coarse_apx_flag_3a} we apply a layer of single-qubit gates on $\Acal$ that transforms $\ket{0^t}_\Acal$ to 
\begin{align}
\sqrt{\frac{1-\beta}{1-\beta^T}}\prod_{\ell=0}^{t-1}\pbra{\ket0+\beta^{2^\ell/2}\ket1}_\Acal
&=\sqrt{\frac{1-\beta}{1-\beta^T}}\sum_{k=0}^{T-1}\beta^{k/2}\ket{k}_\Acal.
\end{align}
\item \Cref{itm:lem:coarse_apx_flag_3b,itm:lem:coarse_apx_flag_3c,itm:lem:coarse_apx_flag_3d,itm:lem:coarse_apx_flag_3e} correspond to, controlled on index $k$ at register $\Acal$, $B_2^{(k)}\Hgate^{\otimes n}B_1^{(k)}\Hgate^{\otimes n}$ is applied on $\ket{0^n}_\Bcal$ to obtain
\begin{equation}
\sqrt{\frac{1-\beta}{1-\beta^T}}\sum_{k=0}^{T-1}\beta^{k/2}\ket{k}_\Acal\otimes\ket{\psi_k}_\Bcal.
\end{equation}
Here we use the fact that, since each $B_1^{(k)},B_2^{(k)}$ are Boolean phase oracles, their controlled versions are Boolean phase oracles on registers $\Acal\Bcal$.
\item In \Cref{itm:lem:coarse_apx_flag_3f} we apply the inverse of the layer of single-qubit gates from \Cref{itm:lem:coarse_apx_flag_3a}, which gives 
\begin{align}
\ket{0^t}_\Acal\cdot\frac{1-\beta}{1-\beta^T}\sum_{k=0}^{T-1}\beta^k\ket{\psi_k}_\Bcal+\ket{\tau}_{\Acal\Bcal}
&=\xi\ket{0^t}_\Acal\ket{\phi}_\Bcal+\ket{\tau}_{\Acal\Bcal}
\end{align}
by \Cref{eq:lem:coarse_apx_flag_2}.
From the above we see that \Cref{itm:lem:coarse_apx_flag_2} is satisfied.
\end{itemize}

It remains to prove \Cref{itm:lem:coarse_apx_flag_1} and verify the range of $\xi$ in \Cref{itm:lem:coarse_apx_flag_3}.
By \Cref{eq:lem:coarse_apx_flag_1}, we have $\vabs{\ket\psi-\ell\cdot\ket\phi}\le\frac{\beta^{T/2}}{\sqrt2-1}$ and hence 
\begin{align}\label{eq:lem:coarse_apx_flag_3}
1-\frac{\beta^{T/2}}{\sqrt2-1}\le\ell\le1+\frac{\beta^{T/2}}{\sqrt2-1}.
\end{align}
Therefore $\vabs{\ket\phi-\ket\psi}\le\vabs{\ket\psi-\ell\cdot\ket\phi}+\abs{1-\ell}\le\frac{2\cdot\beta^{T/2}}{\sqrt2-1}\le\eps$, which verifies \Cref{itm:lem:coarse_apx_flag_1}.
Plugging \Cref{eq:lem:coarse_apx_flag_1} and \Cref{eq:lem:coarse_apx_flag_3} into \Cref{eq:lem:coarse_apx_flag_2}, we have
\begin{align}
\xi
&\ge\frac{\pbra{1-\beta^{T/2}}\cdot(1-\beta)}{\zeta\cdot(1-\beta^T)}
=\frac{1-\beta}{\zeta\cdot\pbra{1+\beta^{T/2}}}
\ge\frac{1-\beta}{\zeta\cdot(1+\frac\eps2)}
\tag{since $\beta^{T/2}\le\frac\eps2$}\\
&\ge\frac{1-\beta}{\frac1{\sqrt2}\cdot\frac54}
\tag{since $\eps\le1/2$ and $\zeta\le\frac1{\sqrt2}$}\\
&=\frac{4\cdot(\sqrt2-1)}5\ge0.33
\tag{since $\beta=\frac1{\sqrt2}$}
\end{align}
as claimed.
\end{proof}

\Cref{itm:lem:coarse_apx_flag_3} in \Cref{lem:coarse_apx_flag} is a coarse flagging scheme where the approximation $\ket\phi$ of the target $\ket\psi$ is flagged with some undetermined amplitude $\xi\geq 0.33$. In order to enable the use of exact amplitude amplification, it will be convenient to fix this amplitude to a known value.

\begin{corollary}\label{cor:exact_apx_flag}
\Cref{lem:coarse_apx_flag} holds with the replacement $\xi=\sin\pbra{\frac\pi{10}}$.
\end{corollary}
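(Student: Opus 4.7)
The plan is to prove the corollary by appending one extra ancilla qubit to register $\mathcal A$ and using it to dilute the ``good'' amplitude from $\xi \geq 0.33$ down to the exact target value $\sin(\pi/10) \approx 0.309$. This works precisely because $\sin(\pi/10) < 0.33 \leq \xi$, leaving room for an honest single-qubit rotation.

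Concretely, let $\mathcal A'$ be the register consisting of one fresh ancilla qubit together with the original $\mathcal A$, so that $t' := t+1 = \log\log(1/\eps)+O(1)$ still satisfies the required asymptotic bound. Modify the circuit of \Cref{lem:coarse_apx_flag} as follows: in the first single-qubit layer (\Cref{itm:lem:coarse_apx_flag_3a}), in addition to the original gates on $\mathcal A$, also apply the rotation $R_\alpha : |0\rangle \mapsto \cos\alpha\,|0\rangle + \sin\alpha\,|1\rangle$ to the fresh ancilla, where $\alpha$ is chosen so that $\xi\cos\alpha = \sin(\pi/10)$, i.e., $\cos\alpha = \sin(\pi/10)/\xi$. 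This is well-defined because $\sin(\pi/10)/\xi \leq \sin(\pi/10)/0.33 < 1$. All remaining steps (\Cref{itm:lem:coarse_apx_flag_3b}--\Cref{itm:lem:coarse_apx_flag_3f}) are extended by acting as identity on the new ancilla; Boolean phase oracles trivially remain Boolean phase oracles, and the final single-qubit layer remains a single-qubit layer on $\mathcal A'$.

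Tracking the state through the modified circuit, the output is a tensor product
\begin{equation}
\bigl(\cos\alpha\,|0\rangle + \sin\alpha\,|1\rangle\bigr) \otimes \bigl(\xi\,|0^t\rangle_{\mathcal A}|\phi\rangle_{\mathcal B} + |\tau\rangle_{\mathcal A\mathcal B}\bigr),
\end{equation}
since the ancilla rotation commutes with all subsequent operations on $\mathcal A\mathcal B$. Expanding and collecting the $|0^{t+1}\rangle_{\mathcal A'}|\phi\rangle_{\mathcal B}$ component yields amplitude $\xi\cos\alpha = \sin(\pi/10)$ as designed, with the remaining terms forming a new junk state $|\tau'\rangle_{\mathcal A'\mathcal B} = \cos\alpha\,|0\rangle|\tau\rangle + \sin\alpha\,|1\rangle\bigl(\xi|0^t\rangle|\phi\rangle + |\tau\rangle\bigr)$. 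To verify that $|\tau'\rangle$ has no $|0^{t+1}\rangle_{\mathcal A'}$ component, note that $|0^{t+1}\rangle\langle 0^{t+1}|_{\mathcal A'} = |0\rangle\langle 0| \otimes |0^t\rangle\langle 0^t|_{\mathcal A}$; the first summand is killed by $(|0^t\rangle\langle 0^t|_{\mathcal A}\otimes I)|\tau\rangle = 0$ (from \Cref{itm:lem:coarse_apx_flag_2}), and the second is killed by $\langle 0|1\rangle = 0$.

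There is no serious obstacle; the only subtlety worth flagging is that the rotation angle $\alpha$ depends on $\xi$, which in turn depends on the target state $|\psi\rangle$. This is legitimate since the single-qubit layers of \Cref{lem:coarse_apx_flag} are already permitted to depend on $|\psi\rangle$, and the entire construction is classically computable from a description of $|\psi\rangle$. All remaining items of \Cref{lem:coarse_apx_flag} (the approximation bound $\||\phi\rangle-|\psi\rangle\|\leq\eps$ and the structural form of the circuit) are inherited directly from the original lemma.
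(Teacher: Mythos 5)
Your proof is correct and takes essentially the same approach as the paper: both add one fresh ancilla to $\mathcal A$ and apply a single-qubit rotation determined by $\xi$ so that the good amplitude becomes exactly $\sin(\pi/10)$. The only cosmetic difference is that you absorb the rotation into the first single-qubit layer (\Cref{itm:lem:coarse_apx_flag_3a}) whereas the paper absorbs it into the last (\Cref{itm:lem:coarse_apx_flag_3f}); since the ancilla is untouched in between, the two placements are equivalent.
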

\begin{proof}
By \Cref{lem:coarse_apx_flag}, $\xi\ge0.33\ge0.31\ge\sin\pbra{\frac\pi{10}}$.
Therefore there exists a single-qubit unitary $G$ such that
\begin{equation}
G\ket0=\frac{\sin(\pi/10)}\xi\cdot\ket0+\sqrt{1-\frac{\sin^2(\pi/10)}{\xi^2}}\cdot\ket1.
\end{equation}
Recall that \Cref{lem:coarse_apx_flag} constructs the state $\xi\ket{0^t}_\Acal\ket{\phi}_\Bcal+\ket{\tau}_{\Acal\Bcal}$.
Now add an extra ancilla in $\Acal$ and apply $G$ to it. 
This gives the state $\sin\pbra{\frac\pi{10}}\cdot\ket{0^{t+1}}_{\Acal}\ket{\phi}_\Bcal+\ket{\tau'}_{\Acal\Bcal}$, where $(\ketbra{0^{t+1}}{0^{t+1}}\otimes I_n)\ket{\tau'}=0$.
Finally, note that the additional gate $G$ can be absorbed into the layer of single-qubit gates applied in \Cref{itm:lem:coarse_apx_flag_3f} of the circuit described in \Cref{lem:coarse_apx_flag}.
We have shown that we can safely set $\xi=\sin\pbra{\frac\pi{10}}$ in \Cref{lem:coarse_apx_flag}.
\end{proof}

The constant $\sin\pbra{\frac\pi{10}}$ is chosen because it allows us to prepare $\ket\phi$ exactly after two rounds of amplitude amplification \cite{grover1998quantum,brassard2002quantum}.
We now complete the proof of \Cref{lem:state_circuit}.

\begin{proof}[Proof of \Cref{lem:state_circuit}]
Let $V$ be the circuit in \Cref{cor:exact_apx_flag} to exactly prepare
\begin{equation}
\ket\rho:=\sin\pbra{\frac\pi{10}}\cdot\ket{0^t}_{\Acal}\ket{\phi}_\Bcal+\ket{\tau}_{\Acal\Bcal}.
\end{equation}
By \Cref{lem:coarse_apx_flag} and \Cref{cor:exact_apx_flag}, $V$ has the structure claimed in \Cref{lem:state_circuit}.

Observe that $R_1=V\pbra{2\ketbra{0^{n+t}}{0^{n+t}}-I}V^\dag$ is the reflection along the state $\ket\rho$.
Also recall that $R_2=(2\ketbra{0^t}{0^t}-I)_{\Ccal\Acal}\otimes I_\Bcal$ is the reflection on register $\Ccal\Acal$ along $\ket{0^t}$.
Using the standard analysis of amplitude amplification \cite{grover1998quantum,brassard2002quantum} and the fact that $\sin(5\cdot\frac\pi{10})=1$, we arrive at
\begin{equation}
R_1R_2R_1R_2V\ket{0^{n+t}}=(-R_1R_2)^2\ket\rho=\ket{0^t}\ket\phi,
\end{equation}
as desired.
\end{proof}

\section{Lower bounds}\label{sec:lower_bounds}

In this section we prove $\Tgate$-count lower bounds matching the upper bounds in \Cref{thm:state_T-count} and \Cref{thm:diagonal_unitary_T-count}. Our lower bounds apply in a very general model of adaptive Clifford+$\Tgate$ circuits that we now describe. 

\paragraph*{Adaptive Clifford+$\Tgate$ Circuits.}
 Consider a quantum computation $\Acal$ with an $n$-qubit input register as well as an $a$-qubit ancilla register that is initialized in the all-zero state. 
The computation proceeds via a sequence of Clifford gates, $\Tgate$ gates, and single-qubit measurements in the computational basis. 
These operations may act on any of the $n+a$ qubits. Moreover, each Clifford or $\Tgate$ gate may be classically controlled on the outcomes of measurements that have occurred previously. 

Let $r\in \{0,1\}^*$ be a binary string that describes the measurement outcomes in order obtained during the course of the computation.\footnote{For maximum generality, we do not assume that the number of single-qubit measurements is bounded. If we knew there were exactly $m$ measurements, then $r$ would be an $m$-bit string.} For each $r$, there is a Kraus operator $K_r$ which describes the corresponding sequence of $\Tgate$ gates, Clifford gates, and single-qubit projectors $|r_i\rangle\langle r_i|$ for $i=1,2,\ldots$. The adaptive Clifford+$\Tgate$ circuit implements a quantum channel which maps the $n$-qubit input state $\rho$ to an $(n+a)$-qubit output state:
\begin{equation}\label{eq:kraus_1}
\Acal(\rho)= \sum_r K_r \left(\rho\otimes |0^a\rangle \langle 0^a| \right)K_r^{\dagger}.
\end{equation}
For example, in the simplest case where there are no measurements and the circuit simply applies an $(n+a)$-qubit Clifford+$\Tgate$ unitary $U$, $\Acal(\rho)= U\left(\rho\otimes |0^a\rangle \langle 0^a| \right)U^{\dagger}$. As another example, if the circuit applies a unitary $U$ and then measures the first qubit, then we have $\Acal(\rho)= \sum_{r \in \{0,1\}} (|r\rangle\langle r| \otimes I)U \left(\rho\otimes |0^a\rangle \langle 0^a| \right)U^{\dagger}(|r\rangle\langle r| \otimes I)$.

We can view the output state $\Acal(\rho)$ as a probabilistic mixture of the states
\begin{align}\label{eq:kraus_2}
\sigma_r(\rho)\equiv \frac{K_r \left(\rho\otimes |0^a\rangle \langle 0^a| \right) K_r^{\dagger}}{\tr(K_r^{\dagger} K_r \left(\rho\otimes |0^a\rangle \langle 0^a| \right))},
\end{align}
where $\sigma_r(\rho)$ occurs with probability $p_r(\rho)\equiv \tr(K_r^{\dagger} K_r \left(\rho\otimes |0^a\rangle \langle 0^a| \right))$.

\paragraph*{T-Count.}
Let $\Tcal_r$ be the number of $\Tgate$ gates used by the circuit, conditioned on measurement outcome $r$. 
The maximal number of $\Tgate$ gates used by $\Acal$ is
\begin{align}
\Tcal_{\mathrm{max}}(\Acal)=\sup_{r\in \bin^*} \Tcal_r.
\end{align}
We also consider the expected $\Tgate$-count of $\mathcal{A}$ starting with the all-zeros input state:
\begin{align}
\Tcal^{0}(\Acal)=\sum_r p_r(|0^n\rangle\langle 0^n|) \Tcal_r. 
\end{align}
Finally, we define the expected $\Tgate$-count of $\Acal$ for the worst-case input state as follows
\begin{align}
\Tcal(\Acal)=\sup_{\rho} \sum_r p_r(\rho) \Tcal_r, 
\end{align}
where $\rho$ is any $n$-qubit mixed state and $\sum_r p_r(\rho) \Tcal_r$ is the expected number of $\Tgate$ gates used in $\Acal(\rho)$. For example, if the circuit is unitary and has no measurements, then all three of these measures equal the $\Tgate$-count of the unitary.

Note that clearly we have
\begin{align}\label{eq:t_complexity_1}
\Tcal^0(I\otimes\Acal)
= \Tcal^0(\Acal)
\leq \Tcal(I\otimes\Acal)
= \Tcal(\Acal)\leq \Tcal_{\mathrm{max}}(\Acal)
=\Tcal_{\mathrm{max}}(I\otimes\Acal).
\end{align}

Here we used the fact that, if $\mathcal{A}$ is an adaptive Clifford+$\Tgate$ circuit with an $n$-qubit input register, then so is $\mathcal{A}'=I\otimes \mathcal{A}$ where the first register can have any number of qubits $n'\geq 0$ and the second register is $n$ qubits (in particular, $\mathcal{A}'$ is just $\mathcal{A}$ applied to the second register).

In the following we will prove lower bounds on the expected $\Tgate$-count $\mathcal{T}^0(\mathcal{A})$ (for state preparation) and $\mathcal{T}(\mathcal{A})$ (for unitary synthesis). These in turn imply lower bounds on the maximal $\Tgate$-count due to 
 \Cref{eq:t_complexity_1}.

\paragraph*{State Preparation.}
We say that $\Acal$ prepares state $\ket\psi$ to within error $\eps$ in trace distance if
\begin{align}
\label{eq:trdist}
\frac12\cdot\|\Acal(\ketbra{0^n}{0^n})-|\psi\rangle\langle \psi|\otimes |0^a\rangle\langle 0^a|\|_1\leq \eps,
\end{align}
where $\|\cdot\|_1$ here is the trace norm. 

Note that one could instead consider a variant of the approximate state preparation task where we do not require that the ancillas are returned to the all-zeros state. In this case we would require the following closeness condition, which might seem to be less stringent than \Cref{eq:trdist}:
\begin{equation}
\label{eq:morestringent}
\frac{1}{2}\cdot\left\|\mathrm{Tr}_{\mathrm{anc}}(\mathcal{A}(|0^n\rangle\langle 0^n|))-|\psi\rangle\langle \psi|\right\|_1 \leq \eps,
\end{equation}
where $\mathrm{Tr}_{\mathrm{anc}}$ denotes the partial trace over the $a$-qubit ancilla register. However it is not hard to see that we can work with the closeness condition \Cref{eq:trdist} without loss of generality. Indeed, if an adaptive Clifford+$\Tgate$ circuit $\mathcal{A}$ satisfies \Cref{eq:morestringent}, then we can construct another adaptive Clifford+$\Tgate$ circuit $\mathcal{B}$ that satisfies \Cref{eq:trdist} and with the same expected $\Tgate$-count  $\mathcal{T}^0(\mathcal{A})=\mathcal{T}^0(\mathcal{B})$. In particular  $\mathcal{B}$ is obtained by first applying $\mathcal{A}$, then measuring the ancilla register in the computational basis, then applying a Pauli gate on the ancilla register which is controlled on the measurement outcome $z\in \{0,1\}^a$, that maps $|z\rangle\rightarrow |0^a\rangle$. Then $\mathcal{B}(|0^n\rangle \langle 0^n|)=\mathrm{Tr}_{\mathrm{anc}}(\mathcal{A}(|0^n\rangle\langle 0^n|))\otimes |0^a\rangle\langle 0^a|$ and the closeness condition \Cref{eq:trdist} holds for $\mathcal{B}$.

The following \Cref{thm:state_adaptive_lb} is a strengthening of the tightness of \Cref{thm:state_T-count}, which we will prove in this section.

\begin{theorem}\label{thm:state_adaptive_lb}
There is an $n$-qubit state $|\psi\rangle$ such that any adaptive Clifford+$\Tgate$ circuit $\Acal$ that prepares $\ket\psi$ up to error $\eps$ in trace distance satisfies  
$$
\Tcal^0(\Acal)=\Omega\pbra{\sqrt{2^n\log(1/\eps)}+\log(1/\eps)}.
$$
\end{theorem}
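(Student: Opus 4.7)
The plan is to split Theorem~\ref{thm:state_adaptive_lb} into two lower bounds, $\Tcal^0(\Acal) = \Omega(\log(1/\eps))$ and $\Tcal^0(\Acal) = \Omega(\sqrt{2^n \log(1/\eps)})$, each witnessed by an explicit hard state, and then take their maximum. For the first bound I would take $\ket{\psi} = \ket{0^{n-1}} \otimes \ket{\phi}$, where $\ket{\phi}$ is the single-qubit hard witness of \cite{beverland2020lower}. Any adaptive circuit preparing this to trace distance $\eps$ can be post-processed---measure the first $n-1$ qubits and condition on outcome $0^{n-1}$, which does not increase the expected $\Tgate$-count---into one preparing $\ket{\phi}$ to comparable trace distance, so the single-qubit adaptive lower bound of \cite{beverland2020lower} immediately gives $\Omega(\log(1/\eps))$.

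For the main bound I would generalize LKS to the adaptive setting. First, defer all measurements so that $\Acal(\rho) = \sum_r (I \otimes \ketbra{r}{r})\, U(\rho \otimes \ketbra{0}{0})\, U^\dagger (I \otimes \ketbra{r}{r})$ for some purely unitary Clifford+$\Tgate$ circuit $U$; the T-count of branch $r$ becomes the T-count of $U$ restricted to the gates classically active on prefix $r$, followed by a single Pauli postselection $\ketbra{r}{r}$ on the measurement register. Second, by Markov applied to $\Tcal^0(\Acal) = \sum_r p_r \Tcal_r$, the outcomes with $\Tcal_r \leq T := 4\Tcal^0(\Acal)$ carry at least $3/4$ of the probability mass; combining with the trace-distance bound \Cref{eq:trdist}, some branch $r^\star$ in this set produces a pure state $\ket{\xi_{r^\star}}$ that is $O(\sqrt{\eps})$-close in $\ell_2$ norm to $\ket{\psi} \otimes \ket{0^a}$ and is prepared by a Clifford+$\Tgate$ circuit with at most $T$ T-gates followed by Pauli postselections.

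Third, I would invoke the canonical form of Appendix~\ref{sec:canonical_form} (generalizing \cite{beverland2020lower} to Clifford+$\Tgate$ circuits with Pauli postselections) to write each such $\ket{\xi_{r^\star}}$ as a Clifford applied to $\ket{\Tgate}^{\otimes T}$ tensored with computational basis ancillas, followed by a product of Pauli projectors. Counting the free parameters (Cliffords on $\poly(n,T,a)$ qubits modulo stabilizer equivalences, plus the discrete postselection data) yields a finite set $\mathcal{S}$ of reachable normalized output states with $\log|\mathcal{S}| = O(T^2 + nT)$. Fourth, combine this with the standard covering bound on $\mathbb{CP}^{2^n-1}$: a Haar-random $n$-qubit pure state is within trace distance $O(\eps)$ of a fixed pure state with probability at most $\eps^{\Theta(2^n)}$, so by a union bound it is $O(\eps)$-approximable by some member of $\mathcal{S}$ only if $|\mathcal{S}| \cdot \eps^{\Omega(2^n)} \geq 1$. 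Choosing $\ket{\psi}$ to be such a hard state (combined with the Beverland witness on one designated qubit) forces $T^2 + nT = \Omega(2^n \log(1/\eps))$, i.e.\ $T = \Omega(\sqrt{2^n \log(1/\eps)})$.

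The main obstacle is the third step: the canonical form must absorb arbitrary Pauli postselections---which can be numerous, since the number of measurements $m$ is a priori unbounded---into the Clifford structure \emph{without} the counting bound degrading as $\exp(O((T+m)^2))$. Achieving the claimed $\exp(O(T^2 + nT))$ count requires tracking how Pauli projectors interact with the $\Tgate$-teleportation structure, so that ``pure Clifford'' postselections can be absorbed for free and only the $\Tgate$-conjugated ones contribute. A secondary, minor difficulty is unifying the two hard states into one: the cleanest route is to tensor the single-qubit Beverland witness with a generic LKS-type witness on the remaining $n-1$ qubits and verify that both reductions go through on the combined state.
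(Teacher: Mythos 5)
Your overall architecture coincides with the paper's: use Markov to extract a single low-$\Tgate$-count measurement branch that is $O(\sqrt\eps)$-close to $\ket\psi$ (this is \Cref{clm:state_to_post-sel}), reduce to a Clifford circuit with Pauli postselections and $t=O(\Tcal^0(\Acal))$ magic states, then pit the number of reachable canonical-form circuits against a packing/anti-concentration bound on $n$-qubit pure states (\Cref{fct:sphere_packing} plus \Cref{prop:state_post_lb}), together with the \cite{beverland2020lower} bound for the $\Omega(\log(1/\eps))$ term. So the plan is sound and is essentially the paper's plan.

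However, your third step has a genuine gap, and your own diagnosis of it is the right one to worry about but the proposed cure is not. You write that after putting the branch into canonical form you count ``Cliffords on $\poly(n,T,a)$ qubits'' modulo postselection data and get $\log|\mathcal{S}|=O(T^2+nT)$. As written this is not a bound at all, because the number of ancillas $a$ (and the number $m$ of mid-circuit measurements, hence of Pauli postselections) is a priori unbounded, so a naive canonical form lives on $n+t+a$ qubits and the Clifford count is $2^{\Theta((n+t+a)^2)}$. You acknowledge this, but the remedy you gesture at --- ``tracking how Pauli projectors interact with the $\Tgate$-teleportation structure'' so that ``pure Clifford'' postselections are free --- is not what makes the bound go through. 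The actual content, which the paper supplies in \Cref{lem:canonical_form_state} (proved in the stronger form \Cref{thm:canonical_form} via \Cref{lem:organize} and \Cref{lem:generalize_beverlend_lemA.10}), is a stabilizer-group argument: every Pauli projector that survives the clean-up must contribute a new independent generator to the stabilizer group of the intermediate state, and comparing $|\Stab|$ of the input $\ket{0^n}\ket{\Tgate}^{\otimes t}\ket{0^a}$ against that of the output $\ket\phi\ket{0^{t+a}}$ forces the number of surviving projectors to be at most $t$; separately, \Cref{lem:generalize_beverlend_lemA.10} strips off the $a$ ancilla qubits entirely, so the canonical form lives on exactly $n+t$ qubits. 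Only with both reductions does the count become $2^{O((n+t)^2)}=2^{O(n^2+t^2)}$, which is what is needed against $(1/\eps)^{\Theta(2^n)}$. Without filling in this lemma (or re-proving it), your union-bound step does not close, so the argument as proposed does not yet establish the $\Omega(\sqrt{2^n\log(1/\eps)})$ bound.

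Two smaller remarks. First, in the Markov step you need to intersect two events --- low $\Tgate$-count and high fidelity --- and the paper does this by taking masses $1/2$ and $2/3$ rather than a single $3/4$ cut; your version works too but you should say so explicitly. Second, your construction of a single hard witness $\ket{0^{n-1}}\otimes\ket\phi$ for the $\Omega(\log(1/\eps))$ part and the Haar-random state for the counting part are two different witnesses; the theorem wants one. The paper handles this by observing that a packing state (or a random state) already works as a witness in \cite[Lemma~5.9]{beverland2020lower}, so no tensoring is needed; your note about ``unifying the two hard states'' is the right instinct, and the resolution is simply that almost every state in the packing simultaneously witnesses both bounds.
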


This theorem lower bounds the expected $\Tgate$-count $\mathcal{T}^0(\mathcal{A})$ of an adaptive Clifford+$\Tgate$ circuit--this is the most general setting we are able to handle. Note that if $\mathcal{A}$ is in fact a unitary $U$ composed of Clifford gates and $\Tgate$ gates (with no measurement), then its output state is a pure state $\mathcal{A}(\rho)= U(\rho\otimes |0^a\rangle\langle 0^a|) U^{\dagger}$ where $a$ is the number of ancillas used.  In particular, there is only one Kraus operator $K=U$ in \Cref{eq:kraus_1}. In this case the number of $\Tgate$ gates in the circuit for $U$ is equal to $\mathcal{T}_{\mathrm{max}}(\mathcal{A})$ which is also equal to $\mathcal{T}(\mathcal{A})$.  

\paragraph*{Unitary Synthesis.}
For an $n$-qubit unitary $U$, consider the channel $\Ucal$ defined by
\begin{align}
\Ucal(\rho)\equiv U \rho U^{\dagger}\otimes |0^a\rangle \langle 0^a|.
\end{align}
We say that $\Acal$ implements $U$ to within error $\eps$ in the diamond distance if
\begin{align}\label{eq:unitary_diam_1}
\|\Acal-\Ucal\|_\diamond\leq \eps,
\end{align}
where $\|\cdot\|_\diamond$ here is the diamond norm.
As a strengthening of the lower bound in \Cref{thm:diagonal_unitary_T-count}, we will prove the following \Cref{thm:diagonal_adaptive_lb}.

\begin{theorem}\label{thm:diagonal_adaptive_lb}
There is an $n$-qubit diagonal unitary $D$ such that any adaptive Clifford+$\Tgate$ circuit $\Acal$ that implements $D$ up to error $\eps$ in diamond distance satisfies 
$$
\Tcal(\Acal)=\Omega\pbra{\sqrt{2^n\log(1/\eps)}+\log(1/\eps)}.
$$
\end{theorem}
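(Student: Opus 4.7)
I would reduce \Cref{thm:diagonal_adaptive_lb} to \Cref{thm:state_adaptive_lb} via a Hadamard sandwich. The key identity is $D\ket{+^n} = 2^{-n/2}\sum_{x\in\bin^n} d_x \ket{x}$, where $d_x$ are the diagonal entries of $D$: applying $D$ to the uniform superposition produces a \emph{phase state} that carries all of $D$'s information. Since $\Hgate^{\otimes n}$ is Clifford, any adaptive Clifford+$\Tgate$ circuit $\Acal$ that implements $D$ can be converted, at zero additional $\Tgate$ cost, into an adaptive circuit $\Bcal$ preparing $D\ket{+^n}$ from $\ket{0^n}$, simply by prepending $\Hgate^{\otimes n}$ to $\Acal$.

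\paragraph*{The reduction.} Suppose $\Acal$ satisfies $\|\Acal-\Dcal\|_\diamond \le \eps$ where $\Dcal(\rho) = D\rho D^\dag \otimes \ketbra{0^a}{0^a}$. Instantiating the diamond norm on input $\ketbra{+^n}{+^n}$ yields
\[
\bigl\|\Acal(\ketbra{+^n}{+^n}) - D\ketbra{+^n}{+^n}D^\dag \otimes \ketbra{0^a}{0^a}\bigr\|_1 \le \eps,
\]
so $\Bcal$ prepares the phase state $D\ket{+^n}$ to trace distance $\eps/2$. For the $\Tgate$-count, the input handed to $\Acal$ inside $\Bcal$ is the single fixed state $\ket{+^n}$, so $\Tcal^0(\Bcal)$ equals the expected $\Tgate$-count of $\Acal$ on input $\ketbra{+^n}{+^n}$, which is at most $\Tcal(\Acal)$ by the definition of $\Tcal(\Acal)$ as a supremum over all input states.

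\paragraph*{Finding the hard diagonal.} It remains to exhibit a diagonal $D$ such that the associated phase state $D\ket{+^n}$ requires $\Omega\pbra{\sqrt{2^n\log(1/\eps)}+\log(1/\eps)}$ expected $\Tgate$-count to prepare from $\ket{0^n}$. For this I would rerun the packing argument behind \Cref{thm:state_adaptive_lb} on the restricted manifold of phase states: modulo a global phase this manifold is $(2^n-1)$-real-dimensional and admits an $\eps$-packing in trace distance of cardinality $(\Omega(1/\eps))^{2^n-1}$, whose logarithm matches (up to a constant factor) the log-packing of all pure states. Combined with the canonical-form upper bound from \Cref{sec:canonical_form} on the effective number of adaptive Clifford+$\Tgate$ circuits of a given $\Tgate$-count (with Pauli postselections to absorb adaptivity and measurements), matching logarithms yields the $\Omega\pbra{\sqrt{2^n\log(1/\eps)}}$ term. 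For the additive $\Omega(\log(1/\eps))$ term I would specialize to $D = \diag(1,e^{i\theta})\otimes I^{\otimes(n-1)}$ and invoke the single-qubit bound of \cite{beverland2020lower}, lifted to adaptive circuits via the same canonical form.

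\paragraph*{Main obstacle.} The reduction itself is essentially free, so the bulk of the work is verifying that the packing argument underlying \Cref{thm:state_adaptive_lb} remains valid when confined to phase states and when the $\Tgate$-count is measured in \emph{expectation} over adaptive measurement outcomes rather than in the worst case over a unitary circuit. The first restriction halves the real dimension of the target manifold but only changes the constant in the exponent of the packing number, which does not affect the square-root asymptotics. The second is precisely what the Pauli-postselection canonical form developed in \Cref{sec:canonical_form} is designed to handle, generalizing the static lower-bound framework of \cite{beverland2020lower} to adaptive Clifford+$\Tgate$ circuits with mid-circuit measurements.
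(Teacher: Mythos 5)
Your proposal is correct and takes a genuinely different (though closely related) route from the paper. The paper reduces to state preparation of the \emph{Choi state} $\ket{\iota_D}=2^{-n/2}\sum_x d_x\ket{x}\ket{x}$ on $2n$ qubits: it applies $\Acal$ to half of a maximally entangled state (\Cref{fct:convert_to_state}) and then invokes a counting argument over Clifford circuits with Pauli postselections acting on $2n+t$ qubits, with the packing bound on Choi states coming from \Cref{fct:choi_to_unitary} and \Cref{fct:diag_sphere_packing}. You instead reduce to state preparation of the \emph{phase state} $D\ket{+^n}=2^{-n/2}\sum_x d_x\ket{x}$ on $n$ qubits, by prepending the Clifford layer $\Hgate^{\otimes n}$. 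The two target states are in fact Clifford-equivalent, since $\ket{\iota_D}=\mathrm{CNOT}^{\otimes n}\bigl(D\ket{+^n}\otimes\ket{0^n}\bigr)$, so the packing bounds have the same exponent $\Theta(2^n)$ and the canonical-form circuit count is $2^{O(n^2+t^2)}$ in both cases; the asymptotics are identical. What your route buys is a more elementary reduction for diagonal unitaries that avoids the Choi/diamond-norm machinery and stays on $n$ qubits; what the paper's route buys is uniformity, since the same Choi-state reduction and \Cref{clm:unitary_to_post-sel} handle the general unitary lower bound of \Cref{thm:unitary_adaptive_lb}, where no single "universal input state" exists. One small point both proofs gloss over identically: the statement asks for a single $D$ achieving $\Omega(\sqrt{2^n\log(1/\eps)}+\log(1/\eps))$, whereas the counting argument and the invocation of \cite{beverland2020lower} a priori yield two potentially different hard instances; this is repaired by noting that the counting argument actually shows a $1-o(1)$ fraction of the packing requires $t=\Omega(\sqrt{2^n\log(1/\eps)})$, and a similarly generic fraction of diagonal unitaries satisfies the $\Omega(\log(1/\eps))$ bound, so the two sets intersect. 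Your $\Tgate$-count bookkeeping ($\Tcal^0(\Bcal)\le\Tcal(\Acal)$ since $\Hgate^{\otimes n}$ is Clifford and $\Tcal(\Acal)$ is a supremum over inputs) is exactly right and mirrors the use of \Cref{eq:t_complexity_1} in \Cref{clm:unitary_to_post-sel}.
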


Note that in the case where $\mathcal{A}$ is a unitary Clifford+$\Tgate$ circuit with no measurements, the diamond distance is equivalent to the usual distance with respect to the operator norm, up to a global phase.

Finally, we establish the following $\Tgate$-count lower bound for synthesis of general unitaries.

\begin{theorem}\label{thm:unitary_adaptive_lb}
There is an $n$-qubit unitary $U$ such that any adaptive Clifford+$\Tgate$ circuit $\Acal$ that implements $U$ up to error $\eps$ in diamond distance satisfies  
$$
\Tcal(\Acal)=\Omega\pbra{2^n\sqrt{\log(1/\eps)}+\log(1/\eps)}.
$$
\end{theorem}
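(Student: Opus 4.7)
Following the template of the lower bounds for state preparation (\Cref{thm:state_adaptive_lb}) and diagonal unitaries (\Cref{thm:diagonal_adaptive_lb}), the plan is to prove \Cref{thm:unitary_adaptive_lb} by a parameter-counting argument comparing the volume of $U(2^n)$ with the number of distinct channels realized by an adaptive Clifford+$\Tgate$ circuit of bounded $\Tgate$-count. The additive $\Omega(\log(1/\eps))$ term can be obtained separately by embedding a hard single-qubit unitary $V$ (whose $\eps$-synthesis requires $\Omega(\log(1/\eps))$ $\Tgate$ gates \cite{beverland2020lower}) into $U=V\otimes I_{n-1}$ and invoking that single-qubit lower bound, so the main content is to establish the $\Omega\pbra{2^n\sqrt{\log(1/\eps)}}$ term.

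For the packing side, I would note that $U(2^n)$ is a real $4^n$-dimensional manifold, so a standard volume estimate yields an $\eps$-packing $\Ncal\subseteq U(2^n)$ with $|\Ncal|\ge\exp\pbra{\Omega\pbra{4^n\log(1/\eps)}}$ whose elements are pairwise at operator-norm distance at least $3\eps$. Since operator-norm distance lower bounds diamond-norm distance up to an absolute constant for the unitary channels $\rho\mapsto U\rho U^\dag\otimes\ketbra{0^a}{0^a}$, these channels are pairwise $\Omega(\eps)$-separated in diamond distance, and any circuit that $\eps$-approximates two different packing elements would have to produce two different channels.

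For the counting side, I would invoke the canonical form for adaptive Clifford+$\Tgate$ circuits with Pauli postselections established in \Cref{sec:canonical_form}. This lets one rewrite any adaptive Clifford+$\Tgate$ circuit on $n$ qubits with worst-case $\Tgate$-count at most $t$ (on any single measurement branch) as: prepare $\ket{T}^{\otimes t}$, apply one Clifford on $O(n+t)$ qubits, and perform adaptive Pauli measurements followed by Clifford corrections, all of which admit a description of $O\pbra{(n+t)^2}$ bits. Hence the number of distinct channels realizable this way is at most $\exp\pbra{O\pbra{(n+t)^2}}$. Equating with the packing bound gives $(n+t)^2\gtrsim 4^n\log(1/\eps)$, i.e., $t=\Omega\pbra{2^n\sqrt{\log(1/\eps)}}$ in the relevant regime $t\gg n$ (the bound is trivial otherwise).

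The main obstacle lies in reducing $\Tcal(\Acal)$ to the worst-case quantity $\Tcal_{\mathrm{max}}$ to which the counting bound directly applies: the expected $\Tgate$-count could be significantly smaller than the branch maximum. I would use Markov's inequality to argue that, if $\Tcal(\Acal)\le t$, then for every input $\rho$ the executed branch uses at most $2t$ $\Tgate$ gates with probability at least $1/2$; restricting the channel to those branches and renormalizing yields a channel that still approximates $U$ in diamond distance up to a constant-factor blowup of $\eps$. Controlling this diamond-norm error uniformly over all inputs is the most delicate step, and I expect to inherit essentially the same analysis used in the proof of \Cref{thm:state_adaptive_lb}.
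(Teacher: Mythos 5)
Your overall template (sphere packing in $U(2^n)$ of size $(1/\eps)^{\Theta(4^n)}$, compared against a $2^{O((n+t)^2)}$ count of circuit canonical forms, yielding $t=\Omega(2^n\sqrt{\log(1/\eps)})$) matches the paper, and embedding a hard single-qubit rotation is a reasonable way to obtain the $\Omega(\log(1/\eps))$ term (the paper also just cites \cite{beverland2020lower} for this). However, there is a genuine gap in the step you yourself flag as delicate, and the paper resolves it by an idea that your proposal does not contain: the reduction of unitary synthesis to \emph{state preparation of the Choi state}.

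Concretely, your plan is to use Markov's inequality on $\Tcal(\Acal)=\sup_\rho\sum_r p_r(\rho)\Tcal_r$, restrict to branches with $\Tcal_r\le 2t$, and renormalize, claiming this gives a channel that still $O(\eps)$-approximates $\Ucal$ in diamond norm and to which the canonical-form count applies. This does not go through: the restriction-and-renormalization map $\rho\mapsto \sum_{r\in G}K_r(\rho\otimes\ketbra{0^a}{0^a})K_r^\dag/\sum_{r\in G}p_r(\rho)$ is not linear in $\rho$ (it is not even a quantum channel), and even if you drop the normalization you obtain a sub-channel whose diamond distance to $\Ucal$ is not controlled by the branch fidelities. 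You also cannot simply ``inherit the same analysis'' from \Cref{thm:state_adaptive_lb}: that argument (\Cref{clm:state_to_post-sel}) applies Markov and the fidelity averaging for the single fixed input $\ket{0^n}$ and then picks out \emph{one} good branch $r$, which is exactly what breaks when you must control all inputs $\rho$ simultaneously. The paper's fix (\Cref{fct:convert_to_state} and \Cref{clm:unitary_to_post-sel}) is to feed one half of the $2n$-qubit maximally entangled state $\ket{\iota}$ through $\Acal$; the diamond-norm hypothesis then directly gives trace-distance closeness of $(I\otimes\Acal)(\ketbra{\iota}{\iota})$ to the Choi state $\ketbra{\iota_U}{\iota_U}\otimes\ketbra{0^a}{0^a}$, turning the problem into state preparation with a fixed input on $2n$ qubits, after which the branch-selection averaging and canonical form (\Cref{lem:canonical_form_state}) apply verbatim. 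A downstream consequence is that your packing should be phrased in terms of trace distance between Choi states (the paper does this via the normalized Hilbert--Schmidt norm, \Cref{fct:choi_to_unitary} and \Cref{fct:unitary_sphere_packing}) rather than in operator norm, since that is the metric the counting of canonical state-preparation circuits actually separates. Without the Choi-state reduction your proposal, as written, does not close.
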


The rest of the section is devoted to the proofs of \Cref{thm:state_adaptive_lb}, \Cref{thm:diagonal_adaptive_lb}, and \Cref{thm:unitary_adaptive_lb}.

\subsection{State preparation }\label{sec:state_prep_lb}

In this section we prove \Cref{thm:state_adaptive_lb}.

As an intermediate step in the proof, it will be convenient to work with a model of Clifford circuits with Pauli postselection, defined as follows.

An operator $P$ is an $n$-qubit Pauli iff $P=i^b P'$ for some $b\in\cbra{0,1,2,3}$ and $P'\in \{I,X,Y,Z\}^{\otimes n}$. In addition, if $b\in\cbra{0,2}$, then $P$ is a Hermitian Pauli. An $n$-qubit Pauli postselection is specified by an $n$-qubit Hermitian Pauli operator $P$.
On input state $\ket\phi$, the postselection produces a normalized state $\ket\psi$ of $\ket\phi$ projected to the $+1$ eigenspace of $P$, i.e., $\ket\psi$ is proportional to $(I+P)\ket\phi$, denoted by $\ket\psi\propto(I+P)\ket\phi$.
More precisely, $\ket\psi=\frac{(I+P)\ket\phi}{\vabs{(I+P)\ket\phi}}$ and it is defined arbitrarily if $(I+P)\ket\phi$ becomes zero.
Note that Pauli postselection is not a linear map.

\begin{definition}[Clifford circuit with Pauli postselection]\label{def:circuit_post}
A Clifford circuit with Pauli postselection is a sequence of Clifford gates and Pauli postselections applied to the input state.

A Clifford circuit with Pauli postselection and copies of the single-qubit magic state $|T\rangle=\frac{1}{\sqrt{2}}(|0\rangle+e^{i\pi/4}|1\rangle)$ can perform $\Tgate$ gates by gate injection.\footnote{In particular, we can implement a $\Tgate$ gate using the fact that $(I\otimes \langle 0|) \mathrm{CNOT}|\psi\rangle |T\rangle =\frac{1}{\sqrt{2}}T|\psi\rangle$ for any single qubit $|\psi\rangle$.} In the following we shall consider state preparation circuits of the following form: $\Ccal$ is a Clifford circuit with $m$ Pauli postselections, $n$ input qubits, and $a$ ancillas such that
\begin{equation}\label{eq:def:circuit_post_11}
\ket{\phi_\text{out}}\ket{0^{t+a}}=\Ccal\pbra{\ket{\phi_\text{in}}\ket{\Tgate}^{\otimes t}\ket{0^a}},
\end{equation}
 where $t$ is the number of magic states.
Note that we can explicitly write out \Cref{eq:def:circuit_post_11} as proportional to
\begin{equation}\label{eq:def:circuit_post_1}
C_{m+1}M_{m}C_{m}\cdots M_1C_1\pbra{\ket{\phi_\text{in}}\ket{\Tgate}^{\otimes t}\ket{0^a}},
\end{equation}
where each $C_j$ is an $(n+t+a)$-qubit Clifford and $M_j=I+P_j$ for some $(n+t+a)$-qubit Hermitian Pauli; and $C_j,M_j$ depend only on $\Ccal$, not on $\ket{\phi_\text{in}},\ket{\phi_\text{out}}$.
\end{definition}

The following Claim shows that, if there is an adaptive Clifford+$\Tgate$ circuit $\mathcal{A}$ that approximately prepares a given state with expected $\Tgate$-count $\mathcal{T}^0(\mathcal{A})$, then there is a Clifford circuit with Pauli postselections and $2 \mathcal{T}^0(\mathcal{A})$ magic states that also approximately prepares the state. This allows us to shift our focus to the model of Clifford circuits with Pauli postselection.  The proof uses an averaging argument similar to the one used in \cite[Lemma 5.4]{beverland2020lower}.

\begin{claim}\label{clm:state_to_post-sel}
Assume $\Acal$ is an adaptive Clifford+$\Tgate$ circuit that prepares an $n$-qubit state $\ket\psi$ up to error $\eps$ in trace distance.
Let $t=2\cdot\Tcal^0(\Acal)$.
Then for some integers $m,a\ge0$, there exists a Clifford circuit $\Ccal$ with $m$ Pauli postselections and $a$ ancillas such that $\Ccal\pbra{\ket{0^n}\ket{\Tgate}^{\otimes t}\ket{0^a}}=\ket\phi\ket{0^{t+a}}$ and $\frac12\vabs{\ketbra\psi\psi-\ketbra\phi\phi}_1\le\sqrt{6\eps}$.
\end{claim}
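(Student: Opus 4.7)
The plan is to combine an averaging argument over the measurement branches of $\Acal$ with the textbook conversion of Clifford+$\Tgate$+measurement circuits into Clifford circuits with magic-state inputs and Pauli postselections. The argument follows the spirit of \cite[Lemma 5.4]{beverland2020lower}, adapted to handle adaptivity and the exact output form required by \Cref{def:circuit_post}.

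We first decompose $\Acal(\ketbra{0^n}{0^n})=\sum_r p_r \ketbra{\phi_r}{\phi_r}$ as a mixture of pure branches indexed by measurement-outcome strings $r$, where $p_r=\tr(K_r^\dagger K_r(\ketbra{0^n}{0^n}\otimes\ketbra{0^a}{0^a}))$ and $\ket{\phi_r}\propto K_r\ket{0^n,0^a}$. Writing $q_r = |\braket{\psi,0^a}{\phi_r}|^2$, the trace-distance hypothesis yields $\sum_r p_r(1-q_r)\le\eps$ while $\sum_r p_r\Tcal_r=\Tcal^0(\Acal)$. Let $S=\{r:\Tcal_r\le 2\Tcal^0(\Acal)=t\}$; Markov gives $\Pr[S]\ge 1/2$, so conditioned on $S$ the expected infidelity is at most $2\eps$, and hence there exists $r^*\in S$ with $1-q_{r^*}\le 2\eps$.

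Next we build $\Ccal$ by simulating $\Acal$ along branch $r^*$. Each adaptive single-qubit measurement becomes a $Z$-Pauli postselection onto outcome $r^*_i$, followed (if $r^*_i=1$) by a Clifford $X$ correction so the qubit returns to $\ket 0$. Each $\Tgate$ gate is realized by the standard gate-injection gadget, which consumes one of the $t$ input magic states via a CNOT and a $Z$-postselection that returns the magic-state qubit to $\ket 0$. Since all classical control is now resolved to the fixed values in $r^*$, the remaining gates form a Clifford circuit. Finally we append $a$ further $Z$-postselections to force the original ancilla register to $\ket{0^a}$, yielding a circuit in the canonical form of \Cref{def:circuit_post} with exactly $t$ magic-state inputs.

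By construction the output is $\ket\phi\ket{0^{t+a}}$ with $\ket\phi\propto(I_n\otimes\bra{0^a})\ket{\phi_{r^*}}$. Writing $\ket{\phi_{r^*}}=\alpha\ket{\phi}\ket{0^a}+\beta\ket{\perp}$ with $\ket{\perp}$ supported on ancilla strings $\neq 0^a$, we get $q_{r^*}=|\alpha|^2|\braket\psi\phi|^2\le|\braket\psi\phi|^2$, so $|\braket\psi\phi|^2\ge 1-2\eps$ and
\[
\tfrac12\vabs{\ketbra\phi\phi-\ketbra\psi\psi}_1=\sqrt{1-|\braket\psi\phi|^2}\le\sqrt{2\eps}\le\sqrt{6\eps}.
\]
The main delicate point will be verifying that the replacements above---adaptive measurements becoming fixed $Z$-postselections, interleaved with gate-injection postselections on fresh magic states---do produce a valid circuit of the exact canonical form $C_{m+1}M_m C_m \cdots M_1 C_1$ with each $M_j=I+P_j$ a Hermitian Pauli, and that the final output pattern on the last $t+a$ qubits is exactly $\ket{0^{t+a}}$; this last requirement is why we must append the $a$ explicit ancilla postselections at the end rather than rely on the adaptive circuit's only-approximate uncomputation of its ancillas.
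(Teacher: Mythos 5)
Your proof takes essentially the same route as the paper: decompose the channel's output into pure branches, use an averaging argument to find a single branch $r^*$ that both uses few $T$ gates and has high fidelity with the target, and then simulate that branch by a Clifford circuit with magic-state inputs, converting adaptive measurements into $Z$-postselections and $T$ gates into gate-injection gadgets, with a final batch of $Z$-postselections to zero out the ancillas. This is the argument of the paper (itself modeled on \cite[Lemma~5.4]{beverland2020lower}); your version merges the two Markov-type applications into a single conditional-expectation step, which is a stylistic variant that yields the same conclusion.

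One genuine (though ultimately harmless) slip: the inequality ``$\sum_r p_r(1-q_r)\le\eps$'' does not follow from the trace-distance hypothesis. Trace distance at most $\eps$ between $\Acal(\ketbra{0^n}{0^n})$ and the pure target gives, via Fuchs--van de Graaf, fidelity $\sum_r p_r q_r\ge(1-\eps)^2\ge 1-2\eps$, so the expected infidelity is at most $2\eps$, not $\eps$. Propagating this through your conditional-expectation step with $\Pr[S]\ge 1/2$ gives $\E[1-q_r\mid S]\le 4\eps$, hence some $r^*\in S$ with $1-q_{r^*}\le 4\eps$ and $\tfrac12\vabs{\ketbra\phi\phi-\ketbra\psi\psi}_1\le\sqrt{4\eps}$, which is still $\le\sqrt{6\eps}$, so the claim survives with your bound of $\sqrt{2\eps}$ replaced by $\sqrt{4\eps}$. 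You should also add the brief padding remark the paper uses (inserting dummy $T$ gates acting on a $\ket 0$ ancilla) to ensure the branch uses exactly $t$ rather than at most $t$ magic states, so the resulting circuit matches the stated input $\ket{0^n}\ket{\Tgate}^{\otimes t}\ket{0^a}$ exactly.
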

\begin{proof}
Assume $\Acal$ uses $a$ ancillas. Define $\rho=\ketbra{0^n}{0^n}$. Recall from \Cref{eq:kraus_2} that
\begin{align}\label{eq:clm:state_to_post-sel_1}
\Acal(\rho)=\sum_rp_r(\rho)\cdot\sigma_r(\rho),
\end{align}
where $p_r(\rho)$ is a distribution.
In addition, $K_r$ uses $\Tcal_r$ $\Tgate$ gates and 
\begin{align}\label{eq:clm:state_to_post-sel_2}
\sigma_r(\rho)=\ketbra{\pi_r}{\pi_r},
\end{align}
where $\ket{\pi_r}\propto K_r\pbra{\ket{0^n}\ket{0^a}}$ is a pure state.
We will find some $r$ such that $\ket{\pi_r}$ is close to $\ket\psi\ket{0^{a}}$ and $K_r$ uses a small number of $\Tgate$ gates.

By assumption, $\Acal(\rho)$ is $\eps$-close to $\ketbra\psi\psi\otimes\ketbra{0^a}{0^a}$ in trace distance.
Hence the fidelity $F$ between $\Acal(\rho)$ and $\ketbra\psi\psi\otimes\ketbra{0^a}{0^a}$ is
\begin{align}
F=\bra{0^a}\bra{\psi}\Acal(\rho)\ket\psi\ket{0^a}\ge(1-\eps)^2\ge1-2\eps,
\end{align}
where we use the inequality $1-\sqrt F\le\eps$ (see e.g., \cite[Theorem 9.3.1]{wilde2011classical}).
Plugging this into \Cref{eq:clm:state_to_post-sel_1} and \Cref{eq:clm:state_to_post-sel_2}, we have
\begin{equation}
\E_{r\sim p_r(\rho)}\Biggl[{\underbrace{\abs{\braket{\pi_r}\psi\ket{0^a}}^2}_{F_r}}\Biggr]\ge1-2\eps.
\end{equation}
Hence by Markov's inequality, for at least $2/3$ mass of $r$ (under distribution $p_r(\rho)$), the fidelity between $\ket{\pi_r}$ and $\ket\psi\ket{0^a}$ is $F_r\ge1-6\eps$.
On the other hand, since $\Tcal^0(\Acal)=\E_{r\sim p_r(\rho)}\sbra{\Tcal_r}$, for at least $1/2$ mass of $r$ we have $\Tcal_r\le2\cdot\Tcal^0(\Acal)$.
Hence there exists some $r$ such that $\Tcal_r\le2\cdot\Tcal^0(\Acal)$ and $F_r\ge1-6\eps$.

Fix any such $r$ and write $\ket{\pi_r}=\alpha\ket\phi\ket{0^a}+\sqrt{1-\alpha}\ket\bot$, where $\ket\phi$ is a normalized state and $(I\otimes\ketbra{0^a}{0^a})\ket\bot=0$.
Since 
\begin{equation}
\begin{aligned}
1-6\eps
&\le F_r=\abs{\braket{\pi_r}\psi\ket{0^a}}^2
=\alpha^2\abs{\braket\phi\psi}^2\le\abs{\braket\phi\psi}^2,
\end{aligned}
\end{equation}
we know that the fidelity between $\ket\phi$ and $\ket\psi$ is at least $1-6\eps$.
This implies that the trace distance between $\ket\phi$ and $\ket\psi$ is at most $\sqrt{6\eps}$ (see e.g., \cite[Theorem 9.3.1]{wilde2011classical}).

Finally we observe that $\ket{\phi}\ket{0^a}$ is obtained from $\ket{\pi_r}$ followed by postselecting the last $a$ qubits being all-zero.
This means we can construct $\ket\phi\ket{0^a}$ by applying $K_r$ and the above postselection on $\ket{0^n}\ket{0^a}$.
This process uses $t=2\cdot\Tcal^0(\Acal)$ $\Tgate$ gates,\footnote{Though $\Tcal_r\le2\cdot\Tcal(\Acal)$,  we can always ensure that there are exactly $t$ $\Tgate$ gates (say by applying dummy $\Tgate$ gates acting on one of the ancillas in the $|0\rangle$ state at the beginning of the circuit; note $T|0\rangle=|0\rangle$).} each of which can be implemented using a magic state $|T\rangle$ and gate injection. 
This gives a Clifford circuit $\Ccal$ with  Pauli postselection and input state $\ket{0^n}\ket{\Tgate}^{\otimes t}\ket{0^a}$ which prepares $|\phi\rangle|0^a\rangle$.
\end{proof}

Given \Cref{clm:state_to_post-sel}, we now focus on Clifford+$\Tgate$ circuits with Pauli postselections and ancillas. To complete the proof of \Cref{thm:state_adaptive_lb}, it is sufficient to prove the following proposition.

\begin{proposition}\label{prop:state_post_lb}
There is an $n$-qubit state $|\psi\rangle$ such that the following holds.
Assume $\Ccal$ is a Clifford circuit with $m$ Pauli postselections and $a$ ancillas.
Assume $\Ccal\pbra{\ket{0^n}{\ket T}^{\otimes t}\ket{0^a}}=\ket{\phi}\ket{0^{t+a}}$ and $\frac12\vabs{\ketbra\phi\phi-\ketbra\psi\psi}_1\le\eps$.
Then $t=\Omega\pbra{\sqrt{2^n\log(1/\eps)}+\log(1/\eps)}$.
\end{proposition}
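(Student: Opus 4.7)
The plan is to split the $\Omega(\sqrt{2^n\log(1/\eps)}+\log(1/\eps))$ bound into its two terms and prove each by a separate argument, picking a single ``generic'' hard state $\ket\psi$ (for instance a Haar-random $n$-qubit pure state, together with a union bound over the two failure events, or an explicit tensor product) that witnesses both bounds simultaneously.

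For the single-qubit term $\Omega(\log(1/\eps))$, I would choose $\ket\psi$ so that its reduced density matrix on the first qubit, call it $\rho_1$, is a ``hard'' single-qubit mixed state in the sense of \cite{beverland2020lower}. Monotonicity of trace distance under partial trace means that any circuit $\Ccal$ producing $\ket\phi$ with $\frac12\vabs{\ketbra\phi\phi-\ketbra\psi\psi}_1\le\eps$ can be composed with the partial trace over the remaining $n-1$ qubits to yield a preparation of $\rho_1$ to within $\eps$ in trace distance. Since $\Ccal$ lives in the Clifford-with-Pauli-postselection model, I would then invoke the single-qubit $\Tgate$-count lower bound of \cite{beverland2020lower}, generalized to this model via the canonical form of \Cref{sec:canonical_form}, to get $t=\Omega(\log(1/\eps))$.

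For the covering term $\Omega(\sqrt{2^n\log(1/\eps)})$, I would run a volume-counting argument in the spirit of \cite{low2018trading}. On one side, the $\eps$-net of pure $n$-qubit states in trace distance has size $\exp(\Omega(2^n\log(1/\eps)))$ by a standard volume bound on the sphere in $\Cbb^{2^n}$. On the other side, I would use the canonical form of \Cref{sec:canonical_form} to argue that every Clifford-with-Pauli-postselection circuit consuming $t$ magic states can be put into a standard shape whose description only depends on $n$ and $t$ (not on the ancilla count $a$ nor on the raw number of postselections $m$). Counting the discrete data (Clifford tableau up to irrelevant ancilla action plus a bounded list of Hermitian Paulis) together with an $\eps$-discretization of the continuous parameters should give at most $\exp(O(t\log(t)+t\log(1/\eps))+\poly(n,t))$ reachable states up to $\eps$. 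Matching this against the covering lower bound and solving for $t$ yields $t=\Omega(\sqrt{2^n\log(1/\eps)})$ once the $\log(1/\eps)$ term is already handled by the previous step.

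The main obstacle is the output-counting step. Pauli postselection is nonlinear and its parameters $(m,a)$ are a priori unbounded, so a naive count that multiplies Clifford choices by postselection choices will be far too loose and will depend on $a$. The canonical form of \Cref{sec:canonical_form} is the key tool: it should let us commute postselections to one end, consolidate them, and eliminate spurious ancillary degrees of freedom, producing a normalized parametrization whose size grows only with $n+t$. Translating this structural statement into the quantitative exponent $O(t^2+t\log(1/\eps))$ above — in particular, verifying that the number of inequivalent postselection patterns and the continuous phase parameters contribute at most $\exp(O(t^2))$ and $\exp(O(t\log(1/\eps)))$ respectively — is where the actual work lies.
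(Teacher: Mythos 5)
Your high-level plan matches the paper's proof: split into the two terms, handle $\Omega(\log(1/\eps))$ by citing Beverland et al.\ \cite{beverland2020lower}, and handle $\Omega(\sqrt{2^n\log(1/\eps)})$ by a sphere-packing-plus-counting argument where the canonical form of \Cref{sec:canonical_form} is used precisely as you say, to decouple the circuit description from $a$ and $m$. However, there is a genuine error in your output-counting step.

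You claim the count of ``reachable states up to $\eps$'' is
$\exp\bigl(O(t\log t + t\log(1/\eps))+\poly(n,t)\bigr)$,
with the $t\log(1/\eps)$ term coming from an ``$\eps$-discretization of the continuous phase parameters.'' There are no continuous parameters. The canonical form (\Cref{lem:canonical_form_state}) rewrites the circuit as
$\ket{\phi}\ket{0^t}\propto C M_t\cdots M_1\bigl(\ket{0^n}\ket{T}^{\otimes t}\bigr)$
where $C$ is a single $(n+t)$-qubit Clifford and each $M_j=I+P_j$ for an $(n+t)$-qubit Hermitian Pauli $P_j$; everything else is fixed data. So the set of reachable output states is \emph{finite} once $n,t$ are fixed: there are $2^{O((n+t)^2)}$ Cliffords and $2\cdot 4^{n+t}$ choices per Pauli, giving $2^{O(n^2+t^2)}$ circuits and hence at most that many output states, with no $\eps$-dependence whatsoever. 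The pigeonhole then is: if the $N=(1/\eps)^{\Theta(2^n)}$ pairwise $\eps$-separated states from the packing lemma were all approximable to error $\eps$, each would need a \emph{distinct} canonical circuit, forcing $2^{O(n^2+t^2)}\ge(1/\eps)^{\Theta(2^n)}$ and so $t=\Omega(\sqrt{2^n\log(1/\eps)})$. Your extra $\exp(O(t\log(1/\eps)))$ factor happens not to break the final bound in this problem (it would give the slightly worse side-case $t=\Omega(2^n)$, which is still $\ge\sqrt{2^n\log(1/\eps)}$ when the second term dominates), but it reflects a misreading of what the canonical form produces, and in a tighter application it would be fatal. The arithmetic you label ``where the actual work lies'' is in fact trivial once the canonical form is stated correctly.

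Two smaller remarks. Your partial-trace reduction for the $\Omega(\log(1/\eps))$ term is a legitimate alternative, but the paper simply invokes \cite[Lemma 5.9]{beverland2020lower} directly (on the embedded state $\ket{\tau}\otimes\ket{0^{n-1}}$). And you do not actually need a Haar-random state with a union bound to get one witness for both terms: proving $\max_\psi t_\psi=\Omega(\log(1/\eps))$ from one state and $\max_\psi t_\psi=\Omega(\sqrt{2^n\log(1/\eps)})$ from another already gives $\max_\psi t_\psi=\Omega(\max)=\Omega(\text{sum})$, so whichever witness achieves the larger bound suffices. Finally, note that your sphere-packing claim needs the phase-mod-out step (as in \Cref{fct:sphere_packing}) to convert an $\ell_2$ net to a trace-distance-separated family.
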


The first lower bound of $\Omega(\log(1/\eps))$ holds even for single-qubit state preparation. This is proved by Beverland, Campbell, Howard, and Kliuchnikov \cite[Lemma 5.9]{beverland2020lower}.
The second $\Tgate$-count lower bound is a counting argument as in \cite{low2018trading}: the number of distinct circuits should be at least the number of distinguishable states in question.
The latter quantity follows directly from standard sphere packing bounds for $\ell_2$ balls (see e.g., \cite{wiki:Covering_number}).

\begin{fact}\label{fct:sphere_packing}
There are $N=(1/\eps)^{\Theta(2^n)}$ $n$-qubit states $\ket{\tau_1},\ldots,\ket{\tau_N}$ with pairwise trace distance at least $\eps$.
\end{fact}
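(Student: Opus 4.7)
The plan is to reduce the existence of a large trace-distance packing of $n$-qubit states to a standard Euclidean sphere-packing bound. Since the trace distance between pure states $\ket{\tau},\ket{\sigma}$ equals $\sqrt{1-|\braket{\tau}{\sigma}|^2}$, it suffices to exhibit $N=(1/\eps)^{\Theta(2^n)}$ unit vectors in $\Cbb^{2^n}$ whose pairwise overlap satisfies $|\langle v_i,v_j\rangle|\le\sqrt{1-\eps^2}$.

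First I would restrict attention to real unit vectors in $\Rbb^{2^n}$, corresponding to $n$-qubit states with real amplitudes (a special case of all $n$-qubit states, which is enough since we only need existence). For two such vectors $u,v$, the identity $\|u\pm v\|_2^2=2\pm 2\langle u,v\rangle$ shows that the overlap bound $|\langle u,v\rangle|\le\sqrt{1-\eps^2}$ is equivalent, up to an absolute constant in front of $\eps$, to requiring $\min(\|u-v\|_2,\|u+v\|_2)\ge c\eps$ for some constant $c>0$. This is precisely the condition that the images of $u$ and $v$ in the real projective space $\mathbb{RP}^{2^n-1}$ are at distance at least $c\eps$ in the quotient metric.

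Next I would produce such a packing by the classical volume/greedy argument on the sphere $S^{2^n-1}$. A maximal $\delta$-packing must cover the sphere by Euclidean balls of radius $\delta$ around the packing points, so its size $N$ satisfies $N\cdot\mathrm{vol}(\text{cap of radius }\delta)\ge\mathrm{vol}(S^{2^n-1})$. Since the cap-volume ratio scales as $\Theta(\delta^{2^n-1})$ for small $\delta$, this yields $N=\Omega((1/\delta)^{2^n-1})$; taking $\delta=\Theta(\eps)$ gives $N=(1/\eps)^{\Theta(2^n)}$. Passing from $S^{2^n-1}$ to $\mathbb{RP}^{2^n-1}$ (identifying antipodal vectors) costs only a factor of $2$ in $N$, which does not affect the $\Theta$ in the exponent.

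I do not anticipate any serious obstacle here; the result is a textbook fact and the proof cited (via the Wikipedia reference on covering numbers) is exactly this volume argument. The only care required is bookkeeping constants when converting between Euclidean distance, projective quotient distance, and trace distance, but all these conversions introduce only absolute constants absorbed into the $\Theta$.
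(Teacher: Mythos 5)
Your proof is correct, but it differs from the paper's in how it handles the irrelevant global phase. The paper first takes a maximal $\ell_2$-separated set of $M=(1/\eps)^{\Theta(2^n)}$ complex unit vectors, then observes that each vector can be $\eps$-close in trace distance (which mods out phase) to at most $O(1/\eps)$ others after discretizing the phase circle, and prunes to $N=\Theta(\eps M)$ survivors. You instead restrict at the outset to real amplitudes and work directly in real projective space: the identity $\min(\|u-v\|,\|u+v\|)=\sqrt{2(1-|\langle u,v\rangle|)}$, together with the factorization $\sqrt{1-\langle u,v\rangle^2}=\sqrt{1-|\langle u,v\rangle|}\cdot\sqrt{1+|\langle u,v\rangle|}$, shows that trace distance and projective distance are within a $\sqrt{2}$ factor of each other, so the volume argument on $S^{2^n-1}$ (a maximal $\delta$-packing together with its antipodes covers the sphere with caps of radius $\delta$) gives $N=\Omega((1/\eps)^{2^n-1})=(1/\eps)^{\Omega(2^n)}$ at once. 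Your route is a bit cleaner since it avoids the separate pruning step, at the cost of the (harmless) restriction to real states; the exponent is $2^n-1$ rather than the paper's $\Theta(2^{n+1})$ from packing in $\mathbb{C}^{2^n}$, but both are $\Theta(2^n)$. One small imprecision worth noting: you phrase the overlap/projective-distance equivalence as being ``up to an absolute constant in front of $\eps$'', which is correct, but it is worth recording the factorization above since the raw relations $d(u,v)^2=2(1-|\langle u,v\rangle|)$ and $\textrm{td}^2=1-\langle u,v\rangle^2$ look quadratic and a reader may not immediately see the linear comparability.
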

\begin{proof}
We start with a set of $n$-qubit states $\ket{\zeta_1},\ldots,\ket{\zeta_M}$ that are pairwise $(10\eps)$-far in $\ell_2$ distance.
By the standard $\ell_2$ sphere packing, we can choose $M=(1/\eps)^{\Theta(2^n)}$.
Note that the trace distance between two pure states is asymptotically lower bounded by the $\ell_2$ distance with a global phase, i.e., 
\begin{align}
\frac12\vabs{\ketbra\phi\phi-\ketbra\psi\psi}_1
&=\sqrt{1-\abs{\braket\phi\psi}^2}
\ge\sqrt{1-\abs{\braket\phi\psi}}
=\frac1{\sqrt2}\sqrt{2-2\abs{\braket\phi\psi}}\\
&=\frac1{\sqrt2}\sqrt{2-2\max_{\theta\in[0,2\pi)}\Re\bigl(e^{i\theta}\braket\phi\psi\bigr)}\\
&=\frac1{\sqrt2}\cdot\min_{\theta\in[0,2\pi)}\vabs{\ket\phi-e^{i\theta}\ket\psi}.
\label{eq:fct:sphere_packing}
\end{align}
By discretizing the rotation angle $\theta$, each $\ket{\zeta_j}$ can be $\eps$-close to $O(1/\eps)$ other $\ket{\zeta_k}$'s in trace distance.
Hence we can find $N=\Theta(\eps M)=(1/\eps)^{\Theta(2^n)}$ states $\ket{\tau_1},\ldots,\ket{\tau_N}$ from $\ket{\zeta_1},\ldots,\ket{\zeta_M}$ such that their pairwise trace distance is at least $\eps$.
\end{proof}

To upper bound the number of distinct circuits, we express it in a canonical form as described in \Cref{lem:canonical_form_state}.
This was established in \cite[Section A.7]{beverland2020lower}.
Crucially, this canonical form has size independent of the number of Pauli postselections and ancillas.

\begin{lemma}[\cite{beverland2020lower}]\label{lem:canonical_form_state}
Let $\Ccal$ be a Clifford circuit with $m$ Pauli postselections, $n$ input qubits, and $a$ ancillas.
Let $\ket{\phi}$ be an $n$-qubit state.
Assume
\begin{equation}
\ket{\phi}\ket{0^{t+a}}=\Ccal\pbra{\ket{0^n}\ket{\Tgate}^{\otimes t}\ket{0^a}}.
\end{equation}

Then there exists an $(n+t)$-qubit Clifford unitary $C$ and matrices $M_1,M_2,\ldots,M_t$ such that
\begin{equation}
\label{eq:canonform}
\ket{\phi}\ket{0^t}\propto CM_t\cdots M_2M_1\pbra{\ket{0^n}\ket{\Tgate}^{\otimes t}},
\end{equation}
where each $M_j$ is $I+P_j$ for some $(n+t)$-qubit Hermitian Pauli $P_j$.
\end{lemma}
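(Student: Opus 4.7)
The plan is to transform the explicit form in \Cref{def:circuit_post} into the canonical form \Cref{eq:canonform} in three structural steps: consolidate all Cliffords into a single one at the left, eliminate the ancilla register, and reduce the number of Pauli projectors down to exactly $t$.

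First, starting from $\ket{\phi}\ket{0^{t+a}} \propto C_{m+1} M_m C_m \cdots M_1 C_1 \bigl(\ket{0^n}\ket{T}^{\otimes t}\ket{0^a}\bigr)$, I would iteratively apply the commutation identity $(I+P)C = C(I + C^\dagger P C)$. Since conjugation of a Hermitian Pauli by a Clifford yields another Hermitian Pauli (up to an overall sign absorbed into the projector), each such move pushes one Clifford past one projector without changing the projector count. Iterating until every Clifford has been collected on the left produces
\[
\ket{\phi}\ket{0^{t+a}} \;\propto\; \widetilde{C}\, \widetilde{M}_m \cdots \widetilde{M}_1 \bigl(\ket{0^n}\ket{T}^{\otimes t}\ket{0^a}\bigr),
\]
with $\widetilde{C}$ a single $(n{+}t{+}a)$-qubit Clifford and each $\widetilde{M}_j = I + \widetilde{P}_j$ a Pauli projector.

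Next, to remove the $a$ ancillas, I would use that the ancilla register is $\ket{0^a}$ at both ends, and hence is constrained by the stabilizers $Z_i^{\mathrm{anc}}$ for $i = 1, \ldots, a$. Inserting the expansion $\ket{0^a}\bra{0^a} = \prod_i \tfrac12 (I + Z_i^{\mathrm{anc}})$ as an extra postselection layer on the right and reusing the Clifford-propagation move above, I would push each $Z_i^{\mathrm{anc}}$ through $\widetilde{C}$ and every $\widetilde{M}_j$ until it commutes with all of them; because the ancilla is already in $\ket{0^a}$ on the input side, the resulting ancilla factor decouples and can be peeled off. What remains is an $(n{+}t)$-qubit expression
\[
\ket{\phi}\ket{0^{t}} \;\propto\; C'\, N_{m'} \cdots N_1\bigl(\ket{0^n}\ket{T}^{\otimes t}\bigr)
\]
with a single Clifford $C'$ and Pauli projectors $N_j = I + R_j$, but possibly $m' > t$.

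I expect the main obstacle to be the final step of reducing $m'$ down to exactly $t$. The key idea is that the input state $\ket{0^n}\ket{T}^{\otimes t}$ already carries a rank-$n$ stabilizer group (from the $Z_i$'s on the first register), so any Pauli projector whose magic-register restriction lies in $\{\pm I\}$ either acts as a global sign---which can be absorbed into $C'$ if it is $+1$, or would send the state to zero if it is $-1$, contradicting the assumed execution. The postselections that survive act nontrivially on the $t$-qubit magic register, and modulo global phase their Pauli components live in an $\mathbb{F}_2$-vector space of dimension $2t$. An inductive symplectic Gram--Schmidt on the magic register---at each stage absorbing a magic-register Clifford into $C'$ to rotate the next $R_j$ into a canonical form $Z_{n+j}$, and using anticommuting pairs to erase redundant projectors further downstream---reduces the count to at most $t$. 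Any shortfall is padded by the trivial Pauli $P_j = I$ (so $M_j = 2I$, contributing only to the overall scalar absorbed by $\propto$), giving exactly $t$ projectors as in \Cref{eq:canonform}. The delicate point is propagating signs and handling anticommutation cleanly throughout this reduction without disturbing the input-output identity of the circuit, which I would carry out by induction on $m'$ following the approach of \cite[Section A.7]{beverland2020lower}.
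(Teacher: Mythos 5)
Your step 1 (consolidating all Cliffords to the left via $(I+P)C = C(I + C^\dagger P C)$) is exactly the paper's first move, and your overall three-phase plan matches the structure of the argument in \cite[Section A.7]{beverland2020lower}. But there are two substantive gaps.

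First, your ancilla-removal step is underspecified in a way that hides the hardest part. Commuting the projectors past the ancilla $Z$'s does force each projector's ancilla component to be $Z$-type (so it acts trivially on $\ket{0^a}$), but it does \emph{not} by itself let you ``peel off'' the ancilla register, because the single consolidated Clifford $\widetilde{C}$ can still entangle the ancillas with the rest. The paper needs a separate structural lemma (\Cref{lem:generalize_beverlend_lemA.10}, which generalizes \cite[Lemma A.10]{beverland2020lower}) showing that a Clifford satisfying $C'(\ket{\phi}\ket{0^a}) = \ket{\psi}\ket{0^a}$ can be replaced by a Clifford on the non-ancilla qubits alone; that lemma is proved by an inductive factorization argument (\Cref{fct:beverland_propA.11}). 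Your sketch skips this.

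Second, your step-3 claim that ``any Pauli projector whose magic-register restriction lies in $\{\pm I\}$ either acts as a global sign\ldots or would send the state to zero'' is false as stated. Take $P = X_1 \otimes I^{\otimes (n+t-1)}$ acting on $\ket{0^n}\ket{T}^{\otimes t}$: its magic-register restriction is $I$, yet $(I+P)\ket{0^n}\ket{T}^{\otimes t} = (\ket{0}+\ket{1})\ket{0^{n-1}}\ket{T}^{\otimes t}$ is a genuine nontrivial projection. The statement only becomes true \emph{after} you have already cleaned the projector against the input stabilizer group so that its first-register part is $Z$-type, i.e., a stabilizer of $\ket{0^n}$ up to sign. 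The paper (\Cref{lem:organize}) does precisely that cleanup --- iteratively merging any $Q_j$ that anticommutes with the stabilizer or with a prior $Q_{j'}$ into the Clifford, and dropping $Q_j \in \langle Q_1,\dots,Q_{j-1},\Gcal\rangle$ --- and then bounds the count $c$ of surviving projectors by a stabilizer-counting inequality $|\Stab(\text{in})|\cdot 2^c \le |\Stab(\text{out})|$, rather than by your proposed symplectic Gram--Schmidt rotation into $Z_{n+j}$ form. Your route could plausibly be made to work, but as written it has the order of operations reversed (claiming a consequence of the cleanup before doing the cleanup) and trades the paper's clean counting argument for a more delicate canonical-form reduction.
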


We remark that \Cref{lem:canonical_form_state} provides a canonical form for state preparation circuits. With a slight tweak, the original proof works for a more general setting including unitary synthesis circuits. For completeness and future reference, we present the statement as \Cref{thm:canonical_form} and prove it in \Cref{sec:canonical_form}.

We are now ready to prove \Cref{prop:state_post_lb}.
This, combined with \Cref{clm:state_to_post-sel}, completes the proof of \Cref{thm:state_adaptive_lb}.

\begin{proof}[Proof of \Cref{prop:state_post_lb}]
Assume $\eps<1/8$. This ensures that the trace distance between $\ket\phi$ and $\ket\psi$ is at most $1/8$ (see e.g., \cite[Section 9]{wilde2011classical}). Then the $\Omega(\log(1/\eps))$ lower bound follows from \cite[Lemma 5.9]{beverland2020lower}.

To get the $\Omega\pbra{\sqrt{2^n\log(1/\eps)}}$ lower bound,
assume for every $n$-qubit state $|\psi\rangle $ there is a Clifford circuit $\mathcal{C}$ with Pauli postselection such that $\Ccal\pbra{\ket{0^n}{\ket T}^{\otimes t}\ket{0^a}}=\ket{\phi}\ket{0^{t+a}}$ and the trace distance of $\ket\phi,\ket\psi$ is at most $\eps$.

By setting $\ket{\psi}=\ket{\tau_j}$ for each $j\in[N]$ from \Cref{fct:sphere_packing}, we apply \Cref{lem:canonical_form_state} to get a state preparation circuit in canonical form. Now let us count the number of possible state preparation circuits of the form \Cref{eq:canonform}. To this end note that there are $2^{O((n+t)^2)}$ Clifford unitaries $C$ on $n+t$ qubits,  and there are $2\cdot4^{n+t}$ choices for each Hermitian Pauli $P_j$.
So we have $2^{O(n^2+t^2)}$ possible state preparation circuits of the form \Cref{eq:canonform}.
By the construction of $\ket{\tau_j}$'s in \Cref{fct:sphere_packing}, each $\ket{\tau_j}$ gives a different circuit.
Thus $2^{O(n^2+t^2)}\ge(1/\eps)^{\Theta(2^n)}$, which gives $t=\Omega\pbra{\sqrt{2^n\log(1/\eps)}}$ since $\eps<1/8$.
\end{proof}

\subsection{Unitary synthesis}\label{sec:unitary_syn_lb}

Now we turn to the unitary synthesis $\Tgate$-count lower bounds (\Cref{thm:diagonal_adaptive_lb} and \Cref{thm:unitary_adaptive_lb}).
We will reduce them to state preparation tasks, by considering the action of an adaptive Clifford+$\Tgate$ circuit on the maximally entangled state and comparing this with the Choi state corresponding to the unitary $U$ of interest.

\begin{definition}[Maximally entangled state and Choi state]\label{def:max_choi}
We use 
$$
\ket\iota=2^{-n/2}\sum_{x\in\bin^n}\ket x\ket x
$$
to denote the $2n$-qubit maximally entangled state.
For an $n$-qubit unitary $U$, we use $\ket{\iota_U}$ to denote the Choi state, defined by
\begin{equation}
\ket{\iota_U}=2^{-n/2}\sum_{x\in\bin^n}\ket x\otimes\pbra{U\ket x}.
\end{equation}
\end{definition}

The following fact follows directly from \Cref{eq:unitary_diam_1} and the definition of the diamond norm.

\begin{fact}\label{fct:convert_to_state}
Assume an adaptive Clifford+$\Tgate$ circuit $\Acal$, using $a$ ancillas, implements an $n$-qubit unitary $U$ up to error $\eps$ in diamond distance.
Then $(I\otimes\Acal)(\ketbra\iota\iota)$ is $\eps$-close in trace distance to $\ketbra{\iota_U}{\iota_U}\otimes\ketbra{0^a}{0^a}$.
\end{fact}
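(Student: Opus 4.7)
The plan is to recognize the claim as a direct specialization of the definition of the diamond distance, combined with the channel--state duality for the maximally entangled state. I will proceed in two short steps.

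First, I invoke the standard fact that the diamond distance between two quantum channels upper bounds the trace distance of their outputs on any fixed bipartite input, even when the channels act only on one half. In particular, plugging the $2n$-qubit state $\ketbra\iota\iota$ into the definition of $\vabs{\Acal-\Ucal}_\diamond$ yields
\begin{equation}
\frac12\vabs{(I\otimes\Acal)(\ketbra\iota\iota)-(I\otimes\Ucal)(\ketbra\iota\iota)}_1\le\vabs{\Acal-\Ucal}_\diamond\le\eps,
\end{equation}
so $(I\otimes\Acal)(\ketbra\iota\iota)$ is $\eps$-close in trace distance to $(I\otimes\Ucal)(\ketbra\iota\iota)$.

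Second, I compute the latter output in closed form. Unfolding $\Ucal(\sigma)=U\sigma U^\dag\otimes\ketbra{0^a}{0^a}$ and applying the ricochet identity
\begin{equation}
(I\otimes U)\ket\iota=2^{-n/2}\sum_{x\in\bin^n}\ket x\otimes U\ket x=\ket{\iota_U},
\end{equation}
which is literally the definition of $\ket{\iota_U}$ in \Cref{def:max_choi}, one obtains
\begin{equation}
(I\otimes\Ucal)(\ketbra\iota\iota)=(I\otimes U)\ketbra\iota\iota(I\otimes U^\dag)\otimes\ketbra{0^a}{0^a}=\ketbra{\iota_U}{\iota_U}\otimes\ketbra{0^a}{0^a}.
\end{equation}
Substituting this into the previous display gives the claim.

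There is no substantive obstacle here: the fact is essentially a tautology obtained by combining the definition of the diamond distance (as a worst-case, ancilla-assisted trace distance) with the Choi--Jamio\l{}kowski correspondence applied to the maximally entangled state. The only minor point to note is that the diamond norm's supremum ranges over ancillary registers of arbitrary dimension, so choosing the $n$-qubit reference system paired with the input by $\ket\iota$ is a permitted choice.
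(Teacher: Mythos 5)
Your proof is correct and matches the paper's intent exactly: the paper states that \Cref{fct:convert_to_state} ``follows directly from \Cref{eq:unitary_diam_1} and the definition of the diamond norm,'' and your two steps (specialize the diamond-norm supremum to the maximally entangled input, then compute $(I\otimes\Ucal)(\ketbra\iota\iota)$ via the ricochet identity) are precisely the unpacking of that remark.

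One tiny note of hygiene rather than a gap: your first displayed inequality has a factor of $\tfrac12$ on the left but not on the right, which is fine as a (slack) bound but worth being aware of --- the tight chain is $\tfrac12\vabs{(I\otimes\Acal)(\ketbra\iota\iota)-(I\otimes\Ucal)(\ketbra\iota\iota)}_1\le\tfrac12\vabs{\Acal-\Ucal}_\diamond\le\tfrac\eps2\le\eps$, which still gives what the Fact asserts.
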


By an almost identical argument as \Cref{clm:state_to_post-sel}, we have the following claim.

\begin{claim}\label{clm:unitary_to_post-sel}
Assume an adaptive Clifford+$\Tgate$ circuit $\Acal$ implements an $n$-qubit unitary $U$ up to error $\eps$ in the diamond distance.
Let $t=2\cdot\Tcal(\Acal)$.
Then for some integers $m,a\ge0$, there exists a Clifford circuit $\Ccal$ with $m$ Pauli postselections and $a$ ancillas such that $\Ccal\pbra{\ket{0^{2n}}\ket{\Tgate}^{\otimes t}\ket{0^a}}=\ket\phi\ket{0^{t+a}}$ and $\frac12\vabs{\ketbra{\iota_U}{\iota_U}-\ketbra\phi\phi}_1\le\sqrt{6\eps}$.
\end{claim}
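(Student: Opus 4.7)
The plan is to reduce to the state preparation setting of \Cref{clm:state_to_post-sel} by running $\Acal$ on one half of the $2n$-qubit maximally entangled state $\ket\iota$, and then applying the same averaging argument used there. First, $\ket\iota$ can be prepared exactly from $\ket{0^{2n}}$ using only Clifford gates (Hadamards on the first $n$ qubits followed by transversal CNOTs). Combining this Clifford preparation with $I \otimes \Acal$ gives an adaptive Clifford+$\Tgate$ circuit on $2n$ input qubits that starts from $\ket{0^{2n}}$ and whose output is, by \Cref{fct:convert_to_state}, within $\eps$ trace distance of $\ketbra{\iota_U}{\iota_U} \otimes \ketbra{0^a}{0^a}$. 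By \Cref{eq:t_complexity_1}, the worst-case expected $\Tgate$-count of $I \otimes \Acal$ coincides with $\Tcal(\Acal)$.

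Next, I would repeat the argument in the proof of \Cref{clm:state_to_post-sel} verbatim, with $I \otimes \Acal$ in place of $\Acal$ and input $\rho = \ketbra\iota\iota$ in place of $\ketbra{0^n}{0^n}$. Writing $(I\otimes\Acal)(\ketbra\iota\iota) = \sum_r p_r(\rho)\cdot\ketbra{\pi_r}{\pi_r}$ where each $\ket{\pi_r} \propto K_r\bigl(\ket\iota\ket{0^a}\bigr)$ is pure, the $\eps$ trace-distance bound yields fidelity $\E_{r\sim p_r(\rho)}\bigl[|\braket{\pi_r}{\iota_U}\ket{0^a}|^2\bigr] \geq 1-2\eps$. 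By Markov, at least $2/3$ of the probability mass satisfies $F_r \geq 1-6\eps$, and since the expected $\Tgate$-count on input $\rho$ is at most $\Tcal(I\otimes\Acal) = \Tcal(\Acal)$, at least $1/2$ of the mass has $\Tcal_r \leq 2\Tcal(\Acal) = t$. Any $r$ in the intersection gives a pure state $\ket{\pi_r}$ which, after the decomposition $\ket{\pi_r} = \alpha\ket\phi\ket{0^a} + \sqrt{1-\alpha^2}\ket\bot$ with $(I\otimes\ketbra{0^a}{0^a})\ket\bot = 0$, yields a state $\ket\phi$ of trace distance at most $\sqrt{6\eps}$ from $\ket{\iota_U}$ via the same fidelity-to-trace-distance conversion used earlier.

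Finally, I would assemble the Clifford circuit $\Ccal$ with Pauli postselections that prepares $\ket\phi\ket{0^{t+a}}$ from $\ket{0^{2n}}\ket{\Tgate}^{\otimes t}\ket{0^a}$ as follows: (i) perform the Clifford preparation of $\ket\iota$ on the first $2n$ qubits; (ii) simulate $K_r$ by replacing each $\Tgate$ gate via gate injection from a magic state $\ket{\Tgate}$ (each injection becoming one Pauli postselection), and replacing each mid-circuit computational basis measurement with outcome $r_i$ by the Pauli postselection onto $\ket{r_i}$; (iii) append one last Pauli postselection forcing the $a$-qubit ancilla register to $\ket{0^a}$, which extracts the $\ket\phi$ component of $\ket{\pi_r}$. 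Padding with dummy $\Tgate$ gates on $\ket 0$ ancillas ensures exactly $t$ magic states are consumed. The only conceptual wrinkle relative to \Cref{clm:state_to_post-sel} is switching from $\Tcal^0$ to $\Tcal$, since the second $n$-qubit input to $\Acal$ is the maximally mixed marginal of $\ket\iota$ rather than $\ket{0^n}$; but the worst-case expected bound $\Tcal(\Acal)$ absorbs this, so the averaging argument goes through unchanged and no new estimate is required.
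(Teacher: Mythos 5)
Your proposal is correct and follows essentially the same route as the paper: invoke \Cref{fct:convert_to_state} to pass to the Choi state, run the averaging argument of \Cref{clm:state_to_post-sel} with input $\rho=\ketbra\iota\iota$ and $\Tcal(\Acal)$ in place of $\Tcal^0(\Acal)$ (citing \Cref{eq:t_complexity_1}), and then prepend the Clifford circuit that builds $\ket\iota$ from $\ket{0^{2n}}$. The extra detail you give about how gate injection and mid-circuit measurements become Pauli postselections is already subsumed by the proof of \Cref{clm:state_to_post-sel} that both you and the paper invoke; incidentally, your normalization $\sqrt{1-\alpha^2}$ in the decomposition of $\ket{\pi_r}$ is the correct one (the paper's $\sqrt{1-\alpha}$ looks like a typo).
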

\begin{proof}
By \Cref{fct:convert_to_state}, we know that $(I\otimes\Acal)(\ketbra\iota\iota)$ is $\eps$-close in trace distance to $\ketbra{\iota_U}{\iota_U}\otimes\ketbra{0^a}{0^a}$ for some integer $a\ge0$.
Then following the proof of \Cref{clm:state_to_post-sel} (but replacing the all-zeros input state with $\rho=\ketbra\iota\iota$ and  the expected $\Tgate$-count $\mathcal{T}^{0}(\mathcal{A})$ by the worst-case expected $\Tgate$-count $\mathcal{T}(\mathcal{A})$), we obtain a Clifford circuit $\Ccal'$ with Pauli postselection such that
\begin{equation}
\Ccal'\pbra{\ket\iota\ket{\Tgate}^{\otimes t}\ket{0^a}}=\ket\phi\ket{0^{t+a}},
\end{equation}
where the trace distance between $\ket{\iota_U}$ and $\ket\phi$ is at most $\sqrt{6\eps}$.
Moreover, $\Ccal'$ uses $t=2\cdot\Tcal(I\otimes\Acal)=2\cdot\Tcal(\Acal)$ magic states, where the second equality follows from \Cref{eq:t_complexity_1}.

Now notice that the maximally entangled state 
\begin{equation}
\ket\iota=2^{-n/2}\sum_{x\in\bin^n}\ket x\ket x=\pbra{\frac{\ket{00}+\ket{11}}{\sqrt2}}^{\otimes n}    
\end{equation}
can be constructed by $n$ single-qubit Hadamard gates and $n$ CNOTs on $\ket{0^{2n}}$.
Let $\Ccal$ be the circuit applying these Clifford operations followed by $\Ccal'$. Then we have $\Ccal\pbra{\ket{0^{2n}}\ket{\Tgate}^{\otimes t}\ket{0^a}}=\ket\phi\ket{0^{t+a}}$ as claimed.
\end{proof}

By \Cref{clm:unitary_to_post-sel}, it suffices to prove the following \Cref{prop:diag_post_lb} and \Cref{prop:unitary_post_lb} to obtain \Cref{thm:diagonal_adaptive_lb} and \Cref{thm:unitary_adaptive_lb} respectively.

\begin{proposition}\label{prop:diag_post_lb}
There is an $n$-qubit diagonal unitary $D$ such that the following holds.
Assume $\Ccal$ is a Clifford circuit with $m$ Pauli postselections and $a$ ancillas.
Assume 
$$
\Ccal\pbra{\ket{0^{2n}}{\ket T}^{\otimes t}\ket{0^a}}=\ket{\phi}\ket{0^{t+a}}
$$ 
and $\frac12\vabs{\ketbra\phi\phi-\ketbra{\iota_D}{\iota_D}}_1\le\eps$.
Then $t=\Omega\pbra{\sqrt{2^n\log(1/\eps)}+\log(1/\eps)}$.
\end{proposition}

\begin{proposition}\label{prop:unitary_post_lb}
There is an $n$-qubit unitary $U$ such that the following holds.
Assume $\Ccal$ is a Clifford circuit with $m$ Pauli postselections and $a$ ancillas.
Assume 
$$
\Ccal\pbra{\ket{0^{2n}}{\ket T}^{\otimes t}\ket{0^a}}=\ket{\phi}\ket{0^{t+a}}
$$
and $\frac12\vabs{\ketbra\phi\phi-\ketbra{\iota_U}{\iota_U}}_1\le\eps$.
Then $t=\Omega\pbra{2^n\sqrt{\log(1/\eps)}+\log(1/\eps)}$.
\end{proposition}

Following the proof of \Cref{thm:state_adaptive_lb}, we now need to construct a large set of distinguishable Choi states to apply the counting argument.
For this purpose we relate the trace distance between Choi states to normalized Hilbert–Schmidt (aka Frobenius) distance of the corresponding unitaries.

In particular, if $A$ is a $d\times d$ matrix we write $\hsnorm{A}=\sqrt{\mathrm{Tr}(A^{\dagger} A)/d}$.
\begin{fact}\label{fct:choi_to_unitary}
Let $U$ and $V$ be two $n$-qubit unitaries.
The trace distance between their Choi states is
\begin{align}
\frac12\vabs{\ketbra{\iota_U}{\iota_U}-\ketbra{\iota_V}{\iota_V}}_1
&\ge\frac1{\sqrt2}\cdot\min_{\theta\in[0,2\pi)}\hsnorm{U-e^{i\theta}\cdot V},
\end{align}
where $\hsnorm{\cdot}$ is the normalized Hilbert-Schmidt norm.
\end{fact}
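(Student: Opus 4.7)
The plan is to reduce both sides of the inequality to explicit expressions in the single quantity $\alpha := \abs{\mathrm{Tr}(U^\dag V)}/2^n$ and then verify the resulting scalar inequality.

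First I would compute the overlap of the two Choi states. Using \Cref{def:max_choi}, a direct calculation gives
\begin{equation}
\braket{\iota_U}{\iota_V} = 2^{-n}\sum_{x,y\in\bin^n}\braket{x}{y}\bra{x}U^\dag V\ket{y} = \frac{\mathrm{Tr}(U^\dag V)}{2^n}.
\end{equation}
Then, by the standard identity for the trace distance between pure states (used already in \Cref{eq:fct:sphere_packing}), the left-hand side equals
\begin{equation}
\frac12\vabs{\ketbra{\iota_U}{\iota_U}-\ketbra{\iota_V}{\iota_V}}_1 = \sqrt{1 - \abs{\braket{\iota_U}{\iota_V}}^2} = \sqrt{1 - \alpha^2}.
\end{equation}

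Next I would expand the right-hand side. Since $U$ and $V$ are unitaries, expanding $(U - e^{i\theta}V)^\dag(U - e^{i\theta}V)$ and taking normalized traces yields
\begin{equation}
\hsnorm{U - e^{i\theta}V}^2 = 2 - \frac{2\,\Re\!\left(e^{i\theta}\mathrm{Tr}(V^\dag U)\right)}{2^n}.
\end{equation}
Choosing $\theta$ so that $e^{i\theta}\mathrm{Tr}(V^\dag U) = \abs{\mathrm{Tr}(U^\dag V)}$ minimizes this over $\theta$, and gives
\begin{equation}
\frac{1}{\sqrt{2}}\cdot\min_{\theta\in[0,2\pi)}\hsnorm{U - e^{i\theta}V} = \sqrt{1 - \alpha}.
\end{equation}

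So the desired inequality reduces to $\sqrt{1-\alpha^2} \ge \sqrt{1-\alpha}$, which holds for $\alpha \in [0,1]$ since it is equivalent to $\alpha(1-\alpha) \ge 0$. Finally, the bound $\alpha \le 1$ itself follows from Cauchy--Schwarz: $\abs{\mathrm{Tr}(U^\dag V)}^2 \le \mathrm{Tr}(U^\dag U)\cdot\mathrm{Tr}(V^\dag V) = 2^{2n}$. There is really no serious obstacle here; the only subtlety is to remember the global phase factor $e^{i\theta}$ when minimizing the Hilbert--Schmidt distance, which is precisely what produces the factor $\sqrt{1-\alpha}$ rather than the weaker $\sqrt{2(1-\alpha^2)}$ that would come from ignoring the phase.
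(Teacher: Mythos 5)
Your proposal is correct, and it is essentially the same argument as the paper's: the paper combines the inequality $\sqrt{1-|\braket{\phi}{\psi}|^2}\ge\frac{1}{\sqrt2}\min_\theta\|\ket\phi-e^{i\theta}\ket\psi\|$ (recorded as \Cref{eq:fct:sphere_packing}) with the isometry $\|\ket{\iota_U}-e^{i\theta}\ket{\iota_V}\|=\hsnorm{U-e^{i\theta}V}$, whereas you unfold both of these into the single scalar $\alpha=|\tr(U^\dag V)|/2^n$ and reduce to $\sqrt{1-\alpha^2}\ge\sqrt{1-\alpha}$; the content is identical. (Minor typo: your displayed $\hsnorm{U-e^{i\theta}V}^2$ should have $\Re\!\left(e^{-i\theta}\tr(V^\dag U)\right)$ or equivalently $\Re\!\left(e^{i\theta}\tr(U^\dag V)\right)$, but since you minimize over all $\theta$ this does not affect anything.)
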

\begin{proof}
By \Cref{eq:fct:sphere_packing}, it suffices to observe the following
\begin{align}
\vabs{\ket{\iota_U}-e^{i\theta}\ket{\iota_V}}^2
&=\biggl\|{2^{-n/2}\sum_{x\in\bin^n}\ket x\otimes\pbra{\pbra{U-e^{i\theta}V}\ket x}}\biggr\|^2\\
&=2^{-n}\sum_{x\in\bin^n}\vabs{\pbra{U-e^{i\theta}V}\ket x}^2\\
&=2^{-n}\cdot\tr\pbra{\pbra{U-e^{i\theta}V}^\dag{U-e^{i\theta}V}}\\
&=\hsnorm{U-e^{i\theta}V}^2.
\tag*{\qedhere}
\end{align}
\end{proof}

Note that we have similar packing bounds.

\begin{fact}\label{fct:diag_sphere_packing}
There are $N=(1/\eps)^{\Theta(2^n)}$ $n$-qubit diagonal unitaries $D_1,\ldots,D_N$ such that the pairwise trace distance between their Choi states is at least $\eps$.
\end{fact}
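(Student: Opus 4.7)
The plan is to run a volume-based sphere packing argument in the torus of diagonal unitaries, using \Cref{fct:choi_to_unitary} to translate separation in the normalized Hilbert--Schmidt distance into trace distance between Choi states. This mirrors the approach used in \Cref{fct:sphere_packing}, but the relevant ambient space is now the $2^n$-dimensional torus of phase vectors rather than complex projective space.

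The first step is to parametrize: an $n$-qubit diagonal unitary is $D_{\bm\phi}=\diag(e^{i\phi_1},\ldots,e^{i\phi_{2^n}})$ for $\bm\phi\in[0,2\pi)^{2^n}$. For two such unitaries and any $\theta\in[0,2\pi)$, a direct calculation yields
\begin{equation}
\hsnorm{D_{\bm\phi}-e^{i\theta}D_{\bm\psi}}^2
=\frac1{2^n}\sum_{j=1}^{2^n}\abs{e^{i\phi_j}-e^{i(\theta+\psi_j)}}^2
=\frac2{2^n}\sum_{j=1}^{2^n}\pbra{1-\cos(\phi_j-\psi_j-\theta)}.
\end{equation}
Combining this with \Cref{fct:choi_to_unitary}, the trace distance between the Choi states $\ket{\iota_{D_{\bm\phi}}}$ and $\ket{\iota_{D_{\bm\psi}}}$ is at least a constant multiple of $\min_\theta\sqrt{\frac1{2^n}\sum_j(1-\cos(\phi_j-\psi_j-\theta))}$. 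To obtain pairwise trace distance at least $\eps$, it therefore suffices to exhibit $N$ phase vectors $\bm\phi^{(1)},\ldots,\bm\phi^{(N)}$ whose pairwise distances in the metric
\begin{equation}
d(\bm\phi,\bm\psi)=\min_{\theta\in[0,2\pi)}\sqrt{\frac1{2^n}\sum_{j=1}^{2^n}\pbra{1-\cos(\phi_j-\psi_j-\theta)}}
\end{equation}
are at least $c\eps$ for an appropriate constant $c>0$.

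The second step is the packing itself. Fix a small enough absolute constant $c$; we pack the quotient torus $T^{2^n}/S^1$ (which has real dimension $2^n-1$) with balls of radius $c\eps$ in the metric $d$. Since $1-\cos(x)=\Theta(x^2)$ for $|x|\le\pi$, the metric $d$ is equivalent, up to constants, to the quotient of the Euclidean metric on $\mathbb R^{2^n}/(2\pi\mathbb Z^{2^n})$ rescaled by $1/\sqrt{2^n}$ and then quotiented by the global-phase direction $\mathbb R(1,1,\ldots,1)$. A ball of radius $c\eps$ in this metric has volume $\Theta((c\eps\sqrt{2^n})^{2^n-1})$ in $\mathbb R^{2^n-1}$, while the total volume of the quotient torus is $(2\pi)^{2^n-1}/\sqrt{2^n}$ (the $\sqrt{2^n}$ accounting for the length of a fundamental domain for the diagonal direction). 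Thus the packing number is at least
\begin{equation}
N=\Omega\pbra{\pbra{\frac{1}{c\eps}}^{\Theta(2^n)}}=(1/\eps)^{\Theta(2^n)},
\end{equation}
as claimed. A matching upper bound on $N$ follows by the same volume calculation applied to covering numbers, giving the $\Theta$ bound in the statement.

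The only delicate point in the plan is the treatment of the $\min_\theta$ over global phase: one must check that the standard Euclidean-ball volume argument goes through on the quotient $T^{2^n}/S^1$ and that the constants hidden in the equivalence between the metric $d$ and the rescaled $\ell_2$ metric on the quotient are absolute. This is routine once one restricts attention to the regime where the optimal $\theta$ lies in a small window around the circular mean of $\phi_j-\psi_j$, so no coordinate contributes an angle larger than $\pi$. I expect this bookkeeping to be the main technical obstacle, but it is a standard packing estimate and does not require anything beyond elementary Euclidean geometry.
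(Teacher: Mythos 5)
Your proof is correct and follows essentially the same strategy as the paper's: reduce to a sphere-packing estimate in the normalized Hilbert--Schmidt metric on the torus of phase vectors, then translate to trace distance of Choi states via \Cref{fct:choi_to_unitary}. The only substantive difference is your treatment of the global phase: you propose packing directly in the quotient torus $T^{2^n}/S^1$ (with the attendant bookkeeping about the quotient metric that you flag as the delicate step), whereas the paper sidesteps the quotient entirely by first packing $M=(1/\eps)^{\Theta(2^n)}$ points at distance $10\eps$ in the unquotiented torus, then discretizing the phase angle into $O(1/\eps)$ values and pruning to a subset of size $N=\Theta(\eps M)$ so that no surviving pair is close under any rotation; this discretize-and-prune trick avoids ever reasoning about the quotient geometry and is arguably the cleaner way to absorb the one-dimensional phase freedom without changing the exponent.
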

\begin{proof}
By \Cref{fct:choi_to_unitary}, we work with the normalized Hilbert-Schmidt norm.
The phase issue will be handled by discretization, same as the proof of \Cref{fct:sphere_packing}.

Observe that the normalized Hilbert-Schmidt distance between two diagonal unitaries is the normalized $\ell_2$ distance between their diagonals (viewed as vectors of length $2^n$).
Then by standard sphere packing results, we can get $M=(1/\eps)^{\Theta(2^n)}$ diagonal unitaries $\tilde D_1,\ldots,\tilde D_M$ that are pairwise $(10\eps)$-far in normalized Hilbert-Schmidt distance.
By discretizing the rotation angle of the global phase, each $\tilde D_j$ can be $(\sqrt2\eps)$-close to $O(1/\eps)$ other $\tilde D_k$'s in normalized Hilbert-Schmidt distance even with the presence of a global phase.
This, combined with \Cref{fct:choi_to_unitary}, implies that we can find $N=\Theta(\eps M)=(1/\eps)^{\Theta(2^n)}$ diagonal unitaries $D_1,\ldots,D_N$ from $\tilde D_1,\ldots,\tilde D_M$ such that the pairwise trace distance between their Choi states is at least $\eps$.
\end{proof}

\begin{fact}\label{fct:unitary_sphere_packing}
There are $N=(1/\eps)^{\Theta(4^n)}$ $n$-qubit unitaries $U_1,\ldots,U_N$ such that the pairwise trace distance between their Choi states is at least $\eps$.
\end{fact}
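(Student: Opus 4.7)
The plan is to mirror the proof of \Cref{fct:diag_sphere_packing}, but packing in the full unitary group $U(2^n)$ rather than its diagonal subgroup. By \Cref{fct:choi_to_unitary}, the trace distance between Choi states $\ket{\iota_U},\ket{\iota_V}$ is controlled from below by $\min_{\theta}\hsnorm{U-e^{i\theta}V}/\sqrt2$, so it suffices to construct a large packing of $n$-qubit unitaries in the normalized Hilbert-Schmidt metric that remains a packing after quotienting by a global phase.

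First, I would invoke a standard covering/packing estimate for the compact Lie group $U(d)$ with $d=2^n$ equipped with the normalized Hilbert-Schmidt distance. Since $U(d)$ is a real manifold of dimension $d^2=4^n$ and the Haar measure of a Hilbert-Schmidt ball of radius $r$ around any point is $\Theta(r^{d^2})$ (up to constants independent of $r\le 1$), a greedy packing argument yields $M=(1/\eps)^{\Theta(4^n)}$ unitaries $\tilde U_1,\ldots,\tilde U_M$ that are pairwise $(10\eps)$-far in normalized Hilbert-Schmidt distance. Equivalently, one can express the Haar volume of a ball of radius $r$ using the formula of Szarek or exponentiate from the tangent space $\mathfrak{u}(d)$ near the identity: since $\dim_{\mathbb R}\mathfrak{u}(d)=d^2$, a volume comparison gives the same exponent $4^n$.

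Second, I would handle the global phase $e^{i\theta}$ by the same discretization trick used in \Cref{fct:sphere_packing} and \Cref{fct:diag_sphere_packing}. The map $\theta\mapsto e^{i\theta}\tilde U_k$ traces out a loop in $U(d)$ of Hilbert-Schmidt length $O(1)$, so discretizing $\theta$ into $O(1/\eps)$ angles implies each $\tilde U_j$ can be $(\sqrt2\,\eps)$-close (in normalized Hilbert-Schmidt distance, up to global phase) to at most $O(1/\eps)$ of the other $\tilde U_k$. A simple greedy selection then extracts $N=\Theta(\eps M)=(1/\eps)^{\Theta(4^n)}$ unitaries $U_1,\ldots,U_N$ for which $\min_{\theta}\hsnorm{U_i-e^{i\theta}U_j}\ge\sqrt2\,\eps$ for all $i\ne j$. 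Combining with \Cref{fct:choi_to_unitary} yields pairwise Choi-state trace distance at least $\eps$, as required.

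The only real obstacle is justifying the packing number $(1/\eps)^{\Theta(4^n)}$ in $U(2^n)$; the upper bound follows from a straightforward volume-to-covering argument, while the lower (existence) bound is exactly the greedy maximal packing together with the volume estimate. Both are standard once one fixes the real dimension of $U(d)$ to be $d^2$, and the remaining bookkeeping (global phase and the constant $\sqrt2$ loss in \Cref{fct:choi_to_unitary}) is identical to the diagonal case.
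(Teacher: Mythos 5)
Your proposal is correct and follows essentially the same route as the paper: reduce via \Cref{fct:choi_to_unitary} to packing $U(2^n)$ in normalized Hilbert--Schmidt distance, obtain a packing of size $(1/\eps)^{\Theta(4^n)}$ from the real dimension $\dim U(2^n)=4^n$, and remove the global-phase degeneracy by the same $O(1/\eps)$-angle discretization used in \Cref{fct:sphere_packing,fct:diag_sphere_packing}. The only difference is presentational: you cite a standard Haar-volume/covering estimate for $U(d)$, while the paper avoids citing it by constructing the packing explicitly --- exponentiating a sphere packing of skew-Hermitian matrices and proving the bilipschitz bound $\hsnorm{e^A-e^B}\ge\hsnorm{A-B}/3$ when $\hsnorm{A},\hsnorm{B}\le 1/2$ via a Taylor-expansion argument; you already flag this tangent-space/exponential route as equivalent, so no real gap remains.
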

\begin{proof}
We only show how to obtain $M=(1/\eps)^{\Theta(4^n)}$ unitaries that are $(\sqrt2\eps)$-far from each other in normalized Hilbert-Schmidt distance.
The remaining proof is the same as \Cref{fct:diag_sphere_packing}.

Let $A$ be a skew Hermitian matrix (i.e., $A+A^\dag=0$). Then $e^A$ is a unitary.
By relating the normalized Hilbert-Schmidt distance before and after matrix exponentiation, we construct the desired unitaries via ``well-separated skew Hermitian matrices'' (see e.g., \cite[Appendix D]{barthel2018fundamental}).
To this end, let $B$ be another skew Hermitian matrix and assume $\hsnorm{A},\hsnorm{B}\le1/2$.
\begin{samepage}
Then notice that
\begin{align}
\hsnorm{e^A-e^B}
&=\hsnorm{\sum_{k=1}^{+\infty}\frac{A^k-B^k}{k!}}
\tag{Taylor expansion}\\
&\ge\hsnorm{A-B}-\sum_{k\ge2}\frac{\hsnorm{A^k-B^k}}{k!}\\
&=\hsnorm{A-B}-\sum_{k\ge2}\frac{\hsnorm{\sum_{\ell=1}^kA^{k-\ell}(A-B)B^{\ell-1}}}{k!}\\
&\ge\hsnorm{A-B}-\sum_{k\ge2}\frac{k\cdot2^{-(k-1)}\cdot\hsnorm{A-B}}{k!}
\tag{since $\hsnorm{A},\hsnorm{B}\le1/2$}\\
&=(2-\sqrt e)\hsnorm{A-B}
\ge\hsnorm{A-B}/3.
\end{align}
\end{samepage}
Hence we just need to obtain skew Hermitian matrices $A_1,\ldots,A_M$ of norm $\hsnorm{A_j}\le1/2$ and pairwise distance $\hsnorm{A_j-A_k}\ge3\sqrt2\eps$.
The desired bound on $M$ follows from standard sphere packing in $\ell_2$ norm, by viewing the upper triangular part of a $2^n\times2^n$ skew Hermitian matrices as a vector of length $2^n(2^n+1)/2=\Theta(4^n)$.
\end{proof}

At this point, we use the canonical circuit from \Cref{lem:canonical_form_state} and prove \Cref{prop:diag_post_lb} and \Cref{prop:unitary_post_lb}, which, combined with \Cref{clm:unitary_to_post-sel}, complete the proof of \Cref{thm:diagonal_adaptive_lb} and \Cref{thm:unitary_adaptive_lb}.

\begin{proof}[Proof of \Cref{prop:diag_post_lb}]
Assume $\eps$ is a small constant.
Then the $\Omega(\log(1/\eps))$ lower bound also follows from \cite[Lemma 5.9]{beverland2020lower}.
To get the $\Omega\pbra{\sqrt{2^n\log(1/\eps)}}$ lower bound, we follow the same calculation as in the proof of \Cref{prop:state_post_lb}.
Suppose that for every diagonal unitary $D$ there is a Clifford circuit $\mathcal{C}$ with Pauli postselection such that  $\Ccal\pbra{\ket{0^{2n}}{\ket T}^{\otimes t}\ket{0^a}}=\ket{\phi}\ket{0^{t+a}}$ and the trace distance between $\ket\phi$ and $\ket{\iota_D}$ is at most $\eps$.

In light of \Cref{fct:choi_to_unitary}, we consider the diagonal unitaries $D=D_j$ for each $j\in[N]$ from \Cref{fct:diag_sphere_packing}.
Then by \Cref{lem:canonical_form_state}, we obtain a canonical form  for the $(2n+t)$-qubit Clifford circuit with Pauli postselection that approximately prepares $\ket{\iota_{D_j}}$ for each $j$. But there are at most $2^{O(n^2+t^2)}$ different canonical forms on $2n+t$ qubits.
By \Cref{fct:choi_to_unitary} and \Cref{fct:diag_sphere_packing}, each $\ket{\iota_{D_j}}$ gives a different circuit.
Thus $2^{O(n^2+t^2)}\ge(1/\eps)^{\Theta(2^n)}$, which gives $t=\Omega\pbra{\sqrt{2^n\log(1/\eps)}}$ since $\eps<1/8$.
\end{proof}

\begin{proof}[Proof of \Cref{prop:unitary_post_lb}]
As above, the $\Omega(\log(1/\eps))$ lower bound follows from \cite{beverland2020lower}.
The $\Omega\pbra{2^n\sqrt{\log(1/\eps)}}$ follows by applying the same counting argument used in the proof of \Cref{prop:diag_post_lb}, but now with the unitaries $U=U_j$ for each $j\in[N]$ from \Cref{fct:unitary_sphere_packing} and $N=(1/\eps)^{\Theta(4^n)}$.
\end{proof}

\section*{Acknowledgments}
We thank Ryan Babbush, Dominic Berry, Michael Beverland, Oscar Higgott, Tanuj Khattar, Guang Hao Low, Dmitri Maslov, Luke Shaeffer, and Rolando Somma for helpful discussions. 
We thank Tanuj Khattar for pointing out that the analysis of the algorithm in \cite{low2018trading} can be improved.
We thank anonymous reviewers for helpful comments.
KW wants to thank Daniel Grier, Jiaqing Jiang, Saeed Mehraban, and Anurudh Peduri for relevant references.

DG is a CIFAR fellow in the quantum information science program.
This work was done while KW was at Google.

\section*{Author contribution}
The authors jointly discussed the technical contents of the paper and wrote the paper, including all the revisions.
Large language model tools were only used to address typesetting and package collision issues when switching to this template in the final stage, after which the authors also proofread.

\bibliographystyle{alphaurl} 
\bibliography{ref}

\appendix

\section{Canonical form of Clifford circuits with magic states and Pauli postselection}\label{sec:canonical_form}

To describe our generalization of \Cref{lem:canonical_form_state}, we need to recall the definition of stabilizer group.

\begin{definition}[Stabilizer group]\label{def:stabilizer}
For a state $\ket\tau$, its stabilizer group $\Stab(\ket\tau)$ is the set of Pauli operators $P$ satisfying $P\ket\tau=\ket\tau$.
For a set $\Lambda$ of states, we define its stabilizer group as $\Stab(\Lambda)=\bigcap_{\ket\tau\in\Lambda}\Stab(\ket\tau)$.
Note that the stabilizer group is a set of commutative Hermitian Pauli operators and does not contain $-I$.
We also remark that the size of a stabilizer group is an integer power of two.
\end{definition}

Now we present the formal statement for the canonical form as \Cref{thm:canonical_form}.

\begin{theorem}\label{thm:canonical_form}
Let $\Ccal$ be a Clifford circuit with $m$ Pauli postselections, $n$ input qubits, and $a$ ancillas.
Let $\cbra{\ket{\phi_\lambda}}_{\lambda\in\Scal}$ and $\cbra{\ket{\psi_\lambda}}_{\lambda\in\Scal}$ be two sets of $n$-qubit states indexed by $\Scal$.
Assume
\begin{equation}\label{eq:thm:canonical_form_1}
\ket{\psi_\lambda}\ket{0^{t+a}}=\Ccal\pbra{\ket{\phi_\lambda}\ket{\Tgate}^{\otimes t}\ket{0^a}}
\end{equation}
holds for all $\lambda\in\Scal$,
where we recall the circuit expression in \Cref{def:circuit_post}.
Then there exists an $(n+t)$-qubit Clifford unitary $C$ and matrices $M_1,M_2,\ldots,M_c$ such that
\begin{equation}\label{eq:thm:canonical_form_2}
\ket{\psi_\lambda}\ket{0^t}\propto CM_c\cdots M_2M_1\pbra{\ket{\phi_\lambda}\ket{\Tgate}^{\otimes t}}
\end{equation}
holds for all $\lambda\in\Scal$,
where each $M_j$ is $I+P_j$ for some $(n+t)$-qubit Hermitian Pauli $P_j$ and
\begin{align}
c&=t-\log\pbra{\abs{\Stab\pbra{\cbra{\ket{\phi_\lambda}}_{\lambda\in\Scal}}}}
+\log\pbra{\abs{\Stab\pbra{\cbra{\ket{\psi_\lambda}}_{\lambda\in\Scal}}}}.
\end{align}
\end{theorem}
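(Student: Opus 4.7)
The plan is to adapt the argument from \cite[Section A.7]{beverland2020lower}, which handles the special case of singleton $\Scal$ with $\ket{\phi_{\lambda_0}}=\ket{0^n}$, to the general parametric family setting by tracking common stabilizer groups throughout the circuit.

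First, I would push all Clifford gates past all Pauli postselections using the normalizer identity $C(I+P)=(I+CPC^\dagger)C$, rewriting
\[ \Ccal = C \cdot N_m \cdot N_{m-1} \cdots N_1, \]
where $C = C_{m+1}C_m\cdots C_1$ is a single Clifford on $n+t+a$ qubits and each $N_j = I+Q_j$ is a Pauli postselection with $Q_j$ Hermitian. Then I would process $N_1,\ldots,N_m$ in order, tracking the common stabilizer group $\mathcal S_j := \Stab(\{N_j\cdots N_1(\ket{\phi_\lambda}\ket{\Tgate}^{\otimes t}\ket{0^a})\}_{\lambda\in\Scal})$. Since $\ket{\Tgate}$ has trivial Pauli stabilizer, $\mathcal S_0$ has size $2^a\cdot|\Stab(\{\phi_\lambda\})|$. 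Each $N_j$ falls into one of four cases based on its relationship with $\mathcal S_{j-1}$: (i) $Q_j\in\mathcal S_{j-1}$, where $N_j$ acts as $2I$ and is discarded; (ii) $-Q_j\in\mathcal S_{j-1}$, which annihilates the state and contradicts \Cref{eq:thm:canonical_form_1}; (iii) $Q_j$ commutes with $\mathcal S_{j-1}$ but $\pm Q_j\notin\mathcal S_{j-1}$, where $N_j$ is retained and $|\mathcal S_j|\geq 2|\mathcal S_{j-1}|$; or (iv) $Q_j$ anticommutes with some $S_0\in\mathcal S_{j-1}$, in which case the operator $C_0 := \tfrac{1}{\sqrt 2}(S_0+Q_j)$ is a Clifford---unitary by the anticommutation of $S_0$ and $Q_j$ together with $S_0^2=Q_j^2=I$, and normalizing the Pauli group by direct computation (swapping $S_0\leftrightarrow Q_j$ and fixing Paulis commuting with both)---satisfying $C_0\ket\tau = \tfrac{1}{\sqrt 2}(I+Q_j)\ket\tau$ for any $\ket\tau$ with $S_0\ket\tau=\ket\tau$. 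In case (iv), I would replace $N_j$ by $C_0$, push $C_0$ through the subsequent postselections (each becoming another Pauli postselection under conjugation), and absorb it into the global $C$, leaving $|\mathcal S_j|=|\mathcal S_{j-1}|$.

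After processing, only case-(iii) postselections $M_1,\ldots,M_{c'}$ remain, with $|\mathcal S_m|\geq 2^{c'}|\mathcal S_0|$. Since $\mathcal S_m=\Stab(\{\ket{\psi_\lambda}\ket{0^{t+a}}\}_{\lambda\in\Scal})$ has size $2^{t+a}\cdot|\Stab(\{\psi_\lambda\})|$, this yields $c'\leq c := t+\log|\Stab(\{\psi_\lambda\})|-\log|\Stab(\{\phi_\lambda\})|$. If $c'<c$ (possible when some postselection reveals extra family-wide symmetries), I would pad the circuit with $c-c'$ trivial postselections $I+P$ for $P\in\mathcal S_m$, each acting as $2I$ on the final state without altering it. Finally, since the ancilla $Z$-stabilizers $\langle Z_{n+t+1},\ldots,Z_{n+t+a}\rangle$ are in both $\mathcal S_0$ and $\mathcal S_m$, I can conjugate the global Clifford $C$ by a further Clifford so that it acts as the identity on the ancilla register (which both begins and ends in $\ket{0^a}$), allowing the ancilla register to be dropped and yielding the desired $(n+t)$-qubit canonical form.

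The main obstacle will be verifying case (iv) uniformly across the family: the anticommuting $S_0$ must lie in the \emph{common} stabilizer $\mathcal S_{j-1}$ so that $C_0\ket{\tau_\lambda}=\tfrac{1}{\sqrt 2}(I+Q_j)\ket{\tau_\lambda}$ holds for every $\lambda\in\Scal$ simultaneously. This is handled naturally by tracking common stabilizer groups rather than individual stabilizers. A secondary subtlety is the padding step---when the common stabilizer grows by more than a factor of $2$ at a single case-(iii) postselection due to hidden family symmetries, the algorithm alone produces $c'<c$ postselections, and inserting trivial projections restores the exact count $c$ claimed in the theorem. The ancilla decoupling requires care to ensure the adjustment of $C$ does not reintroduce Pauli postselections beyond the $c$ already counted.
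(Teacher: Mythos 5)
Your proposal follows the same overall strategy as the paper's proof, which is itself an extension of \cite[Section A.7]{beverland2020lower}: commute Cliffords outward so the circuit becomes a single Clifford times a chain of Pauli projectors, clean up the projectors by a commutation/anticommutation case analysis against a reference stabilizer group, bound the number of surviving projectors by comparing input and output stabilizer sizes, then remove the ancilla register. Some of your bookkeeping choices differ harmlessly from the paper's: you track the \emph{full} common stabilizer $\mathcal S_{j}$ of the running states, whereas the paper (\Cref{lem:organize}) tracks only the generated subgroup $\langle Q_1,\ldots,Q_{j-1},\Gcal\rangle$; your Clifford $C_0=\frac{1}{\sqrt2}(S_0+Q_j)$ in case (iv) equals $S_0$ times the paper's Clifford $\frac{I+Q_jS_0}{\sqrt2}$ from \Cref{fct:anticom_pauli_cliff}, so they do the same job; and your explicit case (ii) for annihilation is correct but automatically excluded by the hypothesis. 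These do not affect the argument.

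The genuine gap is the ancilla-elimination step. You write that because the ancilla $Z$-stabilizers lie in $\mathcal S_0$ and $\mathcal S_m$ you ``can conjugate the global Clifford $C$ by a further Clifford so that it acts as the identity on the ancilla register, allowing the ancilla register to be dropped.'' This glosses over the two real pieces of work. First, you must argue that each surviving Pauli $Q_j$ \emph{commutes with every ancilla $Z$} and therefore factors as $Q_j = P_j\otimes Z(v_j)$, so that $(I+Q_j)\bigl(\ket{\rho}\ket{0^a}\bigr) = \bigl((I+P_j)\ket\rho\bigr)\ket{0^a}$; only after this factorization are the projectors themselves confined to the first $n+t$ qubits. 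This follows from case (iii) requiring $Q_j$ to commute with the common stabilizer \emph{provided the ancilla $Z$'s are in that stabilizer at the time of the test}, which you should verify is preserved by your case-(iv) replacements (in the paper the test is always against the fixed initial group $\Gcal$, avoiding this subtlety). Second, and more importantly, ``conjugate $C$ to act as identity on the ancilla'' is precisely the content of a nontrivial lemma. Knowing that $C$ maps $\{\ket{\rho_\lambda}\ket{0^a}\}$ to $\{\ket{\sigma_\lambda}\ket{0^a}\}$ does \emph{not} mean $C$ factorizes as $C'\otimes I_{\mathrm{anc}}$; one must produce an $(n+t)$-qubit Clifford $C'$ with $C'\ket{\rho_\lambda}=\ket{\sigma_\lambda}$ for all $\lambda$. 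This is exactly \Cref{lem:generalize_beverlend_lemA.10}, which the paper proves by an inductive decoupling argument using \Cref{fct:beverland_propA.11}: one first reduces to the case where both families have trivial stabilizer, then adjusts $C$ by a CNOT circuit so it commutes with the last $Z$, invokes the block-diagonal factorization of such Cliffords, and peels off one ancilla at a time. Your one-line claim does not supply this argument, and it is the most technical part of the proof.
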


Intuitively, \Cref{thm:canonical_form} says that the number of ancillary qubits and the number of Pauli postselections are upper bounded by the number of $\Tgate$ gates.
\begin{itemize}
\item If $\Scal$ is a singleton set and $\ket\phi=\ket{0^n}$, then $\Stab(\ket\phi)=2^n$ and $\Stab(\ket\psi)\le2^n$.
Thus $c\le t$ and \Cref{thm:canonical_form} specializes as \Cref{lem:canonical_form_state}, which justifies the above intuition for state preparation task, as also proved in \cite{beverland2020lower}.
\item If $\cbra{\ket{\phi_\lambda}}_{\lambda\in\Scal}$ ranges over all $n$-qubit states with $\ket{\psi_\lambda}=V\ket{\phi_\lambda}$ for an $n$-qubit unitary $V$, then $\abs{\Stab\pbra{\cbra{\ket{\phi_\lambda}}_{\lambda\in\Scal}}}=\abs{\Stab\pbra{\cbra{\ket{\psi_\lambda}}_{\lambda\in\Scal}}}=1$.
Thus $c=t$ and \Cref{thm:canonical_form} justifies the above intuition for unitary synthesis of $V$.
\end{itemize}
For intermediate tasks like implementing the first $K$ columns of a unitary (aka partial unitary synthesis), \Cref{thm:canonical_form} can also be used, by setting $\cbra{\ket{\phi_\lambda}}_{\lambda\in\Scal}$ to be the first $K$ computational basis states and analyzing the stabilizer group of the corresponding outputs $\cbra{\ket{\psi_\lambda}}_{\lambda\in\Scal}$.

The proof of \Cref{thm:canonical_form} follows the analysis in \cite[Section A.7]{beverland2020lower}.
We start with the following convenient fact.

\begin{fact}\label{fct:anticom_pauli_cliff}
Let $P,Q$ be two anticommuting Hermitian Pauli operators.
Then $-iPQ$ is Hermitian Pauli and $\frac{I+PQ}{\sqrt2}$ is Clifford.
\end{fact}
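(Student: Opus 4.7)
The plan is to verify both parts by direct algebra, using the anticommutation relation $PQ = -QP$ together with the fact that every Hermitian Pauli squares to the identity (a Hermitian Pauli has the form $\pm P'$ with $P' \in \{I,X,Y,Z\}^{\otimes n}$, and $(P')^2 = I$).

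For the first claim, I would first observe that $PQ$ is anti-Hermitian: using $P^\dagger = P$, $Q^\dagger = Q$, and $QP = -PQ$, we get $(PQ)^\dagger = Q^\dagger P^\dagger = QP = -PQ$, so $(-iPQ)^\dagger = i(PQ)^\dagger = -iPQ$ is Hermitian. For the Pauli part, closure of the Pauli group gives $PQ = i^b P'$ for some $b \in \{0,1,2,3\}$ and $P' \in \{I,X,Y,Z\}^{\otimes n}$; the anti-Hermiticity of $PQ$ forces $b$ to be odd, so $-iPQ = i^{b-1}P'$ has $b-1$ even, which is exactly the Hermitian Pauli condition from the paper's definition.

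For the second claim, set $W = (I+PQ)/\sqrt{2}$. The key intermediate identity is $(PQ)^2 = -I$, obtained from $PQPQ = P(QP)Q = P(-PQ)Q = -P^2 Q^2 = -I$. Combined with $(PQ)^\dagger = -PQ$, this yields $W^\dagger W = (I-PQ)(I+PQ)/2 = (I-(PQ)^2)/2 = I$, so $W$ is unitary. To verify the Clifford property, I would conjugate an arbitrary Pauli $R$ by $W$, expanding $WRW^\dagger = \frac{1}{2}(I+PQ)R(I-PQ)$, and split into cases. If $R$ commutes with $PQ$, the cross terms cancel and $(PQ)R(PQ) = R(PQ)^2 = -R$, yielding $WRW^\dagger = R$. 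If $R$ anticommutes with $PQ$, the $R$ and $(PQ)R(PQ) = R$ contributions cancel and the cross terms combine to $(PQ)R$. In either case the image lies in the Pauli group, so $W$ normalizes the Paulis and is therefore Clifford.

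I expect no substantive obstacle: this is a short finite computation, and the only care required is in tracking signs during the conjugation and in the phase accounting for the Hermitian Pauli claim.
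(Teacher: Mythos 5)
Your proof is correct, but it takes a genuinely different route from the paper's. The paper's one-line argument invokes the standard normal-form fact that any two anticommuting Hermitian Paulis $P,Q$ can be simultaneously Clifford-conjugated to $\Xgate\otimes I$ and $\Zgate\otimes I$; the fact then reduces to inspecting the single-qubit operators $-i\Xgate\Zgate=-\Ygate$ (a Hermitian Pauli) and $(I-i\Ygate)/\sqrt{2}$ (the $\pi/2$ $\Ygate$-rotation, a Clifford). Your approach instead verifies everything by direct algebra: anti-Hermiticity of $PQ$ and parity of the phase $i^b$ for the first claim, and for the second, unitarity via $(PQ)^2=-I$ together with an explicit conjugation $WRW^\dagger\in\{R,\,PQR\}$ according to whether $R$ commutes or anticommutes with $PQ$ (an exhaustive split, since any two Paulis either commute or anticommute). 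The paper's proof is shorter but leans on the normal-form lemma as a black box; yours is longer but fully self-contained and makes the Clifford action on the Pauli group explicit, which is arguably more informative. The case computation checks out on both branches, so there is no gap.
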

\begin{proof}
Consider the Clifford that conjugates $P$ to $\Xgate\otimes I$ and $Q$ to $\Zgate\otimes I$. 
Then the result follows.
\end{proof}

In light of \Cref{eq:def:circuit_post_1} and \Cref{eq:thm:canonical_form_2}, we prove the following lemma.
\begin{lemma}\label{lem:organize}
Let $C_1,\ldots,C_{r+1}$ be $K$-qubit Clifford unitaries and $P_1,\ldots,P_r$ be $K$-qubit Hermitian Pauli operators.
Define $M_j=I+P_j$ for each $j\in[r]$.
Let $\cbra{\ket{\rho_\lambda}}_{\lambda\in\Ecal}$ and $\cbra{\ket{\tau_\lambda}}_{\lambda\in\Ecal}$ be two sets of $K$-qubit states indexed by $\Ecal$.
Assume 
\begin{equation}\label{eq:lem:organize_1}
\ket{\tau_\lambda}\propto C_{r+1}M_rC_r\cdots M_1C_1\ket{\rho_\lambda}
\end{equation}
holds for all $\lambda\in\Ecal$.
Then there exist an integer $0\le r'\le r$, a $K$-qubit Clifford $C$, and $K$-qubit Hermitian Pauli operators $Q_1,\ldots,Q_{r'}$ such that the following holds.
\begin{enumerate}
\item\label{itm:lem:organize_1} 
Let $A_j=I+Q_j$.
Then $\ket{\tau_\lambda}\propto CA_{r'}\cdots A_2A_1\ket{\rho_\lambda}$ holds for all $\lambda\in\Ecal$.
\item\label{itm:lem:organize_2}
Let $\Gcal=\Stab\pbra{\cbra{\ket{\rho_\lambda}}_{\lambda\in\Ecal}}$.
For each $j\in[r']$, we have $Q_j\notin\abra{Q_1,\ldots,Q_{j-1},\Gcal}$ and $Q_j$ commutes with any $P\in\abra{Q_1,\ldots,Q_{j-1},\Gcal}$.
\item\label{itm:lem:organize_3}
$\abs{\Stab\pbra{\cbra{\ket{\rho_\lambda}}_{\lambda\in\Ecal}}}\cdot2^{r'}\le\abs{\Stab\pbra{\cbra{\ket{\tau_\lambda}}_{\lambda\in\Ecal}}}$.
\end{enumerate}
\end{lemma}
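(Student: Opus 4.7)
The plan is to convert the circuit to canonical form in two stages. In \emph{Stage~1}, I would collapse all of the Cliffords $C_1,\ldots,C_{r+1}$ into a single Clifford at the front: using the identity $C_j(I+P_{j-1})=(I+C_jP_{j-1}C_j^\dagger)C_j$ repeatedly, push each $C_j$ leftward past the $M$-factors to obtain
\[
C_{r+1}M_rC_r\cdots M_1C_1=\tilde C\cdot(I+P'_r)(I+P'_{r-1})\cdots(I+P'_1),
\]
where $\tilde C:=C_{r+1}\cdots C_1$ and each $P'_j$ is the appropriate conjugate of $P_j$, still a Hermitian Pauli. In \emph{Stage~2}, I would process the factors $(I+P'_j)$ from right to left, maintaining the invariant that after step $j$ the partial product acting on $\ket{\rho_\lambda}$ equals $\alpha_j\tilde C_j(I+Q_{r''_j})\cdots(I+Q_1)\ket{\rho_\lambda}$ for every $\lambda\in\Ecal$, where $\tilde C_j$ is a Clifford, $\alpha_j$ is a nonzero scalar independent of $\lambda$, and $Q_1,\ldots,Q_{r''_j}$ are Hermitian Paulis satisfying \Cref{itm:lem:organize_2} with respect to $\Gcal$.

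Each iteration examines $\tilde P_j:=\tilde C_{j-1}^\dagger P'_j\tilde C_{j-1}$ relative to the group $\Hcal:=\langle Q_1,\ldots,Q_{r''_{j-1}},\Gcal\rangle$ and splits into three cases. If $\tilde P_j$ commutes with $\Hcal$ and already lies in $\Hcal$, then $\tilde P_j$ acts as $\pm I$ on the current state (which sits in the joint $+1$ eigenspace of $\Hcal$); the sign must be $+1$, else the output would vanish and contradict the existence of $\ket{\tau_\lambda}$, so $(I+\tilde P_j)$ contributes only a scalar $2$. If $\tilde P_j$ commutes with $\Hcal$ but lies outside it, simply append $Q_{r''_{j-1}+1}:=\tilde P_j$ to the list. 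Otherwise $\tilde P_j$ anticommutes with some element of $\Hcal$; by commutativity of $\Hcal$, one can always select a single generator $R\in\Hcal$ (either some $Q_i$ or some $G\in\Gcal$) that anticommutes with $\tilde P_j$, and the workhorse identity
\[
(I+\tilde P_j)(I+R)=\sqrt{2}\,V(I+R),\qquad V:=(I+\tilde P_jR)/\sqrt{2},
\]
together with \Cref{fct:anticom_pauli_cliff} showing $V$ is Clifford, trades $(I+\tilde P_j)$ for $\sqrt{2}V$ at the cost of reusing the factor $(I+R)$. When $R=Q_i$, apply this identity directly inside the product after first commuting $(I+\tilde P_j)$ past the $(I+Q_k)$ with $k>i$ that commute with it; when $R\in\Gcal$, first commute $(I+\tilde P_j)$ through all of the $Q$'s and insert the trivial factor $(I+G)/2$ (which fixes $\ket{\rho_\lambda}$) in order to invoke the identity. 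In either sub-case $V$ commutes with the factors it must traverse to reach the front of the product, so it can be pulled out and absorbed into $\tilde C_j:=\tilde C_{j-1}V$, leaving the $Q$-list unchanged.

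The main obstacle is executing this anticommutation case cleanly: the bookkeeping must verify that $V$ commutes with precisely the factors it needs to pass through, using both the case hypotheses on $\tilde P_j$ and the mutual commutativity of the $Q_i$'s guaranteed by the invariant. Once \Cref{itm:lem:organize_1,itm:lem:organize_2} are in hand, \Cref{itm:lem:organize_3} follows quickly. Each $Q_j$ stabilizes $\ket{\chi_\lambda}:=(I+Q_{r'})\cdots(I+Q_1)\ket{\rho_\lambda}$, by commuting $Q_j$ past the other factors and using $Q_j(I+Q_j)=I+Q_j$; each $G\in\Gcal$ stabilizes $\ket{\chi_\lambda}$ via its mutual commutativity with the $Q_i$'s together with $G\ket{\rho_\lambda}=\ket{\rho_\lambda}$. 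Hence $\langle Q_1,\ldots,Q_{r'},\Gcal\rangle$ stabilizes $\ket{\chi_\lambda}$ for every $\lambda$ and, by the linear independence clause of \Cref{itm:lem:organize_2}, has order $2^{r'}|\Gcal|$. Conjugating by the final Clifford $C$ carries this subgroup into $\Stab(\{\ket{\tau_\lambda}\}_{\lambda\in\Ecal})$, yielding the claimed bound.
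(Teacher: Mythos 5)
Your proposal matches the paper's proof essentially step for step: both first collect the Cliffords at the left by conjugation, then inductively filter each conjugated Pauli projector by checking whether it lies inside, commutes outside of, or anticommutes with $\langle Q_1,\ldots,\Gcal\rangle$, absorbing anticommuting ones as Cliffords of the form $(I+PQ)/\sqrt2$ (\Cref{fct:anticom_pauli_cliff}) while keeping the remaining independent commuting ones as new generators, and \Cref{itm:lem:organize_3} follows from the same stabilizer-counting argument. Two small points to tidy in a full write-up: the Stage~1 identity you display moves $C_j$ rightward, so you actually want the mirror form $(I+P_j)C_j=C_j(I+C_j^\dagger P_jC_j)$ to accumulate the Clifford at the left; and in the $R=Q_i$ sub-case you must choose $i$ to be the largest index with $Q_i$ anticommuting with $\tilde P_j$ (equivalently, reorder the mutually commuting $A$-factors, which is what the paper does when it asserts $j^*=j-1$ without loss of generality), since otherwise $(I+\tilde P_j)$ cannot be commuted rightward past the intervening factors $(I+Q_k)$ to reach $(I+Q_i)$.
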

\begin{proof}
Observe that
\begin{align}
C_{r+1}M_rC_r\cdots M_1C_1
&=
C_{r+1}M_rC_r\cdots M_2C_2C_1\pbra{C_1^\dag M_1C_1}
\label{eq:lem:organize_2}\\
&=C_{r+1}M_rC_r\cdots M_2C_2C_1A_1
\tag{define $A_1=C_1^\dag M_1C_1$}\\
&=\cdots=\\
&=C_{r+1}\cdots C_2C_1\cdot A_r\cdots A_2A_1
\tag{define $A_j=(C_j\cdots C_1)^\dag M_j(C_j\cdots C_1)$}\\
&=CA_r\cdots A_2A_1.
\tag{define $C=C_{t+1}\cdots C_2C_1$}
\end{align}
Since each $C_j$ is Clifford, the final $C$ is also Clifford.
Since each $M_j=I+P_j$ for some Hermitian Pauli $P_j$, we have $A_j=I+(C_j\cdots C_1)^\dag P_j(C_j\cdots C_1)=:I+Q_j$ where $Q_j$ is also a Hermitian Pauli.
Combined with \Cref{eq:lem:organize_1}, this already guarantees \Cref{itm:lem:organize_1}.

To ensure \Cref{itm:lem:organize_2}, we perform a clean-up procedure inductively for $j=1,2,\ldots$.
To this end, we divide into the following cases.
\begin{itemize}
\item If $Q_j$ anticommutes with some $R\in\Gcal$, then we show it can be discarded.
Observe that 
\begin{align}
(Q_jR)A_{j-1}\cdots A_1\ket{\rho_\lambda}
&=Q_jA_{j-1}\cdots A_1R\ket{\rho_\lambda}
\tag{by induction and \Cref{itm:lem:organize_2}}\\
&=Q_jA_{j-1}\cdots A_1\ket{\rho_\lambda}.
\tag{since $R\in\Gcal\subseteq\Stab(\ket{\rho_\lambda})$}
\end{align}
Hence we can safely replace $A_j=I+Q_j$ as $I+Q_jR\propto\frac{I+Q_jR}{\sqrt2}$, which is Clifford due to \Cref{fct:anticom_pauli_cliff}.
Then we can apply the same trick in \Cref{eq:lem:organize_2} to merge with the Clifford $C$ outside.
\item If $Q_j$ anticommutes with $Q_{j^*}$ for some $j^*<j$, then it can also be discarded.
By induction and \Cref{itm:lem:organize_1}, we assume without loss of generality $j^*=j-1$.
Since $Q_{j-1}$ is Hermitian Pauli, we have $Q_{j-1}A_{j-1}=Q_{j-1}(I+Q_{j-1})=Q_{j-1}+I=A_{j-1}$.
Therefore we can safely replace $A_j=I+Q_j$ as $\frac{I+Q_jQ_{j-1}}{\sqrt2}$, which again is Clifford by \Cref{fct:anticom_pauli_cliff}.
\item Now we are in the situation where $Q_j$ commutes with $Q_1,\ldots,Q_{j-1}$ and elements in $\Gcal$.
To guarantee \Cref{itm:lem:organize_2}, it suffices to show that if $Q_j\in\abra{Q_1,\ldots,Q_{j-1},\Gcal}$, then it can be discarded.
Assume $Q_j\in\abra{Q_1,\ldots,Q_{j-1},\Gcal}$.
Since $Q_1,\ldots,Q_{j-1}$ and $\Gcal$ are commuting Hermitian Pauli operators, $Q_j$ can be expressed as $R\cdot\prod_{\ell\in L}Q_\ell$ for some $L\subseteq[j-1]$ and $R\in\Gcal$.
This means $Q_j$ lies in the stabilizer group of $A_{j-1}\cdots A_1\ket{\rho_\lambda}$ and hence
\begin{align}
Q_jA_{j-1}\cdots A_1\ket{\rho_\lambda}
=A_{j-1}\cdots A_1\ket{\rho_\lambda}.
\end{align}
This means $A_jA_{j-1}\cdots A_2A_1\ket{\rho_\lambda}\propto A_{j-1}\cdots A_2A_1\ket{\rho_\lambda}$ and thus it can be removed directly.
\end{itemize}

\begin{samepage}
Finally we prove \Cref{itm:lem:organize_3}.
By \Cref{itm:lem:organize_1}, we have
\begin{align}
\abs{\Stab\pbra{\cbra{\ket{\tau_\lambda}}_{\lambda\in\Ecal}}}
&=\abs{\Stab\pbra{\cbra{CA_{r'}\cdots A_1\ket{\rho_\lambda}}_{\lambda\in\Ecal}}}\\
&=\abs{\Stab\pbra{\cbra{A_{r'}\cdots A_1\ket{\rho_\lambda}}_{\lambda\in\Ecal}}}\\
&=:|\Hcal|.
\tag{since $C$ is Clifford}
\end{align}
\end{samepage}
Since $Q_jA_j=A_j$, we have $Q_j\in\Hcal$ for each $j\in[r']$.
We also have $P\in\Hcal$ for each $P\in\Gcal$.
Therefore, $\abra{Q_1,\ldots,Q_{r'},\Gcal}\subseteq\Hcal$, where, by \Cref{itm:lem:organize_2}, the former set has size $2^{r'}\cdot|\Gcal|$.
Hence $|\Hcal|\ge2^{r'}\cdot|\Gcal|$. Putting back the definition of $\Gcal$ and $\Hcal$ implies \Cref{itm:lem:organize_3}.
\end{proof}

In addition to \Cref{lem:organize}, we need the following technical lemma to remove unnecessary ancillas.
\Cref{lem:generalize_beverlend_lemA.10} generalizes \cite[Lemma A.10]{beverland2020lower} and shares a similar proof.

\begin{restatable}{lemma}{decouplestab}\label{lem:generalize_beverlend_lemA.10}
Let $\cbra{\ket{\phi_\lambda}}_{\lambda\in\Scal}$ and $\cbra{\ket{\psi_\lambda}}_{\lambda\in\Scal}$ be two sets of $n$-qubit states indexed by $\Scal$.
Let $C'$ be an $(n+m)$-qubit Clifford unitary where $m\ge0$ is an integer.
Assume
\begin{align}\label{eq:lem:generalize_beverlend_lemA.10_1}
\ket{\psi_\lambda}\ket{0^m}=C'\pbra{\ket{\phi_\lambda}\ket{0^m}}
\quad\text{holds for all $\lambda\in\Scal$.}
\end{align}
Then there exists an $n$-qubit Clifford unitary $C$ such that $\ket{\psi_\lambda}=C\ket{\phi_\lambda}$ for all $\lambda\in\Scal$.
\end{restatable}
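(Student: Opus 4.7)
The plan is to prove the lemma by induction on the number of ancillas $m$. The base case $m = 0$ is immediate: take $C = C'$. For the inductive step, I reduce from $m$ to $m-1$ ancillas: it suffices to construct an $(n+m-1)$-qubit Clifford $\widetilde C$ satisfying $\widetilde C\pbra{\ket{\phi_\lambda}\ket{0^{m-1}}} = \ket{\psi_\lambda}\ket{0^{m-1}}$ for every $\lambda \in \Scal$, and then invoke the inductive hypothesis on $\widetilde C$.

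The natural candidate is the $(n+m-1)$-qubit operator $V := (I_{n+m-1}\otimes\bra{0}) C' (I_{n+m-1}\otimes\ket{0})$, obtained by sandwiching $C'$ against the ancilla $\ket{0}$; by hypothesis it satisfies $V\pbra{\ket{\phi_\lambda}\ket{0^{m-1}}} = \ket{\psi_\lambda}\ket{0^{m-1}}$. To constrain the structure of $V$, I study the Pauli $P := C'\pbra{I_{n+m-1}\otimes\Zgate} C'^\dagger$. Since $I\otimes\Zgate$ stabilizes every $\ket{\phi_\lambda}\ket{0^m}$, its conjugate $P$ stabilizes every $\ket{\psi_\lambda}\ket{0^m}$. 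Writing $P = \epsilon\, A\otimes B$ with $A$ an $(n+m-1)$-qubit unsigned Pauli, $B \in \cbra{I,\Xgate,\Ygate,\Zgate}$, and $\epsilon \in \cbra{\pm 1}$, a direct computation yields $VV^\dagger = \pbra{I + \epsilon A\cdot\bra{0}B\ket{0}}/2$. If $B \in \cbra{\Xgate,\Ygate}$ this reduces to $I/2$, contradicting $\vabs{V\pbra{\ket{\phi_\lambda}\ket{0^{m-1}}}} = 1$; hence $B \in \cbra{I,\Zgate}$ and $VV^\dagger = (I + \epsilon A)/2 =: \Pi_2$ is a stabilizer projector. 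The symmetric analysis applied to $C'^\dagger$ gives $V^\dagger V =: \Pi_1$, also a stabilizer projector.

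Thus $V$ is a partial isometry from $\mathrm{range}(\Pi_1)$ to $\mathrm{range}(\Pi_2)$, with the states $\ket{\phi_\lambda}\ket{0^{m-1}}$ and $\ket{\psi_\lambda}\ket{0^{m-1}}$ lying in the respective ranges. To extend $V$ to a full $(n+m-1)$-qubit Clifford $\widetilde C$, I pick Cliffords $D_1, D_2$ with $D_i \Pi_i D_i^\dagger = \ketbra{0}{0}\otimes I$ on a distinguished qubit (such $D_i$ exist because all nontrivial rank-$2^{n+m-2}$ stabilizer projectors are Clifford-equivalent). The transformed operator $V' := D_2 V D_1^\dagger$ then has the form $\ketbra{0}{0}\otimes U$ for some $(n+m-2)$-qubit unitary $U$, and setting $\widetilde C := D_2^\dagger\pbra{I\otimes U} D_1$ produces an $(n+m-1)$-qubit unitary that agrees with $V$ on $\mathrm{range}(\Pi_1)$, hence sends each $\ket{\phi_\lambda}\ket{0^{m-1}}$ to $\ket{\psi_\lambda}\ket{0^{m-1}}$. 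The main obstacle is verifying that $U$ is genuinely Clifford: for this, observe that $V'$ arises as the analogous ancilla-corner of the $(n+m)$-qubit Clifford $\pbra{D_2\otimes I} C'\pbra{D_1^\dagger\otimes I}$, which by construction commutes with $\Zgate$ on both the distinguished first qubit and the ancilla qubit; its block-diagonal decomposition in these two qubits therefore forces all four middle-register blocks to be Cliffords via the standard Pauli-conjugation analysis of Clifford unitaries, and $U$ is one of those blocks.
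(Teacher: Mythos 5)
Your argument is structured as a direct induction on $m$ and analyzes the ancilla corner $V = (I\otimes\bra{0})C'(I\otimes\ket{0})$ as a partial isometry between stabilizer-projector ranges. This is a genuinely different route from the paper, which first conjugates $\Stab(\cbra{\ket{\phi_\lambda}})$ and $\Stab(\cbra{\ket{\psi_\lambda}})$ to canonical form via Cliffords $C_\phi, C_\psi$, then applies a CNOT correction and the block-diagonal factorization (Fact A.11) to strip one ancilla at a time. Your computation showing $VV^\dagger = (I+\epsilon A\langle 0|B|0\rangle)/2$ and the exclusion of $B\in\{\Xgate,\Ygate\}$ are correct, as is the observation that $V$ becomes a partial isometry whose range and support are stabilizer projectors. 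The construction $\widetilde C = D_2^\dagger(I\otimes U)D_1$ then does agree with $V$ on $\mathrm{range}(\Pi_1)$, hence maps each $\ket{\phi_\lambda}\ket{0^{m-1}}$ to $\ket{\psi_\lambda}\ket{0^{m-1}}$ as needed.

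However, the justification that $U$ is Clifford has a genuine error. You assert that $C'' := (D_2\otimes I)C'(D_1^\dagger\otimes I)$ ``commutes with $\Zgate$ on both the distinguished first qubit and the ancilla qubit'' and is therefore block-diagonal in those two qubits. This is false in general. What the two identities $V'V'^\dagger = V'^\dagger V' = \ketbra{0}{0}\otimes I$ actually give is $C''(I\otimes \Zgate_{\mathrm{anc}})C''^\dagger = \Zgate_1\otimes B$ and $C''^\dagger(I\otimes \Zgate_{\mathrm{anc}})C'' = \Zgate_1\otimes B'$ with $B,B'\in\cbra{I,\Zgate}$; in no subcase does $C''$ commute with $\Zgate_{\mathrm{anc}}$, and it commutes with $\Zgate_1$ only when $B=B'=\Zgate$. (Concrete counterexample: $n=2, m=1$, $C' = \mathrm{SWAP}_{23}$ yields, after your $D_1=D_2=\mathrm{SWAP}_{12}$, $C''=\mathrm{SWAP}_{13}$, which commutes with neither $\Zgate_1$ nor $\Zgate_3$.) So the claimed four-block decomposition does not exist, and the step ``all four middle-register blocks are Cliffords'' has no object to refer to.

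The conclusion that $U$ is Clifford is nevertheless true and can be proved correctly as follows, avoiding the false commutation claim. From $V'V'^\dagger = V'^\dagger V' = \ketbra{0}{0}\otimes I$, the Clifford $C''$ maps the joint $(+1,+1)$ eigenspace of $(\Zgate_1,\Zgate_{\mathrm{anc}})$ isomorphically to itself, acting there as $\ket 0\ket u\ket 0 \mapsto \ket 0 (U\ket u)\ket 0$. For any Hermitian Pauli $P$ on the middle register, $Q := C''(I\otimes P\otimes I)C''^\dagger$ is a Hermitian Pauli that preserves this subspace and acts on it as $UPU^\dagger$. Writing $Q = \pm q_1\otimes R\otimes q_{\mathrm{anc}}$, preservation of the subspace forces $q_1, q_{\mathrm{anc}}\in\cbra{I,\Zgate}$, whence $UPU^\dagger = \pm R$ is a Pauli. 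Hence $U$ is Clifford. You should also address the degenerate case $\Pi_1=\Pi_2=I$ (i.e., $A=I$), which your Clifford-equivalence claim for $D_i$ excludes: there $V$ is unitary, $C'$ commutes with $\Zgate_{\mathrm{anc}}$, and Fact A.11 gives directly that $V$ is Clifford.
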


Given \Cref{lem:generalize_beverlend_lemA.10} and \Cref{lem:organize}, we first complete the proof of \Cref{thm:canonical_form}. The proof of \Cref{lem:generalize_beverlend_lemA.10} is presented in \Cref{sec:decouple_stab}.

\begin{proof}[Proof of \Cref{thm:canonical_form}]
By \Cref{def:circuit_post} and \Cref{eq:thm:canonical_form_1}, $\Ccal$ can be expressed by $(n+t+a)$-qubit Clifford $C_1,\ldots,C_{m+1}$ and $(n+t+a)$-qubit Hermitian Pauli $R_1,\ldots,R_{m}$ such that
\begin{align}
\ket{\psi_\lambda}\ket{0^t}\ket{0^a}
&\propto
C_{m+1}W_{m}C_{m}\cdots W_1C_1\pbra{\ket{\phi_\lambda}\ket{\Tgate}^{\otimes t}\ket{0^a}}
\end{align}
holds for all $\lambda\in\Scal$,
where $W_j=I+R_j$.

By \Cref{lem:organize}, we can further simplify above by an integer $0\le c\le m$, an $(n+t+a)$-qubit Clifford $C'$, and $(n+t+a)$-qubit Hermitian Pauli $Q_1,\ldots,Q_c$ such that
\begin{align}\label{eq:thm:canonical_form_5}
\ket{\psi_\lambda}\ket{0^t}\ket{0^a}
\propto
C'A_c\cdots A_2A_1\pbra{\ket{\phi_\lambda}\ket{\Tgate}^{\otimes t}\ket{0^a}}
\end{align}
holds for all $\lambda\in\Scal$,
where $A_j=I+Q_j$.
In addition,
\begin{align}
2^{t+a-c}\cdot\abs{\Stab\pbra{\cbra{\ket{\psi_\lambda}}_{\lambda\in\Scal}}}
&=2^{-c}\cdot\abs{\Stab\pbra{\cbra{\ket{\psi_\lambda}\ket{0^t}\ket{0^a}}_{\lambda\in\Scal}}}
\tag{since $\Stab(\ket0)=\cbra{I,Z}$}\\
&\ge\abs{\Stab\pbra{\cbra{\ket{\phi_\lambda}\ket{\Tgate}^{\otimes t}\ket{0^a}}_{\lambda\in\Scal}}}
\tag{by \Cref{itm:lem:organize_3} of \Cref{lem:organize}}\\
&\ge\abs{\Stab\pbra{\cbra{\ket{\phi_\lambda}}_{\lambda\in\Scal}}}\cdot\abs{\Stab\pbra{\ket{0^a}}}\\
&=2^a\cdot\abs{\Stab\pbra{\cbra{\ket{\phi_\lambda}}_{\lambda\in\Scal}}},
\end{align}
which implies
\begin{align}\label{eq:thm:canonical_form_3}
c&\le t-\log\pbra{\abs{\Stab\pbra{\cbra{\ket{\phi_\lambda}}_{\lambda\in\Scal}}}}
+\log\pbra{\abs{\Stab\pbra{\cbra{\ket{\psi_\lambda}}_{\lambda\in\Scal}}}}.
\end{align}
By \Cref{itm:lem:organize_2} of \Cref{lem:organize}, we know that each $Q_j$ commutes with $\Stab\pbra{\cbra{\ket{\phi_\lambda}\ket{\Tgate}^{\otimes t}\ket{0^a}}_{\lambda\in\Scal}}$, which contains all $\Zgate$ rotations on the last $a$-qubits.
Since $Q_j$ is a Hermitian Pauli, we have
\begin{equation}
Q_j=P_j\otimes\Zgate(v_j)
\end{equation}
for some $(n+t)$-qubit Hermitian Pauli $P_j$ and some $v_j\in\bin^a$.
Here, $\Zgate(v_j)$ applies the single-qubit $\Zgate$ gate on the coordinates selected by $v_j$.
Define $M_j=I+P_j$.
Then for any $(n+t)$-qubit state $\ket\rho$, we have
\begin{align}\label{eq:thm:canonical_form_4}
A_j\pbra{\ket{\rho}\ket{0^a}}=\pbra{M_j\ket{\rho}}\otimes\ket{0^a}.
\end{align}
Therefore,
\begin{align}
\ket{\psi_\lambda}\ket{0^t}\ket{0^a}
&\propto
C'A_c\cdots A_2A_1\pbra{\ket{\phi_\lambda}\ket{\Tgate}^{\otimes t}\ket{0^a}}
\tag{by \Cref{eq:thm:canonical_form_5}}\\
&=C'\pbra{\pbra{M_c\cdots M_2M_1\ket{\phi_\lambda}\ket{\Tgate}^{\otimes t}}\otimes\ket{0^a}}.
\tag{by \Cref{eq:thm:canonical_form_4}}
\end{align}
Now define $\ket{\rho_\lambda}=M_c\cdots M_2M_1\ket{\phi_\lambda}\ket{\Tgate}^{\otimes t}$.
Then for all $\lambda\in\Scal$, 
$$
\ket{\psi_\lambda}\ket{0^t}\otimes\ket{0^a}\propto C'\pbra{\ket{\rho_\lambda}\otimes\ket{0^{a}}}.
$$
where $C'$ is Clifford.
Hence by \Cref{lem:generalize_beverlend_lemA.10}, there exists an $(n+t)$-qubit Clifford $C$ such that 
\begin{equation}
\ket{\psi_\lambda}\ket{0^t}\propto C\ket{\rho_\lambda}=CM_c\cdots M_2M_1\pbra{\ket{\phi_\lambda}\ket{\Tgate}^{\otimes t}}.
\end{equation}
To complete the proof of \Cref{thm:canonical_form}, we finally remark that it is always possible to pad dummy $P=I$ to ensure the equality in \Cref{eq:thm:canonical_form_3} holds.
\end{proof}

\subsection{Eliminating ancillas in Clifford circuits}\label{sec:decouple_stab}

Finally we prove \Cref{lem:generalize_beverlend_lemA.10}, which generalizes \cite[Lemma A.10]{beverland2020lower}.

\decouplestab*

\begin{proof}
Let $r=\log\abs{\Stab(\cbra{\ket{\phi_\lambda}})}$.
Since $\Stab(\cbra{\ket{\phi_\lambda}})$ is a commutative subgroup of the Pauli group and does not contain $-I$, there exists a Clifford $C_\phi$ conjugating $\Stab(\cbra{\ket{\phi_\lambda}})$ to the group of $\Zgate$ operators on the last $r$ qubits \cite{cleve1997efficient}, which means
\begin{align}\label{eq:lem:generalize_beverlend_lemA.10_1.1}
\ket{\phi_\lambda}=C_\phi\pbra{\ket{\phi_\lambda'}\ket{0^r}}
\end{align}
holds for every $\lambda\in\Scal$ and some $(n-r)$-qubit state $\ket{\phi_\lambda'}$.
By \Cref{eq:lem:generalize_beverlend_lemA.10_1}, we also have 
$$
\log\abs{\Stab(\cbra{\ket{\psi_\lambda}})}=r
$$
and analogously, there exists a Clifford $C_\psi$ such that
\begin{align}\label{eq:lem:generalize_beverlend_lemA.10_1.2}
\ket{\psi_\lambda}=C_\psi\pbra{\ket{\psi_\lambda'}\ket{0^r}}
\end{align}
holds for every $\lambda\in\Scal$ and some $(n-r)$-qubit state $\ket{\psi_\lambda'}$.
Now define Clifford $C''=(C_\psi\otimes I_m)^\dag C'(C_\phi\otimes I_m)$.
Then \Cref{eq:lem:generalize_beverlend_lemA.10_1} becomes
\begin{align}\label{eq:lem:generalize_beverlend_lemA.10_2}
\ket{\psi_\lambda'}\ket{0^{r+m}}=C''\pbra{\ket{\phi_\lambda'}\ket{0^{r+m}}}
\end{align}
holds for all $\lambda\in\Scal$.
In addition, $\Stab(\cbra{\ket{\phi_\lambda'}})$ and $\Stab(\cbra{\ket{\psi_\lambda'}})$ contain only the trivial stabilizer $I$.

We will make another simplification to make sure that $C''$ commutes with the single-qubit $\Zgate$ gate on the last qubit, which will help factorize $C''$ and apply induction.
To this end, we observe that $\Stab(\cbra{\ket{\phi_\lambda'}\ket{0^{r+m}}})=\Stab(\cbra{\ket{\psi_\lambda'}\ket{0^{r+m}}})$ equals the Pauli $\Zgate$ group on the last $r+m$ qubits.
Hence by \Cref{eq:lem:generalize_beverlend_lemA.10_2}, for the single-qubit $\Zgate$ gate on the last qubit we know that
\begin{equation}
C''(I_{n+m-1}\otimes \Zgate)C''^\dag=I_{n-r}\otimes \Zgate(s)
\end{equation}
for some $s\in\bin^{r+m}$.
Let $D$ be an $(r+m)$-qubit CNOT circuit reversing this mapping, i.e., $D\Zgate(s)D^\dag$ equals the single-qubit $\Zgate$ gate on the last qubit.
Define Clifford $C'''=(I_{n-r}\otimes D)C''$. 
Then we have the commutativity condition:
\begin{align}\label{eq:lem:generalize_beverlend_lemA.10_3}
C'''(I_{n+m-1}\otimes \Zgate)C'''^\dag
&=(I_{n-r}\otimes D)(I_{n-r}\otimes \Zgate(s))(I_{n-r}\otimes D)^\dag\\
&=I_{n+m-1}\otimes \Zgate.
\end{align}
Moreover, \Cref{eq:lem:generalize_beverlend_lemA.10_2} remains unchanged:
\begin{align}\label{eq:lem:generalize_beverlend_lemA.10_4}
C'''\pbra{\ket{\phi_\lambda'}\ket{0^{r+m}}}
&=(I_{n-r}\otimes D)\ket{\psi_\lambda'}\ket{0^{r+m}}
=\ket{\psi_\lambda'}\ket{0^{r+m}}
\end{align}
holds for all $\lambda\in\Scal$,
where the second equality uses the fact that $D$ is a CNOT circuit.

At this point, we can factorize $C'''$ to reduce the number of ancillary qubits.
This is achieved by the following \Cref{fct:beverland_propA.11} in \cite{beverland2020lower}.

\begin{fact}[{\cite[Proposition A.11 and Lemma A.13]{beverland2020lower}}]\label{fct:beverland_propA.11}
Assume $C$ is a $K$-qubit Clifford unitary commuting with the single-qubit $\Zgate$ gate on the last qubit.
Then $C=C_0\otimes\ketbra00+C_1\otimes\ketbra11$, where $C_0,C_1$ are $(K-1)$-qubit Clifford unitaries.
\end{fact}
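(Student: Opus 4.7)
The plan is to first exploit the commutation relation to obtain the block-diagonal structure, and then separately verify that each block is unitary and that each block is Clifford.

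First I would note that $I_{K-1}\otimes \Zgate$ has eigenspaces $\Hcal_0=\Cbb^{2^{K-1}}\otimes\ket0$ (eigenvalue $+1$) and $\Hcal_1=\Cbb^{2^{K-1}}\otimes\ket1$ (eigenvalue $-1$). Any operator commuting with $I_{K-1}\otimes \Zgate$ preserves each eigenspace, so $C$ must act block-diagonally with respect to $\Hcal_0\oplus\Hcal_1$. This gives $(K-1)$-qubit operators $C_0,C_1$ such that
\begin{equation}
C = C_0\otimes\ketbra00+C_1\otimes\ketbra11.
\end{equation}
Applying $CC^\dag = I$ to this decomposition immediately yields $C_jC_j^\dag = I_{K-1}$ for $j\in\{0,1\}$, so both blocks are unitary.

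Next I would verify the Clifford property. Fix any $(K-1)$-qubit Pauli $P$ and set $R=C(P\otimes I)C^\dag$. Since $C$ is Clifford, $R$ is a $K$-qubit Pauli; since $P\otimes I$ commutes with $I_{K-1}\otimes \Zgate$ and $C$ does too, $R$ also commutes with $I_{K-1}\otimes \Zgate$. A $K$-qubit Pauli commutes with $I_{K-1}\otimes \Zgate$ iff its last tensor factor is $I$ or $\Zgate$; thus $R = Q\otimes I$ or $R = Q\otimes \Zgate$ for some $(K-1)$-qubit Pauli $Q$ (with an overall phase). Writing both sides block-diagonally with respect to $\Hcal_0\oplus\Hcal_1$ gives $C_0PC_0^\dag = Q$ and $C_1PC_1^\dag = \pm Q$, which are Paulis. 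Since this holds for every $(K-1)$-qubit Pauli $P$, both $C_0$ and $C_1$ normalize the $(K-1)$-qubit Pauli group, i.e., they are Clifford.

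The only subtle point is the characterization of $K$-qubit Paulis commuting with $I_{K-1}\otimes \Zgate$, which is straightforward once one observes that among single-qubit Paulis $\{I,\Xgate,\Ygate,\Zgate\}$ only $I$ and $\Zgate$ commute with $\Zgate$. I do not expect any serious obstacle here: everything reduces to decomposing in the $\Zgate$-eigenbasis of the last qubit and tracking how the Clifford conjugation action restricts to each block.
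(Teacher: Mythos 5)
Your proof is correct and self-contained. The paper itself cites this fact from \cite{beverland2020lower} (Proposition~A.11 and Lemma~A.13) without reproducing a proof, so there is no in-paper argument to compare against; evaluated on its own, your argument is sound. The eigenspace decomposition of $I_{K-1}\otimes \Zgate$ gives the block structure, $CC^\dag=I$ gives unitarity of the blocks, and the observation that $C(P\otimes I)C^\dag$ is a Hermitian Pauli commuting with $I_{K-1}\otimes \Zgate$ (hence of the form $Q\otimes I$ or $Q\otimes \Zgate$) correctly yields $C_j P C_j^\dag = \pm Q$ for $j\in\{0,1\}$, establishing that each block normalizes the $(K-1)$-qubit Pauli group. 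One minor cosmetic remark: since $P\otimes I$ is Hermitian, its conjugate $R$ is automatically a Hermitian Pauli, so the parenthetical ``(with an overall phase)'' can be sharpened to ``where $Q$ is a Hermitian $(K-1)$-qubit Pauli,'' avoiding any ambiguity about factors of $i$. This is the natural elementary route and very likely mirrors the structure of the cited argument in \cite{beverland2020lower}.
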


Combining \Cref{fct:beverland_propA.11} and \Cref{eq:lem:generalize_beverlend_lemA.10_3}, we know that
\begin{equation}
C'''=C_0\otimes\ketbra00+C_1\otimes\ketbra11,
\end{equation}
where $C_0$ is an $(n+m-1)$-qubit Clifford.
Furthermore, by \Cref{eq:lem:generalize_beverlend_lemA.10_4}, we have
\begin{equation}
C_0\pbra{\ket{\phi_\lambda'}\ket{0^{r+m-1}}}=\ket{\psi_\lambda'}\ket{0^{r+m-1}}
\end{equation}
holds for all $\lambda\in\Scal$.

Iterating the above process to ensure $C_0$ commutes with the single-qubit $\Zgate$ gate on the last qubit, we can repeatedly trim off the trailing ancillas until we obtain
\begin{equation}
\tilde C\pbra{\ket{\phi_\lambda'}\ket{0^r}}=\ket{\psi_\lambda'}\ket{0^r}
\quad\text{holds for all $\lambda\in\Scal$,}
\end{equation}
where $\tilde C$ is some $n$-qubit Clifford.
Recall from \Cref{eq:lem:generalize_beverlend_lemA.10_1.1} and \Cref{eq:lem:generalize_beverlend_lemA.10_1.2} that $\ket{\phi_\lambda'}\ket{0^r}=C_\phi^\dag\ket{\phi_\lambda}$ and $\ket{\psi_\lambda'}\ket{0^r}=C_\psi^\dag\ket{\psi_\lambda}$.
Hence setting $C=C_\psi\tilde CC_\phi^\dag$ guarantees $C\ket{\phi_\lambda}=\ket{\psi_\lambda}$.
This completes the proof of \Cref{lem:generalize_beverlend_lemA.10}.
\end{proof}

\section{Improved analysis of Low-Kliuchnikov-Schaeffer state preparation}\label{sec:improve_LKS}

We observe that the proof of \Cref{thm:diagonal_unitary_T-count} works equally well for diagonals of single-qubit unitaries.
With \Cref{fct:unitary_syn_n=1}, we can also decompose arbitrary single-qubit unitary into diagonal unitaries for which \Cref{thm:diagonal_unitary_T-count} directly applies.
These are presented as the following \Cref{lem:block_diag_single_T-count}.

\begin{lemma}\label{lem:block_diag_single_T-count}
Let $n\ge0$ be an integer and $U_1,U_2,\ldots,U_{2^n}$ be arbitrary single-qubit unitaries.
Define $(n+1)$-qubit unitary $U=\diag(U_1,\ldots,U_{2^n})$.
Then $U$ can be implemented up to error $\eps$ by a Clifford+$\Tgate$ circuit using
$O\pbra{\sqrt{2^n\log(1/\eps)}+\log(1/\eps)}$ $\Tgate$ gates and ancillary qubits.
\end{lemma}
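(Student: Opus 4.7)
The plan is to reduce Lemma~\ref{lem:block_diag_single_T-count} directly to \Cref{thm:diagonal_unitary_T-count} via the Euler angle decomposition \Cref{fct:unitary_syn_n=1}, as hinted in the paragraph preceding the statement. The key observation is that a block-diagonal unitary whose blocks are single-qubit \emph{diagonal} unitaries is itself an $(n+1)$-qubit diagonal unitary, so the synthesis result for diagonal unitaries applies with no loss.

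Concretely, first I would invoke \Cref{fct:unitary_syn_n=1} to write each $U_j=A_j\Hgate B_j\Hgate C_j$, where $A_j,B_j,C_j$ are single-qubit diagonal unitaries. Because $\diag(\cdot)$ is multiplicative on block-diagonal matrices that share the same block structure, and because $\diag(\Hgate,\ldots,\Hgate)=I_n\otimes\Hgate$, this yields the factorization
\begin{equation}
U=\diag(U_1,\ldots,U_{2^n})=D_A\cdot(I_n\otimes\Hgate)\cdot D_B\cdot(I_n\otimes\Hgate)\cdot D_C,
\end{equation}
where $D_A:=\diag(A_1,\ldots,A_{2^n})$, and similarly for $D_B$ and $D_C$. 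Since each $A_j,B_j,C_j$ is a $2\times2$ diagonal matrix, each of $D_A,D_B,D_C$ is an $(n+1)$-qubit diagonal unitary.

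Next I would apply \Cref{thm:diagonal_unitary_T-count} to each of $D_A,D_B,D_C$ with approximation error $\eps/3$. Each application costs
\begin{equation}
O\pbra{\sqrt{2^{n+1}\log(3/\eps)}+\log(3/\eps)}=O\pbra{\sqrt{2^n\log(1/\eps)}+\log(1/\eps)}
\end{equation}
$\Tgate$ gates and ancillas. The interleaving Hadamard layers $I_n\otimes\Hgate$ are Clifford and contribute nothing to the $\Tgate$-count. Summing the three costs, together with the triangle inequality for operator-norm error (which gives total error $\le\eps$ since all factors are unitaries), yields the claimed bound.

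I do not anticipate any real obstacle: the proof is essentially a one-line corollary once the Euler angle decomposition is applied blockwise and the key identity $\diag(A_1,\ldots,A_{2^n})$ being diagonal is noted. The only minor subtlety is to ensure that the global phases absorbed into the $A_j,B_j,C_j$ in \Cref{fct:unitary_syn_n=1} do not interfere with the diagonal structure, but since arbitrary phases applied entrywise to a diagonal matrix remain diagonal, this causes no issue.
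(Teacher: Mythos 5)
Your proof is correct and matches, essentially verbatim, the ``alternative proof'' the paper gives for this lemma: applying the Euler-angle decomposition \Cref{fct:unitary_syn_n=1} blockwise to express $U$ as a product of three $(n+1)$-qubit diagonal unitaries interleaved with Hadamard layers on the last qubit, then invoking \Cref{thm:diagonal_unitary_T-count} three times with error $\eps/3$. (The paper also records a second proof that instead fixes determinants so the argument of \Cref{thm:diagonal_unitary_T-count} applies directly, but your route is one of the two proofs the authors present.)
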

\begin{proof}
The proof is identical to the proof \Cref{thm:diagonal_unitary_T-count} once we make sure that the determinant of each $U_j$ is $1$.
To achieve this, let $D=\diag(\alpha_1,\beta_1,\ldots,\alpha_{2^n},\beta_{2^n})$ be an $(n+1)$-qubit diagonal unitary such that the determinant of $U_j'$ is $1$ for all $j$, where $U_j'=\diag(\alpha_j,\beta_j)U_j$.
Note that $D^\dag$ can be constructed using \Cref{thm:diagonal_unitary_T-count}, and then $DU=\diag(U_1',\ldots,U_{2^n}')$ can be handled in the same way as in \Cref{thm:diagonal_unitary_T-count}.
Combining both gives $D^\dag(DU)=U$.

An alternative proof is to use \Cref{fct:unitary_syn_n=1}, which represents each $U_j$ as $A_j\Hgate B_j\Hgate C_j$ for single-qubit diagonal unitaries $A_j,B_j,C_j$.
Then we have
\begin{align}
U
&=\diag(A_1,\ldots,A_{2^n})\diag(\Hgate,\ldots,\Hgate)\\
&\cdot\diag(B_1,\ldots,B_{2^n})\\
&\cdot\diag(\Hgate,\ldots,\Hgate)\diag(C_1,\ldots,C_{2^n}),
\end{align}
where $\diag(\Hgate,\ldots,\Hgate)$ is simply a Hadamard gate on the $(n+1)$-th qubit and, moreover, $\diag(A_1,\ldots,A_{2^n})$ (resp., $\diag(B_1,\ldots,B_{2^n}),\diag(C_1,\ldots,C_{2^n})$) can be implemented by \Cref{thm:diagonal_unitary_T-count}.
\end{proof}

As a direct application of \Cref{lem:block_diag_single_T-count}, we improve the state synthesis bound in \cite{low2018trading}.
Note that \Cref{cor:improve_lks} is only off by the extra $n$ in front of $\log(1/\eps)$, which does not dominate the overall complexity unless $\eps$ is extremely small.
Given this, this construction may be of practical interest.

\begin{corollary}\label{cor:improve_lks}
Any $n$-qubit state can be prepared up to error $\eps$ by a Clifford+$\Tgate$ circuit starting with the all-zeros state using
\begin{equation}
O\pbra{\sqrt{2^n\log(1/\eps)}+n\log(1/\eps)}
\text{ $\Tgate$ gates}
\end{equation}
and
\begin{equation}
O\pbra{\sqrt{2^n\log(1/\eps)}+\log(1/\eps)}
\text{ ancillary qubits.}
\end{equation}
\end{corollary}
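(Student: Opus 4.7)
The plan is to follow the Low-Kliuchnikov-Schaeffer state-preparation template but replace their bit-by-bit implementation of multiply-controlled single-qubit rotations by a single application of \Cref{lem:block_diag_single_T-count}, which handles an arbitrary block-diagonal arrangement of single-qubit unitaries in one shot. This is precisely where the $\log^2(n/\eps)$ overhead in \cite{low2018trading} arose.

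Concretely, I would first factor the target as $\ket\psi=D\ket{\psi'}$, where $\ket{\psi'}$ has non-negative real amplitudes and $D$ is an $n$-qubit diagonal unitary carrying the phases. The diagonal piece $D$ is approximated to error $\eps/2$ via \Cref{thm:diagonal_unitary_T-count}, at T-count and ancilla cost $O\bigl(\sqrt{2^n\log(1/\eps)}+\log(1/\eps)\bigr)$. For $\ket{\psi'}$ I would use the Grover-Rudolph decomposition to write the preparation circuit as $V_n V_{n-1}\cdots V_1$, where $V_k$ acts trivially on qubits $k+1,\dots,n$ and, viewed on qubits $1,\dots,k$, is a block-diagonal unitary $\diag(U^{(k)}_1,\dots,U^{(k)}_{2^{k-1}})$ whose blocks are single-qubit real rotations determined by conditional marginals of the target distribution. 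Each such $V_k$ matches the hypothesis of \Cref{lem:block_diag_single_T-count}, so it can be approximated to operator-norm error $\eps/(2n)$ using $O\bigl(\sqrt{2^{k-1}\log(n/\eps)}+\log(n/\eps)\bigr)$ T gates and ancillas.

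Summing via the geometric sum $\sum_{k=1}^n\sqrt{2^{k-1}}=O(\sqrt{2^n})$, the total T-count is
\begin{equation}
O\bigl(\sqrt{2^n\log(n/\eps)}+n\log(n/\eps)\bigr),
\end{equation}
which collapses to the claimed $O\bigl(\sqrt{2^n\log(1/\eps)}+n\log(1/\eps)\bigr)$ in the interesting regime $\log(1/\eps)=\Omega(\log n)$; in the complementary regime, the $n\log n$ contribution is absorbed into the stated second term up to constants once the standard assumption $\eps\le 1/2$ is used. Ancillas can be reused across the $n$ Grover-Rudolph layers and the final diagonal correction because each block returns its workspace to the all-zeros state, so the total ancilla count equals the maximum per-layer count, namely $O\bigl(\sqrt{2^n\log(1/\eps)}+\log(1/\eps)\bigr)$.

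The only nontrivial point is the error accounting: since \Cref{lem:block_diag_single_T-count} and \Cref{thm:diagonal_unitary_T-count} are phrased in operator norm, the triangle inequality together with unitary invariance yields total operator-norm error at most $\eps/2+n\cdot\eps/(2n)=\eps$ for the overall state-preparation unitary, which upper-bounds the $\ell_2$ error on $\ket\psi$. No genuine obstacle arises; the main insight is that \Cref{lem:block_diag_single_T-count} turns each Grover-Rudolph layer into a single near-optimal invocation, removing the extra $\log(n/\eps)$ factor introduced in \cite{low2018trading} by synthesizing each rotation bit-by-bit.
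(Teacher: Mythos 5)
Your overall strategy is the same as the paper's: apply the Grover--Rudolph decomposition and then invoke \Cref{lem:block_diag_single_T-count} once per layer rather than bit-by-bit. However, there is a genuine gap in the error allocation. You assign a \emph{uniform} budget $\eps/(2n)$ to each of the $n$ layers. The layer acting on $k$ qubits then costs $O\bigl(\sqrt{2^{k-1}\log(n/\eps)}+\log(n/\eps)\bigr)$ T gates, and summing over $k$ yields $O\bigl(\sqrt{2^n\log(n/\eps)}+n\log(n/\eps)\bigr)$, exactly as you write. But this does \emph{not} collapse to the claimed $O\bigl(\sqrt{2^n\log(1/\eps)}+n\log(1/\eps)\bigr)$ in general. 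Expanding $\log(n/\eps)=\log n+\log(1/\eps)$, the problematic piece is $\sqrt{2^n\log n}$: when $\log(1/\eps)=o(\log n)$ (e.g.\ $\eps$ a constant), both claimed terms are $O(\sqrt{2^n})$ and $O(n)$, and neither dominates $\sqrt{2^n\log n}$. Your remark that the extra contribution is ``absorbed into the stated second term'' only addresses the much smaller $n\log n$ piece (which is indeed $O(\sqrt{2^n})$ and hence absorbable into the first term), but you never account for the $\sqrt{2^n\log n}$ overhead in the first term.

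The fix, and the route the paper actually takes, is to use a \emph{geometric} error budget $\eps_k=\eps/2^{n-k}$ for the layer controlled on the first $k$ qubits, $k=0,\dots,n-1$. Layers with few controls are cheap (T-count scales like $\sqrt{2^k}$), so they can afford a much tighter approximation; the total error still telescopes to $\sum_k \eps_k \le \eps$. With this allocation, the $k$-th layer has $\log(1/\eps_k)=n-k+\log(1/\eps)$, and the sum $\sum_{k=0}^{n-1}\sqrt{2^k(n-k)}=O(\sqrt{2^n})$ converges without any extraneous $\log n$, while $\sum_k\sqrt{2^k\log(1/\eps)}=O(\sqrt{2^n\log(1/\eps)})$ and $\sum_k\log(1/\eps)=n\log(1/\eps)$. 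The ancilla maximum is also dominated by the $k=n-1$ layer, giving $O\bigl(\sqrt{2^n\log(1/\eps)}+\log(1/\eps)\bigr)$ as claimed. Your preliminary step of splitting off the diagonal phase $D$ is fine but unnecessary: \Cref{lem:block_diag_single_T-count} already handles arbitrary (complex) single-qubit blocks, so the phases can be folded directly into the Grover--Rudolph rotations.
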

\begin{proof}
The state synthesis algorithm in \cite{low2018trading} uses Grover-Rudolph algorithm \cite{grover2002creating}:
The first qubit is rotated to have the correct marginal. Then conditioned on the first qubit, the second qubit is rotated to have the correct marginal. This process is iterated  until the $n$th qubit is rotated conditioned on the first $n-1$ qubits.

Let $k=0,1,\ldots,n-1$.
The $k$-th operation above is a rotation on the $(k+1)$-th qubit, controlled by the first $k$ qubits.
Therefore, it is a diagonal of single-qubit unitaries, for which we apply \Cref{lem:block_diag_single_T-count}.
Let $\eps_k=\eps/2^{n-k}$.
Then by \Cref{lem:block_diag_single_T-count}, the $k$-th operation above can be $\eps_k$-approximately implemented with 
\begin{align}
O\pbra{\sqrt{2^k\log(1/\eps_k)}+\log(1/\eps_k)}
&=O\biggl(\sqrt{2^k\pbra{n-k+\log(1/\eps)}}+n-k+\log(1/\eps)\biggr)
\end{align}
T gates and ancillas.

Through all $k=0,1,\ldots,n-1$, the maximal number of ancillas is
\begin{align}
O\biggl(\max_k\sqrt{2^k\pbra{n-k+\log(1/\eps)}}
+n-k+\log(1/\eps)\biggr)
&=O\pbra{\sqrt{2^k\log(1/\eps)}+\log(1/\eps)}.
\end{align}
The total error is
\begin{equation}
\sum_{k=0}^{n-1}\eps_k=\eps\cdot\sum_{k=0}^{n-1}2^{k-n}\le\eps,
\end{equation}
and the total number of $\Tgate$ gates used is
\begin{align}
&\phantom{\le}\sum_{k=0}^{n-1}O\biggl(\sqrt{2^k\pbra{n-k+\log(1/\eps)}}
+n-k+\log(1/\eps)\biggr)\\
&\le\sum_{k=0}^{n-1}O\biggl(\sqrt{2^k(n-k)}+\sqrt{2^k\log(1/\eps)}+\log(1/\eps)\biggr)
\tag{since $\sqrt{a+b}\le\sqrt a+\sqrt b$}\\
&=O\pbra{\sqrt{2^n\log(1/\eps)}+n\log(1/\eps)}.
\tag*{\qedhere}
\end{align}
\end{proof}

\end{document}